\renewcommand\thetable{\Roman{table}}
\newcolumntype {d}[1]{D{.}{.}{#1}} \newcolumntype {.}{D{.}{.}{-1}}
\definecolor {orange}{rgb}{1,0.5,0}
\newcommand*{\addFileDependency}[1]{
  \typeout{(#1)}
  \@addtofilelist{#1}
  \IfFileExists{#1}{}{\typeout{No file #1.}}
}
\newcommand*{\myexternaldocument}[1]{%
    \externaldocument{#1}%
    \addFileDependency{#1.tex}%
    \addFileDependency{#1.aux}%
}
\@citea\NAT@hyper@{%
     \NAT@nmfmt{\NAT@nm}%
     \hyper@natlinkbreak{\NAT@aysep\NAT@spacechar}{\@citeb\@extra@b@citeb}%
     \NAT@date}}
\@citea\NAT@nmfmt{\NAT@nm}%
\NAT@spacechar\NAT@hyper@{\NAT@date}}{}{}
\@citea\NAT@hyper@{%
     \NAT@nmfmt{\NAT@nm}%
     \hyper@natlinkbreak{\NAT@spacechar\NAT@@open\if*#1*\else#1\NAT@spacechar\fi}%
       {\@citeb\@extra@b@citeb}%
     \NAT@date}}
\@citea\NAT@nmfmt{\NAT@nm}%
\fi\NAT@hyper@{\NAT@date}}
\newtheorem{lemma}{LEMMA}
\newtheorem{theorem}{THEOREM}
\newtheorem{corollary}{COROLLARY}
\newtheorem{proposition}{PROPOSITION}
\date{}
\titleformat{\section}{\singlespace\centering\normalfont\Large\bfseries}{\centering \thesection .}{1em}{}
\titleformat{\subsection}{\singlespace\normalfont\large\itshape}{\thesubsection .}{1em}{}
\titleformat{\subsubsection}{\singlespace\centering\normalfont\itshape}{\thesubsubsection .}{0.5em}{}
\renewcommand{\thesubsection}{\Alph{subsection}}
\renewcommand{\thetable }{\Roman{table}}
\definecolor{shadecolor}{rgb}{1,0.945098,0}
\Crefname{equation}{}{}
\DeclareMathOperator{\E}{\mathbb{E}}
\newcommand{\q}{{\mathbb{Q}}}
\newcommand{\QT}{{\mathbb{Q}^T}}
\newcommand{\QS}{{\mathbb{Q}^S}}
\newcommand{\X}{F}
\newcommand{\Y}{G}
\newcommand{\Lc}{{\mathcal{L}}}
\newcommand{\LT}{{L_T^*}}
\newcommand{\Ls}{{L_s^*}}
\newcommand{\LH}{{L_H^*}}
\newcommand{\LsB}{{L_s}}
\newcommand{\LHB}{{L_H}}
\newcommand{\LsP}{{L_s^P}}
\newcommand{\LHT}{{\mathcal{L}_T^*(H)}}
\newcommand{\LHs}{{\mathcal{L}_s^*(H)}}
\newcommand{\LLHs}{{\mathcal{L}_s^*(H)}}
\newcommand{\LHt}{{\mathcal{L}_t^*(H)}}
\newcommand{\LHH}{{\mathcal{L}_H^*(H)}}
\newcommand{\LHTB}{{\mathcal{L}_T(H)}}
\newcommand{\LHsB}{{\mathcal{L}_s(H)}}
\newcommand{\LHTo}{{\mathcal{L}_{1T}^*(H)}}
\newcommand{\LHto}{{\mathcal{L}_{1t}^*(H)}}
\newcommand{\LHHo}{{\mathcal{L}_{1H}^*(H)}}
\newcommand{\LHso}{{\mathcal{L}_{1s}^*(H)}}
\newcommand{\LHsoP}{{\mathcal{L}_{1s}^P(H)}}
\newcommand{\LHTt}{{\mathcal{L}_{2T}^*(H)}}
\newcommand{\LHtt}{{\mathcal{L}_{2t}^*(H)}}
\newcommand{\LHst}{{\mathcal{L}_{2s}^*(H)}}
\newcommand{\LHstB}{{\mathcal{L}_{2s}(H)}}
\newcommand{\LHstP}{{\mathcal{L}_{2s}^P(H)}}
\newcommand{\GuH}{{\gamma}_u(H)}
\newcommand{\GsH}{{\gamma}_s(H)}
\newcommand{\p}{{\mathbb{P}}}
\newcommand{\pp}{{\mathbb{P}}}
\newcommand{\Qem}{{\mathbb{Q}^*}}
\newcommand{\Ree}{{\mathbb{R}^*}}
\newcommand{\Reeo}{{{\mathbb{R}}^{*}_1}}
\newcommand{\Reet}{{{\mathbb{R}}^{*}_2}}
\newcommand{\RTo}{{{\mathbb{R}}^{T}_1 }}
\newcommand{\RTohat}{{{\hat{\mathbb{R}}}^{T}_1 }}
\newcommand{\RTt}{{{\mathbb{R}}^{T}_2 }}
\newcommand{\RSo}{{{\mathbb{R}}^{S}_1 }}
\newcommand{\R}{{\mathbb{R}}}
\newcommand{\Rtt}{{\mathbb{R}_2}}
\newcommand{\RT}{{\mathbb{R}^{T} }}
\newcommand{\RS}{{\mathbb{R}^{S} }}
\newcommand{\RR}{{\mathbb{R}}}
\newcommand{\Ro}{{\mathbb{R}}}
\newcommand{\h}{{\mathbb{R}}}
\newcommand{\A}{{\mathbb{A}}}
\newcommand{\D}{{\mathrm{d}}}
\newcommand{\ue}{\mathrm{e}}
\newcommand{\F}{{\mathcal{F}}}
\newcommand{\I}{{\mathds{1}}}
\newcommand{\ZTone}{{W_{1s}^\RTo}}
\newcommand{\ZTtwo}{{W_{2s}^\RTo}}
\newcommand{\Wp}{{W_s^\p}}
\newcommand{\Wpo}{{W^\p}}
\newcommand{\WT}{{W_s^\RTo}}
\newcommand{\Wq}{{W_s^\q}}
\newcommand{\Wqo}{{W^\q}}
\newcommand{\Wr}{{W_s^\R}}
\newcommand{\Wro}{{W^\R}}
\newcommand{\Wpu}{{W_u^\p}}
\newcommand{\Wqu}{{W_u^\q}}
\newcommand\inv[1]{#1\raisebox{1.15ex}{$\scriptscriptstyle-\!1$}}
\newcommand{\mymat}[1]{{#1}}
\newcommand{\mybmm}[1]{\bm{#1}}
\newcommand\mat[1]{\mathcal{#1}}
\title{\textbf{A Theory of Equivalent Expectation Measures for Contingent Claim Returns}}
\author{SANJAY K. NAWALKHA and XIAOYANG ZHUO\thanks{
Sanjay K. Nawalkha (\href{mailto:nawalkha@isenberg.umass.edu}{nawalkha@isenberg.umass.edu}) is with the Isenberg School of Management, University of Massachusetts, MA, USA. Xiaoyang Zhuo (\href{mailto:zhuoxy@bit.edu.cn}{zhuoxy@bit.edu.cn}) is with the School of Management and Economics, Beijing Institute of Technology, Beijing, China. We are grateful for the helpful comments of two anonymous referees, the Editor (Stefan Nagel), and especially the Associate Editor. We are also very grateful to Peter Carr and Darrell Duffie for their discussant comments at the BQE Lecture series seminar (NYU Tandon) and the CISDM Annual Research conference 2020 (UMass, Amherst), respectively. We gratefully acknowledge the comments of Fousseni Chabi-Yo, Jingzhi Huang, Jens Jackwerth, Nikunj Kapadia, Hossein Kazemi, Zhan Shi, Longmin Wang, Yongjin Wang, Hao Zhou, and the seminar participants at UMass (Amherst), BIT, Tsinghua PBCSF, FERM, and Asian Quantitative Finance Seminar. Xiaoyang Zhuo is grateful for support by the National Natural Science Foundation of China (No. 72001024), and Beijing Institute of Technology Research Fund Program for Young Scholars. We would like to dedicate this paper to the memory of Peter Carr, a giant in the field of financial engineering and derivatives. We have read \emph{The Journal of Finance}'s disclosure policy and have no conflicts of interest to disclose.} }
\begin{document}
\begin{bibunit}

\date{May 20, 2022}
\maketitle

\vspace{-1.5em}
{\hspace{7em}\large{\centering (\emph{Journal of Finance}, forthcoming)}}
\setcounter{page}{0}
\thispagestyle{empty}

\begin{abstract}
\noindent This paper introduces a dynamic change of measure approach for computing the analytical solutions of expected future prices (and therefore, expected returns) of contingent claims over a finite horizon. The new approach constructs hybrid probability measures called the ``equivalent expectation measures" (EEMs), which provide the physical expectation of the claim's future price until before the horizon date,  and serve as pricing measures on or after the horizon date. The EEM theory can be used for empirical investigations of both the cross-section and the term structure of returns of contingent claims, such as Treasury bonds, corporate bonds, and financial derivatives.
\end{abstract}


\newpage

\noindent The change of measure concept lies at the heart of asset pricing. Almost half-a-century ago, the no-arbitrage derivations of the call option price by \cite{black1973pricing} and \cite{Merton1973Theory} led to the discovery of the standard change of measure approach, which assigns risk-neutral or equivalent martingale probabilities to all future events.\footnote{See \cite{cox1976valuation} and \cite{harrison1979martingales} for the development of the change of measure approach.} Extending this static approach which changes the pricing measure only at the current time, we develop a \emph{dynamic} change of measure approach which changes the pricing measure at any given future horizon date $H$ between the current time $t$ and the claim's expiration date $T$. Using the new dynamic change of measure approach, this paper develops a theory for computing expected future prices, and therefore, expected returns of contingent claims over any finite horizon $H$, significantly broadening the range of applications of the change of measure approach in finance.

The theory developed in this paper can be used to answer the following kinds of questions. What is the expected return of a 10-year Treasury bond over the next two months under the $A_1(3)$ affine model of \cite{dai2000specification}?
What is the expected return of a 5-year, A-rated corporate bond over the next quarter under the stationary-leverage ratio model of \cite{collin2001credit}? What is the expected return of a 3-month call option on a stock over the next month under the SVJ model of \cite{pan2002jump} or under the CGMY model of \cite{carr2002fine}? What is the expected return of a 5-year interest rate cap over next six months, under the QTSM3 interest rate model of \cite{Ahn2002Quadratic}? While the analytical solutions of the current prices \emph{exist} under all of the above-mentioned models, the analytical solutions of the expected future prices---which are necessary inputs for computing the expected returns---over an arbitrary finite horizon $H$, \emph{do not exist} in the finance literature.

Estimating expected returns becomes difficult, if not impossible, for most buy-side firms and other buy-and-hold investors---who do not continuously rebalance and hedge their portfolios---without the analytical solutions of expected future prices of the financial securities in their portfolios. This problem is faced not just by sophisticated hedge fund managers who hold complex contingent claims, but also by numerous asset managers at mutual funds, banks, insurance companies, and pension funds, who hold the most basic financial securities, such as the U.S. Treasury bonds and corporate bonds. In an insightful study, \cite{becker2015reaching} illustrate this problem in the context of insurance companies, whose asset managers reach for higher (promised) yields, and not necessarily higher expected returns. They find that the ``yield-centered" holdings of insurance companies are related to the business cycle, being most pronounced during economic expansions. More generally, it is common to use metrics based upon yield, rating, sector/industry, optionality, and default probability, to make portfolio holding decisions by fixed income asset managers. Since expected returns are not modeled \emph{explicitly}, these metrics provide at best only rough approximations, and at worst counter-intuitive guidance about the ex-ante expected returns as shown by \cite{becker2015reaching}. This is quite unlike the equity markets, where much effort is spent both by the academics and the portfolio managers for obtaining the conditional estimates of the expected returns on stocks. Part of this widespread problem in the fixed income markets and the financial derivatives markets is due to the lack of a simple and parsimonious framework that can guide the conditional estimation of finite-horizon expected returns. This paper fills this gap in the finance theory.

The intuition underlying our theory can be exemplified with the construction of a simple discrete-time example to compute the expected future price of a European call option. We begin by constructing a hybrid equivalent probability measure $\Ro$, which remains the physical measure $\pp$ until before time $H$, and becomes the risk-neutral measure $\q$ on or after time $H$, for a specific future horizon $H$ between time $t$ and time $T$. By construction the $\Ro$ measure provides the physical expectation of the claim's time-$H$ future price until before time $H$, and serves as the pricing (or the equivalent martingale) measure on or after time $H$. Using this new measure, we show how to construct a binomial tree to obtain the expected future price the European call option $C$, which matures at time $T$ with strike price $K$, written on an underlying asset price process $S$. Given $C_T = \max (S_T - K, 0) = (S_T - K)^+$ as the terminal payoff from the call option and using a constant interest rate $r$, the future price of the call option at time $H$ can be computed under the $\q$ measure, as follows:
\begin{align*}
	C_H = \E_H^\q\left[\ue^{-r(T-H)}(S_T - K)^+\right].
\end{align*}
Taking the physical expectation of the future call price at the current time $t \leq H$, gives\footnote{We use $\E[\cdot]$ and $\E^\pp[\cdot]$ interchangeably to denote expectation under the physical measure $\pp$. In addition, we denote by $\E_t$ the conditional expectation with respect to the filtration $\left\{\mathcal{F}_t\right\}_{t\geq 0}$, that is,
	$\E_t[\cdot]=\E[\cdot|\mathcal{F}_t]$.}
\begin{align}\label{eq:callpriceat_t}
	\E_t[C_H] = \E_t^\pp\left[\E_H^\q\left[\ue^{-r(T-H)}(S_T - K)^+\right]\right].
\end{align}
Suppose we can force the construction of the new equivalent probability measure $\Ro$ by satisfying the following two conditions:
\begin{enumerate}[i)]
	\item the $\pp$ transition probabilities at time $t$ of all events until time $H$ are equal to the corresponding $\Ro$ transition probabilities of those events, and
	\item the $\q$ transition probabilities at time $H$ of all events until time $T$ are equal to the corresponding $\Ro$ transition probabilities of those events.
\end{enumerate}

Using this construction, the law of iterated expectations immediately gives:
\begin{align}\label{eq:callpriceat_t_under R}
	\E_t[C_H] = \E_t^\Ro\left[\ue^{-r(T-H)}(S_T - K)^+\right].
\end{align}

Note that for each value of the horizon date $H$---from the current time $t$ until the claim's maturity date $T$---equation \eqref{eq:callpriceat_t_under R} uses a different $\Ro$ measure specific to that horizon. The stochastic process under the $\Ro$ measure evolves under the physical measure $\pp$ until before time $H$, and under the risk-neutral measure $\q$ on or after time $H$.

Now, consider a discrete 2-year binomial tree illustrated in Figure \ref{fig:Rtree}, in which the current time $t=0$, $H=1$ years, $T=2$ years, and the 1-year short rate is a constant.


\begin{figure}[h!]
	\begin{center}
		\subfloat[\label{fig:subfig:Rtree}Stock price ]{	\includegraphics[width=0.53\textwidth]{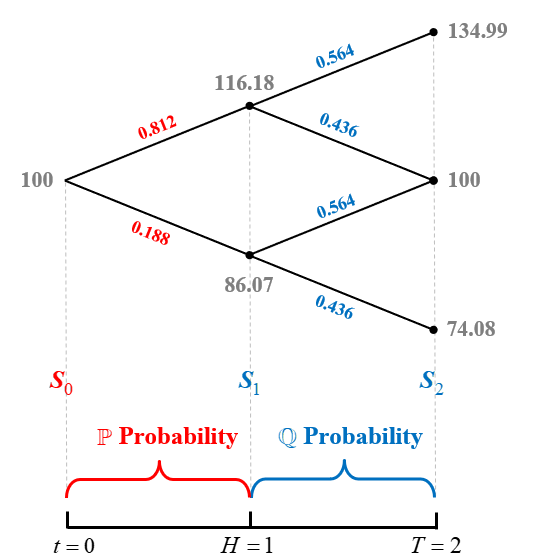}}
		\subfloat[\label{fig:subfig:Ctree}Expected call price]{	\includegraphics[width=0.53\textwidth]{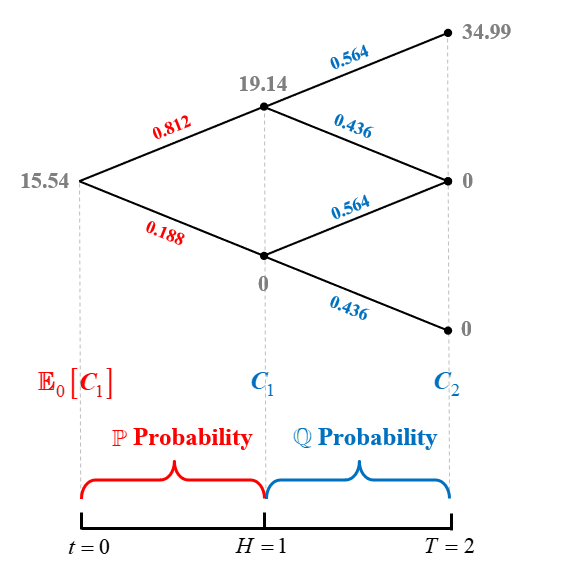}}
		\caption{\label{fig:Rtree}\textbf{The binomial tree under the $\mathbb{R}$ measure.} This figure shows a 2-year binomial tree structure for the stock price and the expected call price under the $\mathbb{R}$ measure.  The $\mathbb{R}$ measure follows physical $\mathbb{P}$ distribution until before $H=1$ years, and risk-neutral $\mathbb{Q}$ distribution beginning at time $H=1$ years. The following parameters are used: $S_t=100$, $K=100$, $\mu=0.1$, $r=0.03$, $\sigma=0.15$, $\Delta t =1$. The up and down node values of the future stock price evolution are computed using the binomial-tree method given by \cite{cox1979option}. The up and down $\mathbb{R}$ probabilities are the same as the corresponding up and down $\mathbb{P}$ probabilities at time 0, equal to 0.812 and 0.188, respectively; and the up and down $\mathbb{R}$ probabilities are the same as the corresponding up and down $\mathbb{Q}$ probabilities at time 1, equal to 0.564 and 0.436, respectively. }
	\end{center}
\end{figure}

To construct the discrete 2-year binomial tree for the stock price process under the $\Ro$ measure, assume that the current price $S_0$ is \$100, the stock's annualized expected return $\mu$ is 0.1, the annualized risk-free rate $r$ is 0.03, the annualized stock return volatility $\sigma$ is 0.15, and the discrete interval $\Delta t$ = 1 year. The stock prices at different nodes are identical under all three measures, $\pp$, $\q$, and $\Ro$, and are given by the binomial model of \cite{cox1979option}: the stock price either moves up by the multiplicative factor $u=\ue^{\sigma \sqrt{\Delta t}}=1.1618$, or moves down by the multiplicative factor $d=\ue^{-\sigma \sqrt{\Delta t}}=0.8607$ over each time step, as shown in Figure \ref{fig:subfig:Rtree}.

The $\Ro$ measure is constructed by satisfying conditions i) and ii) given above. To satisfy condition i), the next-period up and down $\Ro$ probabilities are equal to the corresponding up and down $\pp$ probabilities, respectively, at time $t=0$, and are calculated as:
\begin{align*}
	p^u = \frac{\ue^{\mu\Delta t}-d}{u-d}= \frac{\ue^{0.1\times 1}-0.8607}{1.1618-0.8607} = 0.812, \ \  p^d = 1-p^u = 0.188.
\end{align*}

To satisfy condition ii), the next-period up and down $\Ro$ probabilities are equal to the corresponding up and down $\q$ probabilities, respectively, at time $H=1$, and are calculated as:
\begin{align*}
	q^u = \frac{\ue^{r\Delta t}-d}{u-d}= \frac{\ue^{0.03\times 1}-0.8607}{1.1618-0.8607}= 0.564, \ \ q^d = 1- q^u = 0.436.
\end{align*}

Now, consider the computation of the expected future price $\E_0[C_1]$ of a 2-year European call option written on this stock with a strike price $K$ = \$100. The expected future call price can be computed using the $\Ro$ measure as shown in Figure \ref{fig:subfig:Ctree}.

The option prices $C_2$ at the terminal nodes of the tree are calculated by the payoff function $\max\left(S_2-K, 0\right)$. The option prices $C_1$ at time $H=1$ year are computed using risk-neutral discounting \cite[see][]{cox1979option}. Specifically, the option value $C_1$ at the up node is
\begin{align*}
	C_1^{u} & = \ue^{-r\Delta t}\left( q^u\times \$34.99+q^d\times\$0 \right) \\
	&= \ue^{-0.03\times 1}\left(0.564\times \$34.99+0.436\times\$0 \right)=\$19.14.
\end{align*}
Similar calculations give $C_1^{d}$ = \$0 at the down node. Then at the current time $t = 0$, the expected future call price $\E_0[C_1]$ is calculated by taking the expectation of call prices $C_1^{u}$ and $C_1^{d}$, using the up and down $\pp$ probabilities, as follows:
\begin{align*}
	\E_0[C_1] & = p^u\times C_1^{u} +p^d\times C_1^{d}  \\
	& = 0.812\times \$19.14 +0.188\times\$0 =\$15.54.
\end{align*}

The above 2-year binomial tree can be generalized to multiple periods (of any length) with any specific horizon $H$ between the current time and the option expiration date, by using the one-period up and down $\Ro$ transition probabilities as the corresponding $\q$ transition probabilities for all periods between time $H$ and time $T$, and as the corresponding $\pp$ transition probabilities for all periods between time $t$ and time $H$.

The above discrete-time example used the money market account as the numeraire and the risk-neutral measure $\q$ as the corresponding equivalent martingale measure for valuation on or after the horizon date $H$. The theory underpinning the above example can be generalized to continuous-time models that use alternative numeraires for the valuation of contingent claims. The mathematical intuition underlying this new theory can be understood as follows.

Given a numeraire $G$ with the corresponding equivalent martingale measure (EMM) $\Qem$ and the Radon-Nikod\'{y}m derivative process \emph{$L^*$}, we show how the expected future prices of contingent claims can be computed using one of two new measures, $\Ree$ and $\Reeo$, with Radon-Nikod\'{y}m derivative processes $\mathcal{L}^*$  and $\mathcal{L}_1^*$, respectively. Each of the new measures is pieced together using two different Radon-Nikod\'{y}m derivative processes, which break from each other at the horizon date $H$, so that the measure provides the physical expectation of the claim's time-$H$ future price until before time $H$, and serves as the pricing (or the equivalent martingale) measure on or after time $H$. This simple and novel approach is shown to be very useful in providing formulas of the expected future prices of contingent claims, which are of the same general form as the corresponding valuation formulas of these claims, requiring that only the forms of the underlying price processes be similar under the physical measure and the pricing measure.

Unlike the traditional EMMs, discounted prices are martingales under the new measures, $\Ree$ and $\Reeo$, only from time $H$ to time $T$, and not from time $t$ to time $H$. Since these new measures provide the physical expectation of the claim's time-$H$ future price until before time $H$, we denote these measures as \emph{equivalent expectation measures (EEMs)}. The EMMs of \cite{harrison1979martingales} and \cite{geman1995changes} are obtained as special cases of the corresponding EEMs, when $t = H$.

Of course, the dynamic change of measure is not the only method for computing the expected future price of a contingent claim. Two other methods which can be potentially used to compute the expected future price of a claim are the \emph{stochastic discount factor}  \cite[e.g., see][]{hansen1987role, cochrane1996cross} method and the \emph{numeraire change} method \cite[e.g., see][]{long1990numeraire, becherer2001numeraire,karatzas2007numeraire}. The future price of the call option at time $H$, can be expressed under these two methods as follows:
\begin{align}
	\label{eq:callpriceatHunderP2}
	C_H &=\E_H^\pp\left[M_T/M_H (S_T - K)^+\right] \\
	\label{eq:callpriceatHunderP3}
	&=\E_H^\pp\left[N_H/N_T (S_T - K)^+\right], \tag{\theequation \textit{a}}
\end{align}
where $M$ is the pricing kernel, $M_T/M_H$ is the future stochastic discount factor (SDF) at time $H$, and $N$ is the numeraire portfolio using which makes all of the discounted prices $\pp$ martingales as shown by \cite{long1990numeraire}. Taking iterated expectations under the $\pp$ measure, the time $t$ expectation of the future call price is given as
\begin{align}
	\label{eq:callpriceattunderP2}
	\E_t[C_H] &=\E_t^\pp\left[M_T/M_H (S_T - K)^+\right] \\
	\label{eq:callpriceattunderP3}
	&=\E_t^\pp\left[N_H/N_T (S_T - K)^+\right]. \tag{\theequation \textit{a}}
\end{align}

Equations \eqref{eq:callpriceat_t_under R}, \eqref{eq:callpriceattunderP2}, and \eqref{eq:callpriceattunderP3} provide three distinct but related methods for computing the expected future price of the call option. Section \ref{sec:model} generalizes the dynamic change of method given by equation \eqref{eq:callpriceat_t_under R} to hold with stochastic interest rates and alternative numeraire assets. While the generalized dynamic change of method is mathematically equivalent to the \emph{future} stochastic discount factor method given by equation \eqref{eq:callpriceattunderP2}, and the \emph{dynamic} change of numeraire method given by equation \eqref{eq:callpriceattunderP3}, the former method has a significant advantage over the latter two methods for the derivations of the analytical solutions of the expected future prices under virtually all contingent claim models that admit an analytical solution to their current price using an EMM.

Though this paper provides the first formal theory of the dynamic change of measure method, this is not the first paper to propose a change of measure at a future date. \citeauthor{rubinstein1984simple}'s (\citeyear{rubinstein1984simple}) formula for computing the expected return of a call option under the \cite{black1973pricing} model, and the valuation formulas for a forward-start option by \cite{rubinstein1991pay} and \cite{kruse2005pricing}, also use a change of measure
at a future date. %
These papers combine two pricing measures or EMMs, iteratively---one measure from time $H$ until time $T$ for doing valuation at time $H$, and another measure from the current time $t$ until time $H$ for computing the current expectation---which is a technique that works under a limited set of models. In contrast, our paper creates a new class of hybrid measures called EEMs that ensure the derivations of analytical solutions to the expected future prices of contingent claims under virtually all models in the finance literature that admit an analytical solution to their current price using the existing EMMs.

Given the central role of expected returns in empirical asset pricing and in investment management, the additional flexibility provided by the analytical solutions of the expected future prices of contingent claims over different future horizons can significantly enhance our understanding of the investment opportunity set available in the Treasury bond market, corporate bond market, and virtually all of the financial derivatives in the equity, interest rate, foreign exchange, commodity, credit risk, and other areas of finance. A general econometric implication of the EEM theory is that the analytical solutions of the expected returns over different horizons can allow applications of various empirical estimation methods, such as linear regression, the Markov Chain Monte Carlo method, the generalized method of moments \cite[see][]{hansen1982large}, and others for studying both the cross-section and the term structure of contingent claims returns in different markets.

For example, future research can evaluate the return prediction ability of different structural models for corporate bonds using model-implied\footnote{For example, see \cite{Merton1974ON}, \cite{longstaff2005corporate}, \cite{Leland1996Optimal}, and \cite{collin2001credit}.} analytical solutions of the expected returns, extending the empirical investigations of \cite{gebhardt2005cross}, \cite{lin2011liquidity}, \cite{bai2019common}, \cite{chung2019volatility}, and \cite{bali2021long}.
Similarly, future research can evaluate the return prediction ability of different dynamic term structure models (DTSMs) for Treasury bonds using the analytical solutions of the expected returns, extending the empirical investigations of \cite{dai2000specification, Dai2002Expectation}, \cite{Ahn2002Quadratic}, \cite{duffee2002term}, \cite{bansal2002term}, %
{\cite{hong2005nonparametric}}, \cite{collin2008identification}, and \cite{joslin2011new}. To the best of our knowledge, \cite{adrian2013pricing} is the only existing study that directly incorporates the bond return information for model estimation using regressions of one-period excess returns. The availability of the analytical solutions of the expected bond returns over different horizons can also provide new insights on the \emph{term structure} of bond returns. While it may not be possible to study the term structure of corporate bond returns due to paucity of data and liquidity issues over short horizons, the DTSMs can be adapted for Treasury bond return prediction by forcing them to match the term structure of returns instead of the term structure of zero-coupon yields.

In the options area, the availability of the analytical solutions of the expected option returns for different horizons may not only advance our knowledge about the term structure of option returns, but also provide additional insight into the debate on whether index options are mispriced. For example, based on very high negative index put option expected returns over a \emph{monthly} horizon, \cite{Coval2001Expected} and \cite{chambers2014index} argue for mispricing, while \cite{broadie2009understanding} and \cite{baele2019cumulative} argue for no mispricing. It is an open question whether using the analytical solutions of the expected option returns over different return horizons of 1 day, 2 day, 1 week, 2 weeks, up to one month, would add more insight on this ongoing empirical debate.

Last but not least, as an empirical application in the risk management area, the framework of this paper can also be extended for the computation of variance, covariance, and other higher order moments, and various tail risk measures used routinely for the risk and return analysis of fixed income portfolios and equity portfolios embedded with options.

This paper is organized as follows. Section \ref{sec:model} develops the theory of two general EEMs, $\Ree$ and  $\Reeo$, using the dynamic change of measure method. These two measures lead to three numeraire-specific EEMs given as:
i) the $\Ro$ measure from the $\Ree$ class using the money market account as the numeraire;
ii) the $\RTo$ measure from the $\Reeo$ class using the $T$-maturity pure discount bond as the numeraire; and
iii) the $\RSo$ measure from the $\Reeo$ class using any general asset $S$ other than the money market account and the $T$-maturity pure discount bond as the numeraire.
These three numeraire-specific EEMs allow the derivation of analytical solutions to the expected future prices of virtually all contingent claims in finance that are priced using the change of measure method.

Section \ref{sec:Rmeasure} illustrates the use of $\Ro$ measure with the examples of the \cite{black1973pricing} option pricing model and the affine term structure models of \cite{dai2000specification} and \cite{collin2008identification}. This section also presents new $R$-transforms based upon the $\Ro$ measure, which can be used for obtaining the analytical solutions of the expected future prices of contingent claims that are priced using the transform-based models of \cite{bakshi2000spanning}, \cite{Duffie2000Transform}, and \cite{Chacko2002Pricing}, and the CGMY  L\'{e}vy model of \cite{carr2002fine}.

Section \ref{sec:R1Tmeasure} illustrates the use of $\RTo$ measure with the examples of the \cite{Merton1973Theory} option pricing model and the structural debt pricing model of \cite{collin2001credit}. This section also extends the theoretical framework of \cite{breeden1978prices} to obtain the $\RTo$ density and the physical density of the underlying asset using non-parametric and semi-parametric empirical models of expected option returns.

Internet Appendix Sections \ref{app:someATSM}, \ref{sec:qtsm}, \ref{app:fsoexpected}, \ref{app:MertonVasicek}, and our ongoing/future research
\cite[see][]{nawalkha2022sharpe} derive the expected future prices of different contingent claims under various classes of models in the equity, interest rate, and credit risk areas using the $\Ro$ and the $\RTo$ measures.\footnote{In these Internet Appendix sections, we illustrate additional applications of the EEMs $\RR$ and $\RTo$ for obtaining the analytical solutions of the expected future prices of contingent claims under the specific models of \cite{Vasicek1977An}, \cite{Cox1985A}, the ${A_{1r}(3)}$ model of \cite{balduzzi1996simple} and \cite{dai2000specification},  the general quadratic model of \cite{Ahn2002Quadratic} with the specific example of the QTSM3 model, the \citeauthor{Merton1973Theory}'s (\citeyear{Merton1973Theory}) model with the \cite{Vasicek1977An} short rate process embedded in the \citeauthor{merton1973intertemporal}'s (\citeyear{merton1973intertemporal}) ICAPM framework, and the forward-start option model using the affine jump-diffusion processes of \cite{Duffie2000Transform}. Using the $R$-transforms presented in Section \ref{sec:Rmeasure}.\ref{sec:Rtransforms}, \cite{nawalkha2022sharpe} derive the analytical solutions of the expected future prices of European options under the stochastic-volatility-double-jump model of \cite{Duffie2000Transform}, the CGMY  L\'{e}vy model of \cite{carr2002fine}, and interest rate derivatives under the affine jump-diffusion-based models of \cite{Chacko2002Pricing}.} Internet Appendix Section \ref{iapp:margrabe} illustrates the use of $\RSo$ measure with an extension of the \cite{margrabe1978value} exchange option pricing model with stochastic interest rates. We show that while the valuation of exchange option does not depend upon interest rates under the \cite{margrabe1978value} model, the expected future price of the exchange option does depend upon the parameters of the interest rate process.

The final section considers some theoretical extensions of the EEM theory and provides concluding remarks. Some of the proofs of the theorems and the propositions are provided in the Appendix, and some of the models are presented in the Internet Appendix.

\begin{center}
\section{{The Equivalent Expectation Measure Theory}}\label{sec:model} \vspace{-1em}
\end{center}

This section builds the theory of two general EEMs $\Ree$ and $\Reeo$. Identifying these general EEMs with specific numeraires, such as the money market account, the $T$-maturity pure discount bond, and a general traded asset, leads to different numeraire-specific EEMs, which can be used to obtain expected future prices of contingent claims under virtually all models that admit an analytical solution to the claim's current price.

\subsection{The $\mathbb{R}^*$ Measure}
Let the price of a contingent claim be denoted by $\X$, with a maturity/expiration date $T$ on which the claim pays out a single contingent cash flow. The cash flow can depend upon the entire history of the underlying state variables in arbitrary ways  \citep[as in the papers of][]{harrison1979martingales, bakshi2000spanning, Duffie2000Transform, Chacko2002Pricing}, allowing a variety of regular and exotic contingent claims, such as European options, Asian options, barrier options, and other claims, which make a single payment at a fixed terminal date $T$.

We fix a probability space $\left(\Omega, \mathcal{F}, \mathbb{P}\right)$ and a filtration $\mathcal{F}_s$, $0\leq s\leq T$, satisfying the usual conditions,\footnote{$\mathcal{F}_s$ is right-continuous and $\mathcal{F}_0$ contains all the $P$-negligible events in $\mathcal{F}$.} where $\mathbb{P}$ is the physical probability measure. We assume that the contingent claim is being valued by an EMM $\Qem$ defined on $\left(\Omega, \mathcal{F} \right)$, which uses a traded asset $\Y$ as the numeraire for valuing the claim.\footnote{The numeraire $\Y$ could be the money market account, zero-coupon bond, or any other asset, for instance, stock, commodity, etc.} According to the martingale valuation results \citep[see, e.g.,][Theorem 26.2]{geman1995changes,bjork2009arbitrage}, the process $\X_s/\Y_s$ is a martingale under $\Qem$ for all $0\leq s\leq T$, and the current time $t$ price of the claim is given as
\begin{align}\label{eq:xtprice}
	\X_t = \Y_t\E_t^\Qem\left[\frac{\X_T}{\Y_T}\right].
\end{align}

Due to the equivalence of $\Qem$ and $\pp$, there exists an almost surely positive random variable $\LT$ such that
\begin{align*}
	\Qem(A) = \int_A \LT(\omega) \D \pp(\omega) \ \   \text{for all}  \ \ A \in \mathcal{F}_T,
\end{align*}
where $\LT$ is the Radon-Nikod\'{y}m derivative of $\Qem$ with respect to $\pp$, and we can write it as
\begin{align*}\LT &=\left.\frac{\D \Qem}{\D \pp}\Big|_{\mathcal{F}_T}, \right.
\end{align*}
with $\E[\LT]=1$. The Radon-Nikod\'{y}m derivative process is defined as
\begin{align*}%
	\Ls = \E\left[\LT \middle| \mathcal{F}_s\right], \ \ \text{or} \ \ \Ls =\frac{\D \Qem}{\D \pp}\Big|_{\mathcal{F}_s}, \ \ \text{for} \ \ 0\leq s\leq T.
\end{align*}

The \emph{expected future price} of the claim, $\E_t \left[ \X_H\right]$, can be solved using equation \eqref{eq:xtprice} at a future time $H$, as follows:
\begin{align}\label{eq:priceQ}
	\E_t \left[ \X_H\right] &=  \E_t^\pp \left[ \Y_H\E_H^\Qem\left[\frac{\X_T}{\Y_T}\right]\right].
\end{align}

Similar to equation \eqref{eq:callpriceat_t} in the introduction, the main reason for {the intractability of equation} \eqref{eq:priceQ} is that the law of iterated expectations cannot be used since the inner and the outer expectations are not under the same probability measure. In order to make the inner and the outer expectations under the same probability measure, we define an equivalent probability measure $\Ree$ in {the following Lemma \ref{lemma:Ree}}, under which the following two conditions are met for all $0 \leq t \leq H \leq T$:
\begin{enumerate}
	\item[\textbf{C1.}] The $\pp$ transition probabilities at time $t$ of all events until time $H$ are equal to the corresponding $\Ree$ transition probabilities of those events.
	\item[\textbf{C2.}] The $\Qem$ transition probabilities at time $H$ of all events until time $T$ are equal to the corresponding $\Ree$ transition probabilities of those events.
\end{enumerate}

If both the above conditions are satisfied, then equation \eqref{eq:priceQ} can be simplified using iterated expectations under the $\Ree$ measure as follows:
\begin{align}\label{eq:priceQIT}
	\E_t \left[ \X_H\right] &=  \E_t^\Ree \left[\Y_H\E_H^\Ree\left[\frac{\X_T}{\Y_T}\right]\right] \nonumber \\
	&=  \E_t^\Ree \left[\E_H^\Ree\left[\Y_H\frac{\X_T}{\Y_T}\right]\right] \nonumber \\
	& = \E^\Ree_t \left[\Y_H \frac{\X_T}{\Y_T} \right].
\end{align}

To this end, we construct the $\Ree$ measure using the following lemma:
\begin{lemma}\label{lemma:Ree}
	For a fixed $H$ with $0\leq H\leq T$, define a process $\LLHs$ as
	\begin{align}\label{eq:LR}
		\LLHs  &=\left\{\begin{array}{ll}
			\displaystyle \frac{\Ls}{\LH}, &  \text{if} \ \  {H\leq s\leq T}, \\
			\displaystyle  1, &  \text{if } \ \ {0\leq s < H}.
		\end{array}\right.
	\end{align}
	Let
	\begin{align*}%
		\Ree(A) &= \int_A \mathcal{L}_T^*(H; \omega) \D \pp(\omega) \ \   \text{for all}  \ \ A \in \mathcal{F}_T,
	\end{align*}
	then $\Ree$ is a probability measure equivalent to $\pp$,  and $\LLHs$ is the Radon-Nikod\'{y}m derivative process of $\Ree$ with respect to $\pp$.
\end{lemma}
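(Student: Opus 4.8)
The plan is to check the three assertions of the lemma one at a time, using only the properties of $\LT$ already recorded above: by the standard theory of Radon-Nikod\'{y}m density processes, $\Ls$ is a strictly positive $\pp$-martingale on $[0,T]$ with $\Lo=1$ and $\E[\LT]=1$. First I would verify that $\Ree$ is a probability measure, which amounts to showing that $\LHT\geq 0$ is $\pp$-integrable with $\E[\LHT]=1$. Nonnegativity and integrability are immediate, since $\LHT=\LT/\LH$ is a ratio of strictly positive, integrable quantities. For the normalization, $\LH$ is strictly positive and $\F_H$-measurable, so conditioning on $\F_H$ and invoking the $\pp$-martingale property $\E[\LT\mid\F_H]=\LH$ gives $\E[\LT/\LH\mid\F_H]=(1/\LH)\,\E[\LT\mid\F_H]=1$; the tower property then yields $\E[\LHT]=\E\big[\E[\LHT\mid\F_H]\big]=1$. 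Equivalence $\Ree\sim\pp$ then follows at once, because $\LHT=\LT/\LH>0$ $\pp$-a.s., so $\Ree$ and $\pp$ have exactly the same null sets in $\F_T$.

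The only part with any real content is the identification of $\LLHs$ in \eqref{eq:LR} as the Radon-Nikod\'{y}m derivative \emph{process}, i.e. the claim that $\LLHs=\E[\LHT\mid\F_s]$ for every $s\in[0,T]$; I would establish this by splitting along the two regimes in the definition. For $H\leq s\leq T$, the factor $\LH$ is $\F_s$-measurable, so $\E[\LT/\LH\mid\F_s]=(1/\LH)\,\E[\LT\mid\F_s]=\Ls/\LH=\LLHs$ by the $\pp$-martingale property of $\LT$. For $0\leq s<H$, I would use the tower property with $\F_s\subseteq\F_H$: by the computation in the first paragraph $\E[\LHT\mid\F_H]=1$, hence $\E[\LHT\mid\F_s]=\E\big[\,\E[\LHT\mid\F_H]\mid\F_s\,\big]=\E[1\mid\F_s]=1=\LLHs$. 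The two cases agree at the junction $s=H$ (both give $\LHH=\LH/\LH=1$), so the piecewise definition is consistent; since the two regimes exhaust $[0,T]$ and $\LLHs$ is manifestly adapted and right-continuous, it genuinely qualifies as the density process of $\Ree$ relative to $\pp$.

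I do not expect a genuine obstacle here: the whole argument is two applications of the tower property together with the $\pp$-martingale property and strict positivity of $\LT$. The only points requiring care are bookkeeping — treating $\LH$ as $\F_s$-measurable exactly when $s\geq H$, and being explicit that the assertion concerns the full process $s\mapsto\E[\LHT\mid\F_s]$ and not merely the terminal density. As sanity checks at the boundary one can note the two degenerate cases $H=0$ (then $\LLHs=\Ls$, i.e. $\Ree=\Qem$) and $H=T$ (then $\LLHs\equiv 1$, i.e. $\Ree=\pp$), which are consistent with the remark in the text that the traditional EMMs are recovered from the EEMs in the limiting case $t=H$.
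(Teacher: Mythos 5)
Your proof is correct and follows essentially the same route as the paper's: both arguments rest on the strict positivity and $\pp$-martingale property of $L^*$, the $\F_H$-measurability of $\LH$, and the tower property, and the positivity/normalization steps are identical. The only difference is organizational --- you verify $\mathcal{L}_s^*(H)=\E\left[\mathcal{L}_T^*(H)\mid\F_s\right]$ in two cases, whereas the paper checks the martingale identity $\E\left[\mathcal{L}_{s_2}^*(H)\mid\F_{s_1}\right]=\mathcal{L}_{s_1}^*(H)$ directly across the three orderings of $s_1,s_2$ relative to $H$; the two verifications are equivalent.
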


\begin{proof}
	See Appendix \ref{app:thm1} (Section \ref{app:thm1}).
\end{proof}

We now state the main theorem of this paper.
\begin{theorem}\label{thm:main}
	For all $0\leq t\leq H\leq T$, the time $t$ expectation of the claim's future price $F_H$ can be computed by using the equivalent expectation measure $\Ree$ defined by Lemma \ref{lemma:Ree}, as shown in equation \eqref{eq:priceQIT}.
\end{theorem}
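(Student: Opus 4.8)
The plan is to verify that the measure $\Ree$ constructed in Lemma~\ref{lemma:Ree} satisfies conditions \textbf{C1} and \textbf{C2}, and then to observe that the chain of equalities in \eqref{eq:priceQIT} follows directly from those two conditions together with the law of iterated expectations under a single measure. Lemma~\ref{lemma:Ree} already supplies the hard analytic content—that $\Ree$ is a well-defined probability measure equivalent to $\pp$ with Radon-Nikod\'{y}m derivative process $\LLHs$—so the proof of Theorem~\ref{thm:main} is essentially the translation of the abstract change-of-measure bookkeeping into the ``transition probability'' language of \textbf{C1}--\textbf{C2}.

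First I would establish \textbf{C1}. For $0\le s < H$ the density process $\LLHs$ is identically $1$, so for any $A\in\mathcal{F}_s$ with $s<H$ we have $\Ree(A)=\E[\I_A\LLHs]=\E[\I_A]=\pp(A)$, and more generally, for any integrable $\mathcal{F}_H$-measurable $X$ and any $t\le s<H$, the abstract Bayes rule gives $\E_t^\Ree[X]=\E_t[X\,\mathcal{L}_H^*(H)]/\mathcal{L}_t^*(H)=\E_t[X\cdot 1]/1=\E_t^\pp[X]$, since $\LLHs=1$ for $s$ just below $H$ and (by the martingale property of the density process, or by inspection of \eqref{eq:LR} and $\E[\LT\mid\mathcal{F}_H]=\LH$) $\E[\mathcal{L}_H^*(H)\mid\mathcal{F}_{H^-}]=1$. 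This is exactly the statement that $\pp$- and $\Ree$-transition probabilities of all events up to time $H$, computed at any time $t\le H$, coincide.

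Next I would establish \textbf{C2}. For $H\le s\le T$ we have $\LLHs=\Ls/\LH$, and a direct computation shows that this is precisely the Radon-Nikod\'{y}m derivative process, on the interval $[H,T]$, of $\Qem$ with respect to $\pp$ \emph{conditioned on $\mathcal{F}_H$}: indeed $\mathcal{L}_T^*(H)/\mathcal{L}_H^*(H)=(\LT/\LH)/1=\LT/\LH$, which is $\frac{\D\Qem}{\D\pp}\big|_{\mathcal{F}_T}$ renormalized by its $\mathcal{F}_H$-value. Hence by the Bayes rule, for any integrable $\mathcal{F}_T$-measurable $Y$, $\E_H^\Ree[Y]=\E_H[Y\LT/\LH]=\E_H^\Qem[Y]$, which says that $\Qem$- and $\Ree$-transition probabilities of all events between $H$ and $T$, computed at time $H$, agree. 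Applying this with $Y=\X_T/\Y_T$ turns the inner $\Qem$-expectation in \eqref{eq:priceQ} into an $\Ree$-expectation, i.e.\ $\E_H^\Qem[\X_T/\Y_T]=\E_H^\Ree[\X_T/\Y_T]$, and applying \textbf{C1} turns the outer $\pp$-expectation into an $\Ree$-expectation; pulling the $\mathcal{F}_H$-measurable factor $\Y_H$ inside the inner conditional expectation and collapsing the nested $\Ree$-expectations by the tower property yields the three displayed lines of \eqref{eq:priceQIT}.

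The main obstacle—such as it is—is the careful handling of the ``break'' at $s=H$: one must check that the one-sided limit of $\LLHs$ as $s\uparrow H$ (value $1$) and the value at $s=H$ (value $\Ls/\LH\big|_{s=H}=1$) are consistent, so that $\LLHs$ is a genuine (right-continuous) martingale density process and the tower property applies across $H$ without a jump correction; this is really the content of Lemma~\ref{lemma:Ree}, which I am entitled to invoke. Everything else is the routine verification that conditioning under $\Ree$ coincides with conditioning under $\pp$ before $H$ and with conditioning under $\Qem$ after $H$, followed by a single application of iterated expectations under the now-common measure $\Ree$.
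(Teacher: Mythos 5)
Your proposal is correct and follows essentially the same route as the paper: verify that the Radon--Nikod\'{y}m derivative process from Lemma \ref{lemma:Ree} forces C1 (density identically $1$ before $H$, so $\Ree=\pp$ there) and C2 (Bayes' rule with $\LT/\LH$ at time $H$, so $\Ree=\Qem$ there), then collapse the nested expectations in \eqref{eq:priceQ} by the tower property under the single measure $\Ree$. The paper additionally records a one-line direct computation of \eqref{eq:priceQIT} from \eqref{eq:priceQ} via the density process, but that is just a condensed version of the same argument.
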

\begin{proof}
	Since equation \eqref{eq:priceQIT} follows from equation \eqref{eq:priceQ} if conditions C1 and C2 are satisfied, the proof requires us to show that the $\Ree$ measure defined in Lemma \ref{lemma:Ree} satisfies both these conditions. For the first condition C1, note that the Radon-Nikod\'{y}m derivative process of $\Ree$ w.r.t. $\p$ equals 1 before time $H$, which implies that $\Ree$ = $\p$, before $H$. For the second condition C2, note that for any event, say $A_T \in \F_T$, its conditional $\Qem$ probability at time $H$ is the same as the corresponding $\Ree$ probability as follows:
	\begin{align*}
		\E_H^\Qem\left[\I_{A_T} \right]&=\E_H^\p\left[\frac{\LT}{\LH}\I_{A_T}\right]  =\E_H^\p\left[\frac{\LT/\LH}{\LH/\LH}\I_{A_T}\right]	=\E_H^\p\left[\frac{\LHT}{\LHH}\I_{A_T}\right] =\E^\Ree_H\left[\I_{A_T}\right].
	\end{align*}
	
	As an alternative proof we can derive equation \eqref{eq:priceQIT} directly from equation \eqref{eq:priceQ} by using Lemma \ref{lemma:Ree} (and without explicitly invoking conditions C1 and C2, which are implied by Lemma \ref{lemma:Ree}) as follows:
	\begin{align}\label{eq:thmprf}
		\begin{aligned}
			\E_t \left[ \X_H\right] &= \E^\pp_t\left[\E_H^\Qem\left[ \Y_H \frac{\X_T}{\Y_T} \right]\right] = \E^\pp_t\left[\E_H^\pp\left[\frac{\LT}{\LH}\cdot \Y_H \frac{\X_T}{\Y_T} \right]\right] \\
			& = \E^\pp_t\left[\frac{\LT}{\LH}\cdot \Y_H \frac{\X_T}{\Y_T} \right]  = \E^\pp_t\left[\frac{\LT/\LH}{1}\cdot \Y_H \frac{\X_T}{\Y_T} \right] \\
			& = \E^\pp_t\left[\frac{\LHT}{\LHt}\cdot \Y_H \frac{\X_T}{\Y_T}\right] = \E^\Ree_t \left[ \Y_H \frac{\X_T}{\Y_T} \right].
		\end{aligned}
	\end{align}
\end{proof}

We now give some general properties of the $\Ree$ measure that provide insights into its relationship with the physical measure $\pp$ and the equivalent martingale measure $\Qem$. We begin with Proposition \ref{prop:Ree1}, which gives the relationship of the $\Ree$ measure with the $\pp$ and $\Qem$ measures.

\begin{proposition}\label{prop:Ree1}
	For all $0\leq t\leq H\leq s \leq T$, the $\Ree$ measure has the following properties:
	\begin{enumerate}[(i)]
		\item $\Ree(A\,|\,\F_{t}) = \pp(A\,|\,\F_{t})$ for all $A\in \mathcal{F}_{H}$. \label{Ree:num1}
		\item  $\Ree(A\,|\,\F_{s}) = \Qem(A\,|\,\F_{s}) $ for all $A\in \mathcal{F}_{T}$. \label{Ree:num2}
		\item When $H=T$, $\Ree(A\,|\,\F_{t}) = \pp(A\,|\,\F_{t})$  for all $A\in \mathcal{F}_T$. \label{Ree:num3}
		\item When $H=t$, $\Ree(A\,|\,\F_{t})  = \Qem(A\,|\,\F_{t}) $ for all $A\in \mathcal{F}_T$. \label{Ree:num4}
		\item  $\E_t^\Ree\left[Z_T\right]  = \E_t^\p\left[\E_H^\Qem\left[Z_T\right]\right]$  for any random variable $Z_T$ at time $T$. \label{Ree:num5}
	\end{enumerate}
\end{proposition}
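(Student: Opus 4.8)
The plan is to obtain all five parts of Proposition~\ref{prop:Ree1} from one tool --- the abstract Bayes rule for the measure $\Ree$ --- together with the piecewise structure of the density process in \eqref{eq:LR}. Since Lemma~\ref{lemma:Ree} establishes that $\LHs$ is the Radon--Nikod\'{y}m derivative process of $\Ree$ with respect to $\pp$, the first step is to record that for every $0\le t\le s\le T$ and every $\F_s$-measurable, $\Ree$-integrable $X$,
\begin{align*}
  \E_t^\Ree[X]=\frac{1}{\LHt}\,\E_t^\pp\!\left[\LHs\,X\right].
\end{align*}
Each property then reduces to substituting \eqref{eq:LR} and tracking where $t$ and $s$ sit relative to the horizon $H$.

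For part~(i) I would take $A\in\F_H$ and apply the displayed identity with $s=H$, $X=\I_A$: since $0\le t\le H$, formula \eqref{eq:LR} gives $\LHt=1$ and $\LHH=\LH/\LH=1$, so $\Ree(A\,|\,\F_t)=\E_t^\pp[\I_A]=\pp(A\,|\,\F_t)$. For part~(ii) I would take $A\in\F_T$ and apply the identity at a time $s$ with $H\le s\le T$; here $\LHs=\Ls/\LH$ and $\LHT=\LT/\LH$, the $\LH$ factors cancel (using $\F_H$-measurability of $\LH$ to pull it through $\E_s^\pp$), and we are left with $\E_s^\Ree[\I_A]=\E_s^\pp[\LT\I_A]/\Ls=\E_s^\Qem[\I_A]$ by the standard Bayes rule that defines $\Qem$ from $L^*$. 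Parts~(iii) and~(iv) are then just the boundary cases: when $H=T$ the process $\LHs$ is identically $1$, so $\Ree=\pp$ on $\F_T$ and part~(iii) is part~(i) with $\F_H=\F_T$; when $H=t$, part~(iv) is part~(ii) read at $s=H=t$, where $\LHt=1$ and $\LHT=\LT/\Lt$.

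For part~(v) I would repeat the computation already used in the proof of Theorem~\ref{thm:main}, equation~\eqref{eq:thmprf}, with a generic $Z_T$ in place of $\Y_H\X_T/\Y_T$: using $\LHt=1$ and $\LHT=\LT/\LH$,
\begin{align*}
  \E_t^\Ree[Z_T]=\E_t^\pp\!\left[\frac{\LT}{\LH}\,Z_T\right]=\E_t^\pp\!\left[\frac{1}{\LH}\,\E_H^\pp[\LT Z_T]\right]=\E_t^\pp\!\left[\E_H^\Qem[Z_T]\right],
\end{align*}
the middle step using the tower property and $\F_H$-measurability of $\LH$, the last using the Bayes rule for $\Qem$. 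I do not anticipate a genuine obstacle here: the argument is pure bookkeeping, and the only points needing a line of care are (a) checking that the abstract Bayes rule is valid in the non-strict cases $t=H$ or $s=H$ --- immediate from \eqref{eq:LR} since both branches agree and equal $1$ at $s=H$ --- and (b) noting the implicit integrability hypothesis on $Z_T$ (equivalently, on $\LHT Z_T$ under $\pp$), which should be stated alongside the proposition.
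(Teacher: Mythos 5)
Your proposal is correct and follows essentially the same route as the paper: parts (i)--(iv) are proved by the identical Bayes-rule substitutions of the piecewise density $\LHs$, and parts (iii)--(iv) are treated as the same boundary cases. The only cosmetic difference is in part (v), where the paper applies the tower property under $\Ree$ and then invokes properties (i) and (ii), while you compute directly with $\LT/\LH$ as in equation \eqref{eq:thmprf}; the two computations are interchangeable, and your added remark about the implicit integrability of $Z_T$ is a reasonable (if minor) point of care.
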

\begin{proof}
	See Appendix \ref{app:thm1} (Section \ref{app:Reempro}).
\end{proof}

According to Proposition \ref{prop:Ree1}\eqref{Ree:num1}, the conditional $\Ree$ probabilities at any time $t$ until time $H$, of the events at time $H$, are the same as the corresponding $\pp$ probabilities of those events, satisfying condition C1 given earlier. According to  Proposition \ref{prop:Ree1}\eqref{Ree:num2}, the conditional $\Ree$ probabilities at any time $s$ on or after time $H$, of the events at time $T$, are the same as the corresponding conditional $\Qem$ probabilities of those events, satisfying condition C2 given earlier. Propositions \ref{prop:Ree1}\eqref{Ree:num3} and \ref{prop:Ree1}\eqref{Ree:num4} show that the $\Ree$ measure nests both the physical measure $\pp$ and the equivalent martingale measure $\Qem$. When $H = t$, the $\Ree$ measure becomes the $\Qem$ measure, and the expected value computing problem becomes the traditional claim pricing problem. When $H=T$, the $\Ree$ measure becomes the $\pp$ measure. For all other values of $H$ that are strictly between the current time $t$ and the maturity/expiration date $T$, the $\Ree$ measure remains a hybrid measure. Proposition \ref{prop:Ree1}\eqref{Ree:num5} shows that the expectation of any random variable $Z_T$ under the $\Ree$ measure is equivalent to the iterated expectation of that variable computed using both $\pp$ and $\Qem$ measures.

\begin{proposition}\label{prop:Ree2}
	The expectation of the price ratio process $\X_s/\Y_s$ has the following properties:
	\begin{enumerate}[(i)]
		\item For all $0\leq H\leq s_1\leq s_2\leq T$, $ \E_{s_1}^\Ree\left[{\X_{s_2}}/{\Y_{s_2}}\right]  =   \E_{s_1}^\Qem\left[{\X_{s_2}}/{\Y_{s_2}}\right] ={\X_{s_1}}/{\Y_{s_1}}$. \label{Ree2:num1}
		\item For all $0\leq s_1\leq s_2\leq H\leq T$, $ \E_{s_1}^\Ree\left[{\X_{s_2}}/{\Y_{s_2}}\right]  =   \E_{s_1}^\pp\left[{\X_{s_2}}/{\Y_{s_2}}\right] \neq     \E_{s_1}^\Qem\left[{\X_{s_2}}/{\Y_{s_2}}\right] ={\X_{s_1}}/{\Y_{s_1}}$. \label{Ree2:num2}
	\end{enumerate}
\end{proposition}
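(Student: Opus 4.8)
The plan is to read both parts directly off Proposition \ref{prop:Ree1}, which already records how $\Ree$ relates to $\pp$ and to $\Qem$ on the two sides of the horizon $H$, together with the one extra input supplied in the setup: $\X_s/\Y_s$ is a $\Qem$-martingale for all $0\le s\le T$. No new construction is needed; the work is only to line up the time indices with the hypotheses of Proposition \ref{prop:Ree1} and to pass from ``equality of conditional probabilities of events'' to ``equality of conditional expectations'' of the price ratio.

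For part \eqref{Ree2:num1}, since $H\le s_1\le s_2\le T$ I would apply Proposition \ref{prop:Ree1}\eqref{Ree:num2} with running time $s_1$: $\Ree(A\mid\F_{s_1})=\Qem(A\mid\F_{s_1})$ for every $A\in\F_T$. Because $s_2\le T$, the variable $\X_{s_2}/\Y_{s_2}$ is $\F_T$-measurable, so agreement of the two conditional probability measures on $\F_T$ upgrades---by approximation with simple functions and monotone convergence---to $\E_{s_1}^\Ree[\X_{s_2}/\Y_{s_2}]=\E_{s_1}^\Qem[\X_{s_2}/\Y_{s_2}]$. The remaining equality $\E_{s_1}^\Qem[\X_{s_2}/\Y_{s_2}]=\X_{s_1}/\Y_{s_1}$ is exactly the $\Qem$-martingale property of the price ratio recalled around equation \eqref{eq:xtprice}.

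For part \eqref{Ree2:num2} I would argue symmetrically on the near side of $H$. Since $0\le s_1\le s_2\le H$, Proposition \ref{prop:Ree1}\eqref{Ree:num1} applies with running time $s_1$, giving $\Ree(A\mid\F_{s_1})=\pp(A\mid\F_{s_1})$ for every $A\in\F_H$; as $\F_{s_2}\subseteq\F_H$, the same passage from events to expectations yields $\E_{s_1}^\Ree[\X_{s_2}/\Y_{s_2}]=\E_{s_1}^\pp[\X_{s_2}/\Y_{s_2}]$. The chain $\E_{s_1}^\Qem[\X_{s_2}/\Y_{s_2}]=\X_{s_1}/\Y_{s_1}$ is once more the $\Qem$-martingale property, valid on all of $[0,T]$ and so in particular on $[s_1,s_2]\subseteq[0,H]$. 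The ``$\neq$'' is not an identity but records the generic failure of $\X_s/\Y_s$ to be a $\pp$-martingale: $\E_{s_1}^\pp[\X_{s_2}/\Y_{s_2}]\neq\X_{s_1}/\Y_{s_1}$ precisely when $\pp$ and $\Qem$ disagree on $\F_{s_2}$, i.e.\ whenever the $\Y$-denominated risk premium earned by $\X$ over $[s_1,s_2]$ is nonzero; I would state part \eqref{Ree2:num2} with this proviso, and could illustrate it in the \cite{black1973pricing} setting, where $\E_{s_1}^\pp[\X_{s_2}/\Y_{s_2}]$ picks up a Sharpe-ratio drift.

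I do not expect a real obstacle: once Proposition \ref{prop:Ree1} is in hand the identities are immediate, and the only mildly technical point is promoting ``equality of conditional probabilities'' to ``equality of conditional expectations'' of a possibly unbounded ratio, which needs just the simple-function/monotone-class argument together with the integrability implicitly assumed throughout the paper. As a self-contained alternative that avoids citing Proposition \ref{prop:Ree1}, I could instead compute $\E_{s_1}^\Ree[\cdot]$ via the abstract Bayes rule with the Radon-Nikod\'{y}m process $\LLHs$ of Lemma \ref{lemma:Ree}: for $s_1,s_2\ge H$ the density ratio $\mathcal{L}_{s_2}^*(H)/\mathcal{L}_{s_1}^*(H)$ collapses to $L_{s_2}^*/L_{s_1}^*$, recovering $\E_{s_1}^\Qem$, while for $s_1,s_2\le H$ that ratio is identically $1$, recovering $\E_{s_1}^\pp$; I would keep this as a one-line remark rather than the main line of proof.
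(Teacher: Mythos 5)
Your proof is correct, but your primary route differs from the paper's. The paper proves both parts in one line each by the abstract Bayes rule: $\E_{s_1}^\Ree\left[{\X_{s_2}}/{\Y_{s_2}}\right]=\E_{s_1}^\pp\left[\frac{\Lc_{s_2}^*(H)}{\Lc_{s_1}^*(H)}\cdot\frac{\X_{s_2}}{\Y_{s_2}}\right]$, where the density ratio collapses to $L_{s_2}^*/L_{s_1}^*$ when $H\leq s_1\leq s_2$ (recovering $\E_{s_1}^\Qem$ and then the $\Qem$-martingale property) and to $1$ when $s_1\leq s_2\leq H$ (recovering $\E_{s_1}^\pp$) --- that is, exactly the computation you relegate to your closing ``one-line remark.'' Your main argument instead derives the result from Proposition \ref{prop:Ree1}, upgrading equality of conditional probabilities on $\F_T$ (resp.\ $\F_H$) to equality of conditional expectations of the $\F_{s_2}$-measurable ratio via simple functions and monotone convergence. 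Both are valid; the paper's direct computation avoids the measurability/integrability bookkeeping your upgrade step requires, while your route makes transparent that Proposition \ref{prop:Ree2} is a corollary of the hybrid-measure structure already recorded in Proposition \ref{prop:Ree1}, and your gloss on the ``$\neq$'' in part \eqref{Ree2:num2} (it is a generic failure, holding whenever the claim carries a nonzero $\Y$-denominated risk premium over $[s_1,s_2]$, not an identity) matches the paper's qualifier ``except when $\pp=\Qem$'' and is, if anything, slightly more informative.
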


\begin{proof}
	See Appendix \ref{app:thm1} (Section \ref{app:prop2}).
\end{proof}

According to Proposition \ref{prop:Ree2}\eqref{Ree2:num1}, the price ratio process $\X_s/\Y_s$ is a martingale under $\Ree$ from time $H$ to time $T$. However, according to Proposition \ref{prop:Ree2}\eqref{Ree2:num2}, unless $\pp(A)$ = $\Qem(A)$ for all $A\in \mathcal{F}_H$, the price ratio process $\X_s/\Y_s$ is not a martingale under $\Ree$ from time 0 to time $H$. This shows the essential difference between the equivalent $martingale$ measure $\Qem$ under which the price ratio $\X_s/\Y_s$ is a martingale from time 0 until time $T$, and the corresponding equivalent $expectation$ measure $\Ree$ under which $\X_s/\Y_s$ is a martingale only from time $H$ until time $T$.

Generalizing the results in equations \eqref{eq:callpriceattunderP2} and \eqref{eq:callpriceattunderP3}, the expected future price of the claim in equation \eqref{eq:priceQIT} can also be expressed as follows:
\begin{align}
	\label{eq:callpriceattunderP2G}
	\E_t[\X_H] &=\E_t^\pp\left[\frac{M_T}{M_H} \X_T\right] \\
	\label{eq:callpriceattunderP3G}
	&=\E_t^\pp\left[\frac{N_H}{N_T} \X_T\right], \tag{\theequation \textit{a}}
\end{align}
where
the future SDF $M_T/M_H$ equals $N_H/N_T$, the inverse of the future gross return on the numeraire portfolio from time $H$ to time $T$ \cite[see][]{long1990numeraire}, for all $0\leq t\leq H \leq T$.
The equivalence relationship for the expected future price of a contingent claim given in equations \eqref{eq:priceQIT}, \eqref{eq:callpriceattunderP2G}, and \eqref{eq:callpriceattunderP3G} connects the new dynamic change of measure method to two asset pricing literatures in finance on the SDF-based models and the numeraire change-based models. If the numeraire $G$ is selected as the numeraire portfolio $N$ of \cite{long1990numeraire}, then both $\Qem$ and $\Ree$ become the physical measure $\pp$, and equation \eqref{eq:callpriceattunderP3G} obtains as a special case of equation \eqref{eq:priceQIT}. If the numeraire $G$ is selected as the money market account, then $\Qem$ becomes the risk-neutral measure $\q$, and $\Ree$ becomes the $\Ro$ measure illustrated in the introduction.

To evaluate the usefulness of the dynamic change of measure method relative to the future SDF method given by equation \eqref{eq:callpriceattunderP2G} and the dynamic change of numeraire method given by equation \eqref{eq:callpriceattunderP3G}, first consider those contingent claim models under which $G_H/G_T$ is distributed independently of $\X_T$, such that the expectation in equation \eqref{eq:priceQIT} can be given as a product of two expectations that are easier to derive and compute. For example, consider all models which use the money market account as the numeraire, and assume a constant or deterministic short rate.
The expectation in equation \eqref{eq:priceQIT} under these models depends upon only the distribution of $\X_T$ under the $\R$ measure, since $G_H/G_T$ is a constant.\footnote{By definition, $\Ree=\R$, when the numeraire is the money market account.}  In contrast, the expectations using equations \eqref{eq:callpriceattunderP2G} and \eqref{eq:callpriceattunderP3G} depend upon the interaction of $\X_T$ with either $M_T/M_H$ or with $N_H/N_T$ under the $\pp$ measure. Since $\X_T$ is almost never independent of either $M_T/M_H$ or $N_H/N_T$, the analytical solutions are easier to derive using the dynamic change of measure method given by equation \eqref{eq:priceQIT}. This advantage of the dynamic change of measure method is especially relevant when the underlying state variable processes follow complex stochastic processes, such as under the constant-elasticity-of-variance models of \cite{cox1976valuation}, the stochastic volatility models of \cite{hull1987pricing} and \cite{heston1993closed}, and the stochastic-volatility-jump models of \cite{Duffie2000Transform} and \cite{pan2002jump}. Intuitively, the same reasoning explains the widespread adoption of the risk-neutral measure $\q$ for contingent claims valuation, even though the SDF-based approach and the numeraire change approach can also be used.

Next, consider those contingent claim models under which $G_H/G_T$ is stochastic and $\X_T$ is constant in equation \eqref{eq:priceQIT}. These models comprise the dynamic term structure models for pricing default-free bonds with the numeraire $G$ as the money market account. Since a large empirical literature exists on these term structure models which estimates the state variables processes under both $\pp$ and $\q$ measures, the state variable processes under the $\Ro$ measure are readily available. Moreover, the techniques to derive the analytical solutions of the expected future price using the dynamic change of measure method given by equation \eqref{eq:priceQIT} follow as straightforward extensions of the existing techniques given by \cite{dai2000specification}, \cite{Ahn2002Quadratic}, and \cite{collin2008identification}, as shown in Section \ref{sec:Rmeasure}.\ref{sec:atsm}.

Finally, consider the contingent claim models under which $G_H/G_T$ and $\X_T$ are not independently distributed. For these models we propose the $\Reeo$ measure in the next section, which breaks the expectation of the future price of the claim in equation \eqref{eq:priceQIT} into a product of two expectations. Doing this simplifies the derivation of the analytical solution of the expected future price of the claim considerably under models that use numeraires other than the money market account for pricing the claim.

\subsection{The ${\mathbb{R}}^{*}_1$ Measure}
Consider separating the single expectation of the expression $\Y_H \X_T/\Y_T$ in equation \eqref{eq:priceQIT} into a product of \emph{two} expectations, and define another equivalent expectation measure $\Reeo$, as follows:
\begin{align}\label{eq:priceQIT1}
	\E_t\left[ \X_H\right]&= \E^\Ree_t\left[\Y_H \right] \E_t^\Reeo\left[\frac{\X_T}{\Y_T} \right] \nonumber \\
	&= \E^\p_t \left[\Y_H   \right] \E_t^\Reeo\left[\frac{\X_T}{\Y_T}  \right],
\end{align}
where for any $\mathcal{F}_T$-measurable variable $Z_T$, $\Reeo$ is defined with respect to $\Ree$ as follows:
\begin{align}\label{eq:QIT12def}
	\E_t^{\Reeo}[Z_T] = \E_t^\Ree\left[\frac{\Y_H}{\E_t^\Ree\left[\Y_H\right]} \cdot Z_T\right].
\end{align}

The existence of the $\Ree$ measure guarantees the existence of the $\Reeo$ measure as defined in equation \eqref{eq:QIT12def}. We construct the $\Reeo$ measure with respect to $\p$, using the following lemma.

\begin{lemma}\label{lemma:reeo}
	For a fixed $H$ with $0\leq H\leq T$, define a process $\LHso$ as
	\begin{align*}
		\begin{aligned}
			\LHso &  =\left\{\begin{array}{ll}
				\displaystyle \frac{\Y_H}{\E_0^\p\left[\Y_H\right]}\cdot \frac{\Ls}{\LH}, &  \text{if} \ \  {H\leq s\leq T}, \\
				\vspace{-0.8em}& \\
				\displaystyle  \frac{\E_s^\p\left[\Y_H\right]}{\E_0^\p\left[\Y_H\right]}, &    \text{if } \ \ {0\leq s<H}.
			\end{array}\right.
		\end{aligned}
	\end{align*}
	Let
	\begin{align*}%
		\Reeo(A) \triangleq \int_A \mathcal{L}_{1T}^*(H; \omega)\D \pp(\omega) \ \ \text{for all $A \in \mathcal{F}_T$},
	\end{align*}
	then $\Reeo$ is a probability measure equivalent to $\pp$,  and $\LHso$ is the Radon-Nikod\'{y}m derivative process of $\Reeo$ with respect to $\pp$.
\end{lemma}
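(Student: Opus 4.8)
There are three things to establish: that $\Reeo$ is a probability measure, that it is equivalent to $\pp$, and that the process $\LHso$ displayed in the lemma is its Radon--Nikod\'{y}m derivative process with respect to $\pp$. The plan is to obtain $\Reeo$ from $\Ree$ (Lemma \ref{lemma:Ree}) by a further change of measure driven by the $\F_H$-measurable density $\Y_H/\E_0^\Ree[\Y_H]$, and then to transfer everything to $\pp$ via the chain rule for Radon--Nikod\'{y}m derivative processes. Throughout I use that $\Y$, being a traded numeraire, satisfies $\Y_H>0$ $\pp$-a.s., and the standing integrability $\E_0^\pp[\Y_H]<\infty$ that is needed for the expected future price in \eqref{eq:priceQIT1} to be well defined. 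A preliminary observation: the density process of $\Ree$ relative to $\pp$ takes the value $1$ at time $H$ (the two branches in Lemma \ref{lemma:Ree} meet at $\LH/\LH=1$), so $\Ree$ and $\pp$ agree on $\F_H$ (equivalently, Proposition \ref{prop:Ree1}\eqref{Ree:num1}); in particular $\E_0^\Ree[\Y_H]=\E_0^\pp[\Y_H]$, which identifies the normalizing constant in \eqref{eq:QIT12def} with the one in the statement.

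First I would define $\Reeo$ on $\F_T$ by $\D\Reeo/\D\Ree|_{\F_T}=\Y_H/\E_0^\Ree[\Y_H]$, which is precisely \eqref{eq:QIT12def} at $t=0$; since this density is strictly positive $\Ree$-a.s. with $\Ree$-mean $1$, $\Reeo$ is a probability measure equivalent to $\Ree$, hence to $\pp$. Its Radon--Nikod\'{y}m derivative process relative to $\Ree$ is $\E_s^\Ree[\Y_H]/\E_0^\Ree[\Y_H]$, which for $0\le s<H$ equals $\E_s^\pp[\Y_H]/\E_0^\pp[\Y_H]$ (using $\Ree=\pp$ on $\F_H\supseteq\F_s$) and for $H\le s\le T$ equals the constant $\Y_H/\E_0^\pp[\Y_H]$ (as $\Y_H$ is then $\F_s$-measurable). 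Multiplying this, for each $s$, by the Radon--Nikod\'{y}m derivative process of $\Ree$ with respect to $\pp$ from Lemma \ref{lemma:Ree} --- namely $1$ on $[0,H)$ and $\Ls/\LH$ on $[H,T]$ --- and invoking the chain rule $\D\Reeo/\D\pp|_{\F_s}=(\D\Reeo/\D\Ree|_{\F_s})(\D\Ree/\D\pp|_{\F_s})$ (valid because $\D\Reeo/\D\Ree|_{\F_s}$ is $\F_s$-measurable) produces exactly the two-branch formula for $\LHso$. Setting $s=T$ gives $\LHTo=\Y_H\LT/(\E_0^\pp[\Y_H]\LH)$, which confirms the representation $\Reeo(A)=\int_A\LHTo\,\D\pp$ and that $\LHso$ is the Radon--Nikod\'{y}m derivative process of $\Reeo$ with respect to $\pp$.

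As a self-contained alternative I would instead check directly that the process $\LHso$ in the lemma is a strictly positive $\pp$-martingale equal to $1$ at $s=0$, and then let $\LHTo$ define $\Reeo$. Positivity follows from $\Y_H>0$ and $\LT,\LH>0$; the martingale property on $[0,H)$ is that of $s\mapsto\E_s^\pp[\Y_H]$, and on $[H,T]$ it reduces to the $\pp$-martingale property of $\Ls$ once the $\F_H$-measurable factors $\Y_H/(\E_0^\pp[\Y_H]\LH)$ are pulled outside. The only step with genuine content is the gluing at $H$: for $0\le u<H\le s\le T$ one must verify $\E^\pp[\mathcal{L}_{1s}^*(H)\mid\F_u]=\mathcal{L}_{1u}^*(H)$, i.e.,
\begin{align*}
	\E^\pp\!\left[\frac{\Y_H}{\E_0^\pp[\Y_H]}\,\frac{\Ls}{\LH}\,\Big|\,\F_u\right]=\frac{\E_u^\pp[\Y_H]}{\E_0^\pp[\Y_H]},
\end{align*}
which follows by conditioning first on $\F_H$, using $\E^\pp[\Ls\mid\F_H]=\LH$ to collapse the bracket to $\Y_H$, and then conditioning down to $\F_u$. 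This tower-property step --- resting on the $\pp$-martingale property of $L^*$ --- is where the argument bites; the rest is bookkeeping. Given the martingale property together with $\E_0^\pp[\LHTo]=1$, $\Reeo$ is a well-defined probability measure, strict positivity of $\LHTo$ yields its equivalence to $\pp$, and $\LHso$ is the associated Radon--Nikod\'{y}m derivative process.
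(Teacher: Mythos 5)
Your proposal is correct. Your ``self-contained alternative'' is, in structure and substance, exactly the paper's proof: it verifies the $\pp$-martingale property of $\LHso$ by the same three-case split (with the only nontrivial case being the gluing across $H$, handled by conditioning on $\F_H$ and using $\E^\pp[\Ls\mid\F_H]=\LH$), and then checks strict positivity of $\LHTo$ and $\E[\LHTo]=1$. Your first route---defining $\Reeo$ from $\Ree$ via the $\F_H$-measurable density $\Y_H/\E_0^\Ree[\Y_H]$ and composing Radon--Nikod\'{y}m derivative processes---is a valid repackaging that the paper does not use in the formal proof but which makes the consistency with the definition in equation \eqref{eq:QIT12def} (verified separately by the paper in equation \eqref{eq:RandR1}) immediate; the key identification $\E_0^\Ree[\Y_H]=\E_0^\pp[\Y_H]$ that it rests on is the same fact the paper invokes at the end of the proof of Theorem \ref{thm:main2}. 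Either route suffices; no gaps.
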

\begin{proof}
	See Appendix \ref{app:thm1} (Section \ref{app:thm2}).
\end{proof}

The next theorem proves that the definition of the $\Reeo$ measure given in Lemma \ref{lemma:reeo} is consistent with the expectation result given by equation \eqref{eq:priceQIT1}.

\begin{theorem}\label{thm:main2}
	For all $0\leq t\leq H\leq T$, the time $t$ expectation of the claim's future price $F_H$ can be computed by using the equivalent expectation measure $\Reeo$ defined by Lemma \ref{lemma:reeo}, as shown in equation \eqref{eq:priceQIT1}.
\end{theorem}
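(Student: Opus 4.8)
The plan is to reduce Theorem~\ref{thm:main2} to Theorem~\ref{thm:main} plus an identification of the measure $\Reeo$ built in Lemma~\ref{lemma:reeo} with the one implicitly defined by \eqref{eq:QIT12def}. By Theorem~\ref{thm:main} we already have $\E_t[\X_H]=\E_t^\Ree[\Y_H\X_T/\Y_T]$, so it suffices to show $\E_t^\Ree[\Y_H\X_T/\Y_T]=\E_t^\p[\Y_H]\,\E_t^\Reeo[\X_T/\Y_T]$ for the $\Reeo$ with Radon-Nikod\'ym derivative process $\LHso$. The first thing I would record is that the Radon-Nikod\'ym derivative process of $\Ree$ with respect to $\p$ equals $1$ on $[0,H)$ (Lemma~\ref{lemma:Ree}) and $\Y_H$ is $\mathcal{F}_H$-measurable, so Proposition~\ref{prop:Ree1}\eqref{Ree:num1} gives $\E_t^\Ree[\Y_H]=\E_t^\p[\Y_H]$ for $t\leq H$; this is precisely what converts the $\Ree$-prefactor in the first line of \eqref{eq:priceQIT1} into the $\p$-prefactor in the second line.

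Next I would compute $\E_t^\Reeo[Z_T]$ for an arbitrary $\mathcal{F}_T$-measurable $Z_T$ straight from the Lemma~\ref{lemma:reeo} construction via the abstract Bayes rule: $\E_t^\Reeo[Z_T]=\E_t^\p[\LHTo Z_T]/\LHto$. Inserting the explicit branches of $\LHso$ — for $0\leq t<H$ one has $\LHto=\E_t^\p[\Y_H]/\E_0^\p[\Y_H]$ and $\LHTo=(\Y_H/\E_0^\p[\Y_H])(\LT/\LH)$ — the normalizing constant $\E_0^\p[\Y_H]$ cancels and leaves $\E_t^\Reeo[Z_T]=\E_t^\p[\Y_H(\LT/\LH)Z_T]/\E_t^\p[\Y_H]$. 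On the other side, the defining relation \eqref{eq:QIT12def}, combined with $\E_t^\Ree[\Y_H]=\E_t^\p[\Y_H]$ and the $\Ree$-to-$\p$ change of measure (again using $\LHt=1$ for $t<H$, so $\E_t^\Ree[\,\cdot\,]=\E_t^\p[(\LT/\LH)\,\cdot\,]$), yields exactly $\E_t^\Ree[(\Y_H/\E_t^\Ree[\Y_H])Z_T]=\E_t^\p[\Y_H(\LT/\LH)Z_T]/\E_t^\p[\Y_H]$. Hence the two descriptions of $\Reeo$ agree. The boundary case $t=H$ I would check separately: there $\LHHo=\Y_H/\E_0^\p[\Y_H]$ and $\E_H^\Ree[\Y_H]=\Y_H$, and both sides collapse to $\E_H^\Qem[Z_T]$ by Proposition~\ref{prop:Ree1}\eqref{Ree:num2}. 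Taking $Z_T=\X_T/\Y_T$ and stringing the pieces together gives $\E_t[\X_H]=\E_t^\Ree[\Y_H]\,\E_t^\Reeo[\X_T/\Y_T]=\E_t^\p[\Y_H]\,\E_t^\Reeo[\X_T/\Y_T]$, which is \eqref{eq:priceQIT1}.

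I expect the only genuine subtlety — the hard part — to be the bookkeeping around $s=H$, where the two branches of $\LHso$ must be reconciled (the lower branch $\E_s^\p[\Y_H]/\E_0^\p[\Y_H]$ matches the upper branch $\Y_H/\E_0^\p[\Y_H]$ at $s=H$ in the appropriate limiting/martingale sense), together with the standing integrability requirements: $\Y_H$ must be $\p$-integrable with $\E_t^\p[\Y_H]>0$ so that every normalization, and the very definition \eqref{eq:QIT12def}, is legitimate. Given Lemmas~\ref{lemma:Ree}--\ref{lemma:reeo} and Proposition~\ref{prop:Ree1}, the remaining algebra — cancelling $\E_0^\p[\Y_H]$ and passing between $\Ree$- and $\p$-expectations — is routine.
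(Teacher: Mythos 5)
Your proposal is correct and follows essentially the same route as the paper's proof: the chain of Radon--Nikod\'ym manipulations you perform in identifying $\E_t^\Reeo[Z_T]=\E_t^\p\!\left[\Y_H(\LT/\LH)Z_T\right]/\E_t^\p[\Y_H]$ and matching it against \eqref{eq:QIT12def} is exactly the paper's computation in \eqref{eq:RandR1}, and your reduction via Theorem \ref{thm:main} with $Z_T=\X_T/\Y_T$ reproduces the paper's direct derivation from the third equality of \eqref{eq:thmprf}, including the key identity $\E_t^\Ree[\Y_H]=\E_t^\p[\Y_H]$. The only difference is organizational (you route explicitly through the $\Ree$ measure and treat the $t=H$ boundary case separately), not substantive.
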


\begin{proof}
	The proof of equation \eqref{eq:priceQIT1} follows from the third equality in equation \eqref{eq:thmprf} and applying Lemma \ref{lemma:reeo}, as follows:
	\begin{align*}%
		\begin{aligned}
			\E_t \left[ \X_H\right] &= \E^\pp_t\left[\frac{\LT}{\LH}\cdot \Y_H \frac{\X_T}{\Y_T} \right]  = \E_t^\p\left[\Y_H\right]\E^\pp_t\left[\frac{\Y_H}{\E_t^\p\left[\Y_H\right]}\cdot\frac{\LT}{\LH}\cdot  \frac{\X_T}{\Y_T} \right] \\
			& = \E_t^\p\left[\Y_H\right]\E^\pp_t\left[\frac{\Y_H/\E_0\left[\Y_H\right]\cdot \LT/\LH}{\E_t^\p\left[\Y_H\right]/\E_0\left[\Y_H\right]} \frac{\X_T}{\Y_T}\right] \\
			& =  \E_t^\p\left[\Y_H\right] \E^\pp_t\left[\frac{\LHTo}{\LHto}\cdot \frac{\X_T}{\Y_T} \right] \\
			& =  \E_t^\p\left[\Y_H\right]\E^\Reeo_t \left[ \frac{\X_T}{\Y_T} \right].
		\end{aligned}
	\end{align*}
	
	In addition, it is easy to verify that the mathematical construction of the $\Reeo$ measure in Lemma \ref{lemma:reeo} is consistent with its proposed definition in equation \eqref{eq:QIT12def} as follows.
	
	For any $\mathcal{F}_T$-measurable variable $Z_T$, we have
	\begin{align}\label{eq:RandR1}
		\begin{aligned}
			\E_t^{\Reeo}\left[Z_T\right] &= \E_t^\p\left[\frac{\LHTo}{\LHto}Z_T\right] =  \E_t^\p\left[\frac{\Y_H/\E_0\left[\Y_H\right]\cdot \LT/\LH}{\E_t^\p\left[\Y_H\right]/\E_0\left[\Y_H\right]}Z_T\right]\\
&=\E_t^\p\left[\frac{\LT}{\LH}\cdot\frac{\Y_H}{\E_t^\p\left[\Y_H\right]}Z_T\right] =\E_t^\p\left[\frac{\LHT}{\LHt}\cdot\frac{\Y_H}{\E_t^\p\left[\Y_H\right]}Z_T\right]\\
  &= \E_t^\Ree\left[\frac{\Y_H}{\E_t^\p\left[\Y_H\right]}  Z_T\right]= \E_t^\Ree\left[\frac{\Y_H}{\E_t^\Ree\left[\Y_H\right]}  Z_T\right],
		\end{aligned}
	\end{align}
	where the last equality follows since $\E_t^\Ree\left[\Y_H\right]=\E_t^\p\left[\LHH/\LHt\cdot\Y_H\right]=\E_t^\p\left[\Y_H\right]$.
\end{proof}

Internet Appendix Section \ref{app:DiffEEM}.\ref{app:propofReeo} gives some general properties of the $\Reeo$ measure that provide additional insights into its relationship with the physical measure $\pp$ and the equivalent martingale measure $\Qem$. According to these properties, the conditional $\Reeo$ probabilities at any time $s$ on or after time $H$, of the events at time $T$, are the same as the corresponding conditional $\Ree$ probabilities, as well as the conditional $\Qem$ probabilities of those events. However, the conditional $\Reeo$ probabilities at any time $t$ until time $H$, of any events at time $H$, are not the same as the corresponding $\pp$ probabilities of those events, due to an adjustment term equal to $\Y_H/{\E_t^\p\left[\Y_H\right]}$, related to the numeraire $\Y$.

The separation of the expectation of the expression $\Y_H \X_T/\Y_T$ in equation \eqref{eq:priceQIT} into a product of two expectations in equation \eqref{eq:priceQIT1} is useful for obtaining the analytical solution of the expected future price of a claim when $\Y_H/\Y_T$ and $\X_T$ are not distributed independently of each other.\footnote{Of course, equation \eqref{eq:priceQIT1} is not the only way to separate the expectation of the expression $\Y_H \X_T/\Y_T$ in equation \eqref{eq:priceQIT} into a product of two or more expectations. Another way may be to separate this expectation into a product of the expectations of $\Y_H/\Y_T$ and $\X_T$ using another EEM measure obtained as yet another transformation of the $\Ree$ measure, similar to the transformation shown in equation \eqref{eq:QIT12def}. This can be done as follows: Let
	$\E_t \left[ \X_H\right]=\E^\Ree_t \left[\frac{\Y_H}{\Y_T} \right] \E^{\mathbb{R}^{*}_2}_t\left[\X_T \right],$ where for any $\mathcal{F}_T$-measurable variable $Z_T$, $\Reet$ is defined with respect to $\Ree$ as follows: $$\E_t^{\Reet}[Z_T]=\E_t^\Ree\left[\frac{\Y_H/\Y_T}{\E_t^\Ree\left[\Y_H/\Y_T\right]}\cdot Z_T\right].$$
	We show in Internet Appendix Section \ref{app:DiffEEM}.\ref{sec:Reet} that if the numeraires are restricted to be either the money market account or the pure discount bond, then the $\Reeo$ measure subsumes the $\Reet$ measure. A final way may be to separate this expectation into a product of three expectations of $\Y_H$, $\X_T$, and $1/\Y_T$ using even more EEMs. However, we find that the expectations given under the two EEMs $\Ree$ and $\Reeo$ by equations \eqref{eq:priceQIT} and \eqref{eq:priceQIT1}, respectively, are \emph{sufficient} for obtaining the analytical solutions of the expected future prices of contingent claims under all models in finance that admit analytical solutions to the current prices of these claims using the change of measure method.} We demonstrate applications of equation \eqref{eq:priceQIT1} using the \citeauthor{Merton1973Theory}'s (\citeyear{Merton1973Theory}) stochastic interest rate-based option pricing model and \citeauthor{collin2001credit}'s (\citeyear{collin2001credit}) corporate bond pricing model in Section \ref{sec:R1Tmeasure}, and \citeauthor{margrabe1978value}'s (\citeyear{margrabe1978value}) asset-exchange option model in Internet Appendix Section \ref{iapp:margrabe}.

Under many contingent claim models, such as \cite{Merton1973Theory},  \cite{margrabe1978value}, and \cite{Grabbe1983The}, the solution of $\E_t^\Reeo\left[{\X_T}/{\Y_T}  \right]$ in equation \eqref{eq:priceQIT1}
requires the distribution of the normalized asset price ${S_T}/{\Y_T}$, under the $\Reeo$ measure, where $S_T$ is the time-$T$ price of the asset underlying the claim. The expectation of ${S_T}/{\Y_T}$ under the $\Reeo$ measure can be solved by substituting $\X_T$ = $S_T$ in equation \eqref{eq:priceQIT1}, as follows:
\begin{align}\label{eq:priceQIT1R}
	\E_t^\Reeo\left[\frac{S_T}{\Y_T} \right] = \frac{\E^\pp_t\left[S_H\right]}{\E^\pp_t\left[\Y_H \right]}.
\end{align}
The derivations of the analytical solutions to the expected future prices of contingent claims
are simplified considerably using the result in equation \eqref{eq:priceQIT1R}, since the physical expectations of $S_H$ and $G_H$ are typically available in closed-form.

The $\Ree$ measure given by Lemma \ref{lemma:Ree} defines three numeraire-specific EEMs: $\Ro$, $\RT$, and $\RS$, corresponding to the three numeraires: the money market account, the $T$-maturity pure discount bond, and any tradable asset $S$ other than the money market account and the $T$-maturity pure discount bond, respectively.
Similarly, the $\Reeo$ measure given by Lemma \ref{lemma:reeo} defines three numeraire-specific EEMs: $\Ro_1$, $\RTo$, and $\RSo$, corresponding to the above three numeraires, respectively.
Out of these six EEMs, we find that only three EEMs are useful for obtaining the analytical solutions of the expected future prices of contingent claims, given as follows:

\begin{enumerate}[i)]
	\item  the $\Ro$ measure from the $\Ree$ class, obtained as a generalization of the risk-neutral measure $\q$ given by \cite{black1973pricing}, \cite{cox1976valuation}, and \cite{harrison1979martingales};
	\item the $\RTo$ measure from the $\Reeo$ class, obtained as a generalization of the forward measure $\QT$ given by \cite{Merton1973Theory},\footnote{The call price expectation under the $\QT$ measure is consistent with the \citeauthor{Merton1973Theory}'s (\citeyear{Merton1973Theory}) PDE for the call price using the Feynman-Kac theorem.} \cite{geman1989importance}, \cite{jamshidian1989exact}, and \cite{geman1995changes};
	\item the $\RSo$ measure from the $\Reeo$ class, obtained as a generalization of the equivalent martingale measure $\QS$, where the numeraire that defines $\QS$ is any tradable asset $S$ other than the money market account and the $T$-maturity pure discount bond (e.g., see \cite{margrabe1978value}, \cite{geman1995changes}, \cite{rady1997option},
	and the examples in \cite{vecer2011stochastic}, among others).
\end{enumerate}

We demonstrate various applications of the $\Ro$ measure and the $\RTo$ measure in Sections \ref{sec:Rmeasure} and \ref{sec:R1Tmeasure}, respectively. Due to space constraints, we demonstrate the application of the $\RSo$ measure in Internet Appendix Section \ref{iapp:margrabe} using the example of the \cite{margrabe1978value} exchange option pricing model.

\begin{center}
\section{The $\mathbb{R}$ Measure} \label{sec:Rmeasure}\vspace{-1em}
\end{center}

Assuming similar form for the physical and risk-neutral processes, the $\Ro$ measure can be used for the derivation of the analytical solutions of expected future prices of contingent claims under all models that admit an analytical solution to the claim's price using the risk-neutral measure $\q$. The following corollary to Theorem \ref{thm:main} can be used for all applications of the $\Ro$ measure.

\begin{corollary}\label{coro:three}
	When the numeraire asset is the money market account, i.e., $\Y = B = \ue^{\int_0^{\cdot} r_u\D u}$,  then $\Ree$ = $\Ro$, and the expected future price of the contingent claim $\Y$, can be given using equation \eqref{eq:priceQIT} as
	\begin{align}\label{eq:rep1}
		\begin{aligned}
			\E_t \left[ \X_H\right]  &= \E^\Ro_t \left[\ue^{-\int_H^T r_u\D u} \X_T\right].
		\end{aligned}
	\end{align}
\end{corollary}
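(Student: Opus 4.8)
The plan is to obtain this corollary as an immediate specialization of Theorem \ref{thm:main}. Recall that Theorem \ref{thm:main}, together with Lemma \ref{lemma:Ree}, establishes the identity $\E_t[\X_H] = \E_t^\Ree[\Y_H \X_T/\Y_T]$ for an \emph{arbitrary} traded numeraire $\Y$ with associated EMM $\Qem$ and Radon-Nikod\'{y}m derivative process $\LT$. Nothing in that statement restricts the choice of numeraire, so the first step is simply to note that equation \eqref{eq:priceQIT} applies verbatim once we fix $\Y$.

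Next I would make the canonical choice $\Y = B = \ue^{\int_0^{\cdot} r_u \D u}$, the money market account. Two standard facts are used here: the money market account is a strictly positive traded asset, hence an admissible numeraire in the sense of \cite{geman1995changes}; and the EMM $\Qem$ associated with \emph{this particular} numeraire is, by definition, the risk-neutral measure $\q$. Consequently, the measure $\Ree$ constructed in Lemma \ref{lemma:Ree} from the pair $(\pp,\Qem)=(\pp,\q)$ is precisely the measure denoted $\Ro$ in the introduction, i.e. $\Ree = \Ro$.

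It then only remains to evaluate the payoff multiplier. Since $B_s = \ue^{\int_0^s r_u \D u}$,
\[
  \frac{\Y_H}{\Y_T} = \frac{B_H}{B_T} = \ue^{\int_0^H r_u \D u \,-\, \int_0^T r_u \D u} = \ue^{-\int_H^T r_u \D u},
\]
and substituting $\Y_H/\Y_T = \ue^{-\int_H^T r_u \D u}$ into equation \eqref{eq:priceQIT} yields equation \eqref{eq:rep1}.

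There is essentially no hard step: the entire content of the corollary is carried by Theorem \ref{thm:main}, and the only thing requiring care is the bookkeeping observation that the EMM attached to the money market account is the risk-neutral measure, so that the abstract object $\Ree$ coincides with the concretely constructed $\Ro$ of the binomial example. If one wished to be fully self-contained, one could instead verify directly from Lemma \ref{lemma:Ree} that, with $\LT = \D\q/\D\pp|_{\mathcal{F}_T}$, the process $\LLHs$ is the Radon-Nikod\'{y}m density of $\Ro$ with respect to $\pp$; but this adds nothing beyond re-reading the lemma with the numeraire specialized to $B$.
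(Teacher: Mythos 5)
Your proposal is correct and follows exactly the route the paper takes: the paper's own proof is the one-line observation that equation \eqref{eq:rep1} is a direct application of equation \eqref{eq:priceQIT}, and your additional bookkeeping (that the EMM attached to $B$ is $\q$, so $\Ree=\Ro$, and that $\Y_H/\Y_T=\ue^{-\int_H^T r_u\D u}$) just makes that specialization explicit.
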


\begin{proof}
	Equation \eqref{eq:rep1} is a direct application of equation \eqref{eq:priceQIT}.
\end{proof}

The $\Ro$ measure can be used to derive the expected future prices of contingent claims under the following models:\footnote{We apply the $\Ro$ measure to some of the models in the main part of the paper, and provide this measure under a multidimensional  Brownian  motion in Internet Appendix Section \ref{app:DiffEEM}.\ref{app:multiBM}, which can be applied to other models not explicitly considered in the main part of the paper. Some of these applications are given in the Appendix and the Internet Appendix. In our ongoing/future research \cite[see][]{nawalkha2022sharpe}, we also derive the $\Ro$ measure for L\'{e}vy processes.} the equity option pricing models of \cite{black1973pricing}, \cite{cox1976valuation}, \cite{merton1976option}, \cite{hull1987pricing}, and \cite{rubinstein1991pay}; the corporate debt pricing models of \cite{Merton1974ON}, \cite{Black1976VALUING}, and \cite{Leland1996Optimal}; the term structure models of \citet{dai2000specification, Dai2002Expectation}, \citet{Ahn2002Quadratic}, \cite{leippold2003design}, and \cite{collin2008identification} for pricing default-free bonds; the credit default swap pricing model of \cite{longstaff2005corporate}; the VIX futures and the variance swaps models of \cite{dew2017price}, \cite{eraker2017explaining}, \cite{johnson2017risk} and \cite{cheng2019vix}\footnote{\cite{eraker2017explaining} and \cite{cheng2019vix} consider only VIX futures and not variance swaps. \cite{dew2017price} and \cite{johnson2017risk} consider both VIX futures and variance swaps.}; the currency option model of \cite{garman1983foreign}; and the various Fourier transform-based contingent claim models of \cite{heston1993closed}, \cite{Duffie2000Transform}, \cite{carr2002fine}, and others, summarized in Section \ref{sec:Rmeasure}.\ref{sec:Rtransforms}, among others.

We present the $\Ro$ measure using the examples of the \cite{black1973pricing} option pricing model, the affine term structure models of \cite{dai2000specification} and \cite{collin2008identification}, and the Fourier transform-based model of \cite{Duffie2000Transform} in this section.

\subsection{The Black-Scholes Model}\label{sec:BSR}

The geometric Brownian motion for the asset price process under the \cite{black1973pricing} model is given as follows:
\begin{align}\label{eq:GBM}
	\frac{\D S_s}{S_s} &=\mu\D s + \sigma \D \Wp,
\end{align}
where $\mu$ is the drift, $\sigma$ is the volatility, and $\Wpo$ is the Brownian motion under the physical measure $\pp$. The Girsanov theorem allows the transformation of Brownian motions under two equivalent probability measures. Defining $\Wq = \Wp + \int_0^s{\gamma} \D u$, where $\gamma = \left(\mu-r\right)/\sigma$ is the market price of risk (MPR), the asset price process under the risk neutral measure $\q$ is
\begin{align}\label{eq:GBMQ}
	\frac{\D S_s}{S_s} &=r\D s + \sigma \D \Wq.
\end{align}

Using Lemma \ref{lemma:Ree}, the Brownian motion under the $\Ro$ measure is derived in Appendix \ref{prof:BMR} as follows:
\begin{align*}%
	\Wr = \Wp + \int_0^s\I_{\left\{u\geq H\right\}}{\gamma} \D u.
\end{align*}

Substituting $\Wp$ from the above equation into equation \eqref{eq:GBM} gives the asset price process under the $\R$ measure as follows:
\begin{align}\label{eq:GBMR}
	\frac{\D S_s}{S_s} &=\left(r+\sigma\gamma\I_{\{s< H\}}\right)\D s + \sigma \D \Wr\\
	\label{eq:GBMRa}
	&=\left(\mu\I_{\{s< H\}} + r\I_{\{s\geq H\}}\right)\D s + \sigma \D \Wr, \tag{\theequation \textit{a}}
\end{align}
where $\I_{\left\{\cdot\right\}}$ is an indicator function which equals 1 if the condition is satisfied (and 0 otherwise).

A highly intuitive property of the $\R$ measure is that any stochastic process under this measure can be obtained by a simple inspection of that process under the $\pp$ measure and the $\q$ measure as follows: The stochastic process under the $\R$ measure is the physical stochastic process \emph{until before the horizon $H$}, and it becomes the risk-neutral process \emph{on or after the horizon $H$}. This property follows from Propositions \ref{prop:Ree1}\eqref{Ree:num1} and \ref{prop:Ree1}\eqref{Ree:num2} and is also consistent with conditions C1 and C2 given earlier. Thus, the asset price process in equation \eqref{eq:GBMRa} (which equals equation \eqref{eq:GBMR} since $\mu =r+\gamma\sigma$) can be obtained by a simple inspection of the asset price process under the physical measure and the risk-neutral measure, given by equations \eqref{eq:GBM} and \eqref{eq:GBMQ}, respectively. This insight is not limited to only the Black-Scholes model and holds more generally under any contingent claim model that uses the $\q$ measure for valuation. This insight is also consistent with the triangular relationship between the Brownian motions $\Wro$, $\Wqo$, and $\Wpo$, given as follows:
\begin{align*}
	\Wr = \left\{\begin{array}{ll}
		\displaystyle \int_0^H \D \Wpu + \int_H^s \D \Wqu, &\text{if} \ \  H \leq s\leq T, \\
		\displaystyle \int_0^s \D \Wpu, &\text{if} \ \  0\leq s < H. \\
	\end{array}\right.
\end{align*}
The Brownian motion under the $\Ro$ measure is the same as the Brownian motion under the $\pp$ measure from time 0 until time $H$, and then from time $H$ onwards, it evolves as the Brownian motion under the $\q$ measure. %

To demonstrate the usefulness of the $\Ro$ measure, consider a European call option $C$ on the asset, maturing at time $T$ with strike price $K$. Substituting $\X_T = C_T = \max (S_T - K, 0) = (S_T - K)^+$ as the terminal payoff from the European call option, equation \eqref{eq:rep1} in Corollary \ref{coro:three} gives
\begin{align}\label{eq:xtpricer}
	\E_t[C_H] = \E_t^\Ro\left[\ue^{-r(T-H)}(S_T - K)^+\right],
\end{align}
where $C_H$ is the price of the call option at time $H$. Since equation \eqref{eq:GBMRa} implies that $S_T$ is lognormally distributed under the $\Ro$ measure, the expected call price can be solved easily and is given in equation \eqref{eq:BSexp} in Appendix {\ref{app:BSsolution}}. While \cite{rubinstein1984simple} first derived the expected price formula under the Black-Scholes model by using the property of double integrals of normally distributed variables, our derivation is much simpler.\footnote{\cite{rubinstein1984simple} uses the following property of normal distribution:
	$$\int_{-\infty}^{\infty}\mathcal{N}(A+Bz)\frac{1}{\sqrt{2\pi}}\ue^{-z^2/2}\D z = \mathcal{N}\left(\frac{A}{\sqrt{1+B^2}}\right).$$
	
    This double integral property is hard to extend to other distributions.}

The Black-Scholes call price formula has also been used as a part of the solution of the option valuation models with more complex stochastic processes and/or terminal boundary conditions. For example, \cite{merton1976option} solves the call option price under a jump diffusion model using a weighted sum of an infinite number of Black-Scholes call prices, and \cite{rubinstein1991pay} solves the price of a forward-start call option as a constant share of the underlying asset's price, where the constant is determined using the Black-Scholes call price. These frameworks can be extended for computing the analytical solution of the expected future call price under these complex option valuation models by using the Black-Scholes expected future call price given by equations \eqref{eq:xtpricer} and \eqref{eq:BSexp}, \emph{as a part of their solutions}. We demonstrate this insight in Appendix {\ref{app:fsoexpectedBS}} by extending Rubinstein's (\citeyear{rubinstein1991pay}) framework to obtain the expected future price of a forward-start call option using the $\Ro$ measure.

\subsection{The Affine Term Structure Models}\label{sec:atsm}

Surprisingly, the analytical solution of the expected future price of a $T$-maturity pure discount bond at time $H$ (for all $t \leq  H \leq T$), has not been derived under any of the dynamic term structure models, such as the affine models of \cite{dai2000specification} and \cite{collin2008identification}, the quadratic model of \cite{Ahn2002Quadratic}, and the forward-rate models of \cite{Heath1992Bond}. Using the maximal affine term structure models of \cite{dai2000specification} as an expositional example, this section shows how such an expectation can be solved under the dynamic term structure models using the $\Ro$ measure. Substituting $F_T = P(T,T) = 1$ in equation {\eqref{eq:rep1}} of Corollary {\ref{coro:three}}, the time $t$ expectation of the future price of the pure discount bond at time $H$ can be given under the $\Ro$ measure as follows:
\begin{align}\label{eq:atsmprice0}
	\E_t\left[ P(H,T)\right] &=\E_t^{\R}\left[\exp\left(- \int_H^T r_u\D u \right) \right].
\end{align}

The above expectation can be generally solved under all dynamic term structure models that admit an analytical solution to the bond's current price,\footnote{The only condition required is that the functional forms of the market prices of risks should allow maintaining the similar admissible forms under both the $\pp$ and $\q$ measures, so that the admissible affine or quadratic form will also hold under the $\Ro$ measure (which is a hybrid of $\pp$ and $\q$ measures as shown by Proposition \ref{prop:Ree1}).} and is mathematically equivalent to the iterated expectation in equation \eqref{eq:priceQ}, with the money market account as the numeraire, given as follows:
\begin{align}\label{eq:atsmprice01}
	\E_t\left[ P(H,T)\right] &=\E_t^{\p}\left[\E_H^{\q}\left[\exp\left(- \int_H^T r_u\D u \right) \right]\right].
\end{align}

However, instead of using the above equation to obtain expected (simple) return, empirical researchers in finance have generally used expected \emph{log} returns based upon the following equation:\footnote{Expected log return is defined as $\E_t\left[ \ln \left(P(H,T)/P(t,T) \right) \right]$, and expected simple return is defined as $\E_t\left[P(H,T)-P(t,T)\right]/P(t,T)$, where $P(t,T)$ is the current price.}
\begin{align}\label{eq:atsmlnprice}
	\E_t\left[ \ln P(H,T)\right] &=\E_t^{\p} \left[ \ln \left(\E_H^{\q}\left[\exp\left(- \int_H^T r_u\D u \right) \right]\right)\right].
\end{align}

Analytical solutions of the \emph{exponentially} affine and \emph{exponentially} quadratic form exist for the inside expectation term $\E_H^{\q}\left[\exp\left(- \int_H^T r_u\D u \right) \right]$ in the above equation for the affine and the quadratic term structure models, respectively. Taking the \emph{log} of the exponential-form solution of this inside expectation term simplifies equation \eqref{eq:atsmlnprice} considerably since log(exp(x)) = x. %
Moreover, for different econometric reasons, researchers have used expected log returns based upon equation \eqref{eq:atsmlnprice}, instead of using expected simple returns based upon equation \eqref{eq:atsmprice01} for studying default-free bond returns. For example, \cite{Dai2002Expectation}, \cite{bansal2002term}, \cite{cochrane2005bond}, and \cite{eraker2015durable}, all use $\E_t^{\p}\left[\ln\left(P(H,T)/P(t,T) \right) \right]$ for the models studied in their papers.

We would like to make two important observations in the context of the above discussion. First, an analytical solution of $\E_t\left[ P(H,T)\right]$ given by equation \eqref{eq:atsmprice0}, can be generally derived using the $\Ro$ measure for any dynamic term structure model which admits an analytical solution for the current price of the zero-coupon bond using the $\q$ measure. Second, it is generally not possible to know the analytical solution of $\E_t\left[ P(H,T)\right]$ by knowing the analytical solution of $\E_t\left[ \ln \left(P(H,T)\right) \right]$, or vice versa, for most term structure models except over the infinitesimal horizon $H = t + dt$.\footnote{Ito's lemma can be used to obtain $\E_t\left[ P(H,T)\right]$ from $\E_t\left[ \ln \left(P(H,T)\right) \right]$, or vice-versa, when $H$ = $t$ + $dt$, where $dt$ is an infinitesimally small time interval.} Thus, even though the estimates of the expected log returns over a finite horizon have been obtained by many researchers (usually using one month returns), such estimates cannot be used by fixed income investors to compute the \emph{expected returns} of bonds and bond portfolios over a finite horizon, without making additional return distribution assumptions.

In order to derive the expected returns of default-free bonds, the following derives the expected future price of a pure discount bond at any finite horizon $H$ (for all $t \leq H \leq T$) using the affine term structure models (ATSMs) of \cite{dai2000specification} and \cite{collin2008identification}. \cite{collin2008identification} present a new method to obtain maximal ATSMs that uses infinitesimal maturity yields and their quadratic covariations as the globally identifiable state variables. The new method allows model-independent estimates for the state vector that can be estimated directly from yield curve data. When using three or less state variables---which is typical in empirical research on term structure models---the new approach of \cite{collin2008identification} is consistent with the general framework of \cite{dai2000specification} given below, but with some clear advantages in the estimation and interpretation of the economic state variables.\footnote{When using more than three state variables, the maximal affine models of \cite{collin2008identification} can have more identifiable parameters than the ``maximal'' model of \cite{dai2000specification}. For these cases, the framework presented in this paper can be generalized so that the generalized $A_m(N)$ class of maximal affine models can have $m$ square-root processes and $N-m$ Gaussian state variables, but with $M$ number of Brownian motions, such that $N \leq M$. Extending the \cite{dai2000specification} model so that the number of Brownian motions may exceed the number of state variables ensures that the extended model nests the more general maximal affine models of \cite{collin2008identification} when the number of state variables is more than three. For example, \cite{collin2008identification} present a maximal $A_2(4)$ model which requires 5 Brownian motions in order to be consistent with the \cite{dai2000specification} framework. Also, see the discussion in \citeauthor{collin2008identification} (\citeyear{collin2008identification}, pp. 764-765, and footnote 19).} This framework is also consistent with the unspanned stochastic volatility-based affine term structure models of \cite{collin2009can} with appropriate parameter restriction on the stochastic processes.

Assume that the instantaneous short rate $r$ is an affine function of a vector of $N$ state variables $ \mymat{Y} = (Y_{1}, Y_{2}, ... , Y_{N} )$ (which may be observed as in \cite{collin2008identification} or latent as in \cite{dai2000specification}), given as
\begin{align*}
	r_s = \delta_0 + \sum_{i=1}^{N} \delta_i Y_{is} \triangleq \delta_0 + \mymat{\delta_y}' \mymat{Y_s},
\end{align*}
where $\mymat{Y}$ follows an affine diffusion under the physical measure $\p$, as follows:
\begin{align}\label{eq:StatevarprocessP}
	\D \mymat{Y_s} = \mat{K} \left(\mymat{\Theta} - \mymat{Y_s} \right)\D s + \mymat{\Sigma} \sqrt{\mymat{V_s}} \D \mymat{\Wp}.
\end{align}
$\mymat{W^{\p}}$ is a vector of $N$ independent standard Brownian motions under $\p$, $\mymat{\Theta}$ is $N\times 1$ vector, $\mat{K}$ and $\mymat{\Sigma}$ are $N\times N$ matrices, which may be nondiagonal and asymmetric, and $\mymat{V}$ is a diagonal matrix with the $i$th diagonal element given by
\begin{align*}
	\begin{bmatrix} \mymat{V_s} \end{bmatrix}_{ii} = \alpha_i + \mymat{\beta_i}' \mymat{Y_s}.
\end{align*}
Assume that the market prices of risks, $\mymat{\gamma_s}$, are given by
\begin{align*}
	\mymat{\gamma_s} = \sqrt{\mymat{V_s}}\mymat{\gamma},
\end{align*}
where $\mymat{\gamma}$ is an $N \times 1$ vector of constants. Using the change of measure, the risk-neutral process for $\mymat{Y}$ is given as,
\begin{align}\label{eq:StatevarprocessQ}
	\D \mymat{Y_s} &= \left[\mat{K} \left(\mymat{\Theta} - \mymat{Y_s} \right)-\mymat{\Sigma} \mymat{V_s} \mymat{\gamma}\right]\D s + \mymat{\Sigma} \sqrt{\mymat{V_s}} \D \mymat{\Wq} \\ \label{eq:StatevarprocessQ2}
	& = \mat{K^*}\left(\mymat{\Theta^*} - \mymat{Y_s} \right)\D s + \mymat{\Sigma} \sqrt{\mymat{V_s}} \D \mymat{\Wq}, \tag{\theequation \textit{a}}
\end{align}
where $\mat{K^*} = \mat{K} +\mymat{\Sigma}\mymat{\Phi}$, $\mymat{\Theta^*} =\inv{\mat{K^*}}\left( \mat{K}\mymat{\Theta} - \mymat{\Sigma}\mymat{\psi}\right)$, the $i$th row of $\mymat{\Phi}$ is given by $\gamma_i \mymat{\beta_i}'$, and $\mymat{\psi}$ is an $N\times 1$ vector whose $i$th element is given by $\gamma_i\alpha_i$.

Using Lemma \ref{lemma:Ree} and Internet Appendix Section \ref{app:DiffEEM}.\ref{app:multiBM}, the stochastic process for $ \mymat{Y}$ under the $\RR$ measure is given as follows:
\begin{align}\label{eq:Statevarprocess}
	\D \mymat{Y_s} &= \left[\mat{K} \left(\mymat{\Theta} - \mymat{Y_s} \right)-\I_{\left\{s\geq H\right\}} \mymat{\Sigma} \mymat{V_s} \mymat{\gamma}\right]\D s + \mymat{\Sigma} \sqrt{\mymat{V_s}} \D \mymat{\Wr} \\ \label{eq:Statevarprocess2}
	&=\left[ \mat{K} \left(\mymat{\Theta} - \mymat{Y_s} \right) \I_{\{s< H\}} + \mat{K^*} \left(\mymat{\Theta^*} - \mymat{Y_s} \right)\I_{\{s\geq H\}}\right]\D s + \mymat{\Sigma} \sqrt{\mymat{V_s}} \D \mymat{\Wr}. \tag{\theequation \textit{a}}
\end{align}

Recall from the previous section that any stochastic process under the $\R$ measure is the physical stochastic process until before the horizon $H$, and it becomes the risk-neutral process on or after the horizon $H$. Thus, the state variable process under the $\RR$ measure in equation  \eqref{eq:Statevarprocess2} can also be obtained by a simple inspection of the state variable process under the $\pp$ measure given by equation \eqref{eq:StatevarprocessP}, and under the $\q$ measure given by equation \eqref{eq:StatevarprocessQ2}.

As shown by \cite{dai2000specification}, all $N$-factor ATSMs can be uniquely classified into $N+1$ non-nested subfamilies $\A_m(N)$, where $m$ = 0, 1,...,$N$, indexes the degree of dependence of the conditional variances on the number of state variables $\mymat{Y}$. The $\A_m(N)$ model must satisfy several parametric restrictions for admissibility as outlined in \cite{dai2000specification}. Under these restrictions, the expected future price of a $T$-maturity pure discount bond under the $\A_m(N)$ model can be given as
\begin{align}
	\E_t\left[ P(H,T)\right] &=\E_t^{\R}\left[\exp\left(- \int_H^T r_u\D u \right) \right] = \E_t^{\R}\left[\exp\left( - \int_H^T \left(\delta_0 + \mymat{\delta_y}' \mymat{Y_u}\right)\D u \right) \right]  \nonumber \\
	&=\exp\left(- \delta_0 (T-H) \right)\cdot\E_t^{\R}\left[\E_H^{\R}\left[\exp\left( - \int_H^T  \mymat{\delta_y}' \mymat{Y_u}\D u \right) \right]\right]. \label{eq:atsmiterated}
\end{align}

In order to compute the above expectation, we first consider a more general expectation quantity starting at time $H$ defined as
\begin{align*}
 \Psi^*(Y_H; \tau)& \triangleq\E_H^\R\left[\exp\left(-\mymat{b^*}'\mymat{Y_{H+\tau}}-\int_H^{H+\tau}\mymat{c^*}'\mymat{Y_u} \D u\right)\right] \\
& = \E_H^\q\left[\exp\left(-\mymat{b^*}'\mymat{Y_{H+\tau}}-\int_H^{H+\tau}\mymat{c^*}'\mymat{Y_u} \D u\right)\right].
\end{align*}
For any well-behaved $N\times 1$ vectors $\mymat{b^*}$, $\mymat{c^*}$, $\mymat{\Theta^*}$ and $N\times N$ matrix $\mat{K^*}$, according to the Feynman-Kac theorem, the expected value $\Psi^*(Y_H; \tau)$ should fulfill the following PDE:
\begin{align*}
	-\frac{\partial \Psi^*}{\partial \tau}   +  \frac{ \partial \Psi^*}{\partial Y} \mat{K^*} \left(\mymat{\Theta^*} - \mymat{Y} \right) + \frac{1}{2}\text{tr}\left[\frac{\partial^2\Psi^*}{\partial Y^2}\mymat{\Sigma}\mymat{V}\mymat{\Sigma}' \right] = \mymat{c^*}'\mymat{Y}\Psi^*,
\end{align*}
with a boundary condition $\Psi^*(Y_H; 0)=\exp\left(-\mymat{b^*}'\mymat{Y_H}\right)$. Then, we can easily show that the solution to this PDE is
\begin{align}\label{eq:atsmABQ}
	\Psi^*(Y_H; \tau)= \exp\left(- A^{(\mymat{b^*}, \mymat{c^*})}_{\mat{K^*},\mymat{\Theta^*}}(\tau)- \mymat{B^{(\mymat{b^*}, \mymat{c^*})}_{\mat{K^*},\mymat{\Theta^*}}(\tau)}'\mymat{Y_H}\right),
\end{align}
where $ A^{(\mymat{b^*}, \mymat{c^*})}_{\mat{K^*},\mymat{\Theta^*}}(\tau)$ and $ \mymat{B^{(\mymat{b^*}, \mymat{c^*})}_{\mat{K^*},\mymat{\Theta^*}}(\tau)}$ are obtained by solving the following ordinary differential equations (ODEs):
\begin{align}\label{eq:AmnABQ}
	\begin{aligned}
		\frac{\D A^{(\mymat{b^*}, \mymat{c^*})}_{\mat{K^*},\mymat{\Theta^*}}(\tau)      }{\D \tau} &= \mymat{\Theta^*}'\mat{K^*}'\mymat{B^{(\mymat{b^*}, \mymat{c^*})}_{\mat{K^*},\mymat{\Theta^*}}(\tau)} - \frac{1}{2}\sum_{i=1}^N \begin{bmatrix} \mymat{\Sigma}'\mymat{B^{(\mymat{b^*}, \mymat{c^*})}_{\mat{K^*},\mymat{\Theta^*}}(\tau)} \end{bmatrix}_{i}^2\alpha_i,  \\
		\frac{\D \mymat{B^{(\mymat{b^*}, \mymat{c^*})}_{\mat{K^*},\mymat{\Theta^*}}(\tau)}  }{\D \tau} &= - \mat{K^*}'\mymat{B^{(\mymat{b^*}, \mymat{c^*})}_{\mat{K^*},\mymat{\Theta^*}}(\tau)}- \frac{1}{2}\sum_{i=1}^N \begin{bmatrix} \mymat{\Sigma}'\mymat{B^{(\mymat{b^*}, \mymat{c^*})}_{\mat{K^*},\mymat{\Theta^*}}(\tau)} \end{bmatrix}_{i}^2\mymat{\beta_i} + \mymat{c},
	\end{aligned}
\end{align}
with the terminal conditions $A^{(\mymat{b^*}, \mymat{c^*})}_{\mat{K^*},\mymat{\Theta^*}}(0) = 0$, $\mymat{B^{(\mymat{b^*}, \mymat{c^*})}_{\mat{K^*},\mymat{\Theta^*}}(0)} =\mymat{b^*}$. $ A^{(\mymat{b^*}, \mymat{c^*})}_{\mat{K^*},\mymat{\Theta^*}}(\tau)$ and $ \mymat{B^{(\mymat{b^*}, \mymat{c^*})}_{\mat{K^*},\mymat{\Theta^*}}(\tau)}$ can be solved through numerical procedures such as Runge-Kutta. Furthermore, another similar expectation quantity $\Psi(Y_t; \tau)$ starting at time $t$ with $\tau<H-t$ satisfies
\begin{align}
	\Psi(Y_t; \tau)&\triangleq\E_t^\R\left[\exp\left(-\mymat{b}'\mymat{Y_{t+\tau}}-\int_t^{t+\tau}\mymat{c}'\mymat{Y_u} \D u\right)\right] = \E_t^\p\left[\exp\left(-\mymat{b}'\mymat{Y_{t+\tau}}-\int_t^{t+\tau}\mymat{c}'\mymat{Y_u} \D u\right)\right] \nonumber \\
	&= \exp\left(-A^{(\mymat{b}, \mymat{c})}_{\mat{K},\mymat{\Theta}}(\tau)- \mymat{B^{(\mymat{b}, \mymat{c})}_{\mat{K},\mymat{\Theta}}(\tau)}'\mymat{Y_t}\right), \label{eq:atsmABP}
\end{align}
where $A^{(\mymat{b}, \mymat{c})}_{\mat{K},\mymat{\Theta}}(\tau)$ and $\mymat{B^{(\mymat{b}, \mymat{c})}_{\mat{K},\mymat{\Theta}}(\tau)}$ are obtained similarly as in equation \eqref{eq:AmnABQ} with the parameters $(b^*, c^*, \mat{K^*},\mymat{\Theta^*})$ replaced with parameters $(b, c, \mat{K},\mymat{\Theta})$.

Now we are ready to express the solution to the expected future price of a $T$-maturity pure discount bond at time $H$ under the $\A_m(N)$ model as follows: From equation \eqref{eq:atsmiterated}, for all  $t\leq H\leq T$,
\begin{align}
	\E_t\left[ P(H,T)\right]	&=\exp\left(- \delta_0 (T-H) \right)\cdot\E_t^{\R}\left[\E_H^{\R}\left[\exp\left( - \int_H^T  \mymat{\delta_y}' \mymat{Y_u}\D u \right) \right]\right] \nonumber \\
	&=\exp\left(- \delta_0 (T-H) \right)\cdot\E_t^{\R}\left[\exp\left( - A^{(\mymat{b^*}, \mymat{c^*})}_{\mat{K^*},\mymat{\Theta^*}}(T-H) -  B^{(\mymat{b^*}, \mymat{c^*})}_{\mat{K^*},\mymat{\Theta^*}}(T-H)'\mymat{Y_H} \right) \right] \nonumber \\
	&=\exp\left(- \delta_0 (T-H) - A^{(\mymat{b^*}, \mymat{c^*})}_{\mat{K^*},\mymat{\Theta^*}}(T-H) - A^{(\mymat{{b}}, \mymat{c})}_{\mat{K},\mymat{\Theta}}(H-t)-\mymat{B^{(\mymat{{b}}, \mymat{c})}_{\mat{K},\mymat{\Theta}}(H-t)'\mymat{Y_t}} \right), \label{eq:atsmsolu}
\end{align}
with
$$
b^* = 0, \ \ c^* = \mymat{\delta_y}, \ \ b = B^{(\mymat{b^*}, \mymat{c^*})}_{\mat{K^*},\mymat{\Theta^*}}(T-H), \ \ c = 0.
$$
The second equality applies the result in equation \eqref{eq:atsmABQ}, while the last equality applies the result in equation \eqref{eq:atsmABP}.

Equation \eqref{eq:atsmsolu} can be used to obtain the expected future price of a pure discount bond under any specific ATSMs of \cite{dai2000specification} and \cite{collin2008identification}. {We demonstrate three examples in Internet Appendix Section \ref{app:someATSM}. Specifically,} Internet Appendix Sections \ref{app:someATSM}.\ref{app:Vas} and \ref{app:someATSM}.\ref{app:CIR} give the analytical solutions of the expected future price of a pure discount bond, under the Vasicek model and the CIR model, respectively. To our knowledge, these equations provide the first analytical solutions of the expected future price of a pure discount bond under the Vasicek and CIR models in the finance literature. Using a tedious derivation, Internet Appendix Section \ref{app:someATSM}.\ref{sec:A1rmodel} gives the expected future price of the pure discount bond under the $A_{1r}(3)$ model of \cite{dai2000specification}. Furthermore, Internet Appendix Section \ref{sec:qtsm} derives a general solution to the expected future price of a pure discount bond under the maximal quadratic term structure model (QTSM), as well as a specific solution under the QTSM3 model \cite[see][]{Ahn2002Quadratic}.

The $\RR$ measure can also be used to compute the expected yield \cite[e.g., see ][]{duffee2002term} at a future horizon $H$, under the $N$-factor ATSMs, as follows:
\begin{align*}
	\text{Expected Yield} =\frac{-\E_t\left[\ln P(H,T)\right]}{T-H}.
\end{align*}

Hence, to derive the expected yield, it suffices for us to obtain the key term $\E_t\left[\ln P(H,T)\right]$. By simplifying equation \eqref{eq:atsmlnprice} using the $\RR$ measure, we obtain $\E_t\left[\ln P(H,T)\right]$ under the $\A_m(N)$ model as follows:\footnote{By applying Proposition \ref{prop:Ree1}, $\E_t^\R\left[f\left(\E_H^\R\left[\Y_H \frac{\X_T}{\Y_T}\right]\right)\right]  = \E_t^\p \left[f\left(\E_H^\Qem\left[\Y_H \frac{\X_T}{\Y_T}\right]\right)\right]$ also holds for a function $f(x)$ of $x$. Here, we employ the log function $f(x) = \ln(x)$, use $\Qem = \q$, and assume the numeraire $G$ is the money market account.}
\begin{align*}
	\begin{aligned}
		\E_t\left[\ln P(H,T)\right] & = \E_t^\R\left[ \ln \E_H^{\R}\left[\exp\left(- \int_H^T r_u\D u \right) \right]\right] \\
		& =\E_t^\R\left[\ln \E_H^{\R}\left[\exp\left( - \int_H^T \left(\delta_0 + \mymat{\delta_y}' \mymat{Y_u}\right)\D u \right) \right]\right] \\
		&=  - \delta_0 (T-H) - A^{(\mymat{b^*}, \mymat{c^*})}_{\mat{K^*},\mymat{\Theta^*}}(T-H) -\mymat{B^{(\mymat{b^*}, \mymat{c^*})}_{\mat{K^*},\mymat{\Theta^*}}(T-H)'\E_t^\R\left[\mymat{Y_H} \right]},
	\end{aligned}
\end{align*}
with $b^* = \mymat{0}$, $c^* =\mymat{\delta_y} $, $\E_t^\R\left[\mymat{Y_H} \right] = \mymat{\Theta} + \ue^{-\mat{K}\left(H-t\right)
}\left(\mymat{Y_t} -\mymat{\Theta} \right)$, and $A^{(\mymat{b^*}, \mymat{c^*})}_{\mat{K^*},\mymat{\Theta^*}}(\tau)$ and $ \mymat{B^{(\mymat{b^*}, \mymat{c^*})}_{\mat{K^*},\mymat{\Theta^*}}(\tau)}$ are defined as in equation {\eqref{eq:AmnABQ}}. %

\subsection{The $R$-Transforms}\label{sec:Rtransforms}
In two path-breaking papers, \cite{bakshi2000spanning} and \cite{Duffie2000Transform} derive new transforms that extend the Fourier transform-based method of \cite{heston1993closed} for the valuation of a wide variety of contingent claims. Though \cite{Chacko2002Pricing} also independently develop a generalized Fourier transform-based method for the valuation of a variety of interest rate derivatives, their approach can be obtained as a special case of the more general transform-based method of \cite{Duffie2000Transform}. %
Since the specific transforms in these papers use the risk-neutral measure $\q$ to do valuation, %
we refer to these transforms as $Q$-transforms.

This section derives new $R$-transforms using the $\Ro$ measure to obtain the expected future prices of a wide variety of contingent claims that are priced using the $Q$-transforms
of \cite{bakshi2000spanning},  \cite{Duffie2000Transform}, and \cite{Chacko2002Pricing}.
The $R$-transforms nest the corresponding $Q$-transforms for the special case when $H=t$. The $R$-transforms can be used to derive analytical solutions of the expected future prices of contingent claims that are priced under the following models: the affine option pricing models of \cite{heston1993closed},  \cite{bates1996jumps, bates2000post}, \cite{bakshi1997empirical}, \cite{pan2002jump}, \cite{bakshi2000spanning}, \cite{Duffie2000Transform},  and \cite{Chacko2002Pricing}; the L\'{e}vy option pricing models of \citet{carr2002fine} and \cite{Carr2003The}; the forward-start option model of \cite{kruse2005pricing};
the bond option pricing models in the maximal affine classes by \cite{dai2000specification, Dai2002Expectation},  \cite{collin2008identification}, \cite{collin2009can}, and in the maximal quadratic class by \cite{Ahn2002Quadratic}, among others. \cite{nawalkha2022sharpe} obtain the analytical solutions of the expected future prices of equity options and a variety of interest rate derivatives using some of the above models.

As in \cite{Duffie2000Transform}, we define an affine jump diffusion state process on a probability space $\left(\Omega, \mathcal{F}, \mathbb{P}\right)$, and assume that $\mymat{Y}$ is a Markov process in some state space $D \in \mathcal{R}^N$, solving the stochastic differential equation
\begin{align*}%
	\D \mymat{Y_s} = \mymat{\mu(Y_s)}\D s + \mymat{\sigma(Y_s)} \D \mymat{\Wp} + \D \mymat{J_s},
\end{align*}
where $\mymat{\Wpo}$ is a standard Brownian motion in $\mathcal{R}^N$; $\mymat{\mu}:D\rightarrow\mathcal{R}^N$, $\mymat{\sigma}:D\rightarrow \mathcal{R}^{N\times N}$, and $\mymat{J}$ is a pure jump process whose jumps have a fixed probability distribution $\mymat{\nu}$ on $\mathcal{R}^N$ and arrive with intensity $\left\{\lambda(\mymat{Y_s}):s\geq 0\right\}$, for some $\lambda: D\rightarrow [0,\infty)$. The transition semi-group of the process $\mymat{Y}$ has an infinitesimal generator $\mathscr{D}$ of the L\'{e}vy type, defined at a bounded $\mathcal{C}^2$ function $f:D\rightarrow\mathcal{R}$, with bounded first and second derivatives, by
\begin{align*}
	\mathscr{D}f(\mymat{y})=\mymat{f_y(y)\mu(y)} + \frac{1}{2}\text{tr}\left[\mymat{f_{yy}(y)\sigma(y)\sigma(y)'}\right] +{\lambda(\mymat{y})}\int_{\mathcal{R}^N}\left[f(\mymat{y+j})-f(\mymat{y})\right]\D \mymat{\nu(j)}.
\end{align*}

As in \cite{Duffie2000Transform}, \cite{Duffie1996A}, and \cite{dai2000specification}, we assume that $\left(D, \mymat{\mu}, \mymat{\sigma}, \mymat{\lambda}, \mymat{\nu}\right)$ satisfies the joint restrictions, such that $\mymat{Y}$ is well defined, and the affine dependence of $\mymat{\mu}$, $\mymat{\sigma\sigma'}$, ${\lambda}$, and $r$ are determined by coefficients $(\mymat{k},\mymat{h},\mymat{l}, \mymat{\rho})$ defined by
\begin{itemize}
	\item $\mymat{\mu(y)} = \mymat{k_0} + \mymat{k_1y}$, for $\mymat{k}=(\mymat{k_0}, \mymat{k_1})\in \mathcal{R}^N \times \mathcal{R}^{N\times N}$.
	\item $\left(\mymat{\sigma(y)\sigma(y)'}\right)_{ij} =(\mymat{h_0})_{ij} + (\mymat{h_1})_{ij}\cdot \mymat{y}$, for $\mymat{h}=(\mymat{h_0}, \mymat{h_1})\in \mathcal{R}^{N\times N} \times \mathcal{R}^{N\times N\times N}$.
	\item  $\lambda(\mymat{y}) = l_0 + \mymat{l_1'y}$, for $\mymat{l}=(l_0, \mymat{l_1})\in \mathcal{R} \times \mathcal{R}^{N}$.
	\item  $r(\mymat{y}) = \rho_0 + \mymat{\rho_1'y}$, for $\mymat{\rho}=(\rho_0, \mymat{\rho_1})\in \mathcal{R} \times \mathcal{R}^{N}$.
\end{itemize}

For $\mymat{x}\in \mathcal{C}^N$, the set of $N$-tuples of complex numbers, we define $\theta(\mymat{x})=\int_{\mathcal{R}^N}\exp\left(\mymat{x'j}\right)\D \mymat{\nu(j)}$ whenever the integral is well defined. Hence, the ``coefficients'' $(\mymat{k}, \mymat{h}, \mymat{l}, \theta)$ capture the distribution of $\mymat{Y}$, and a ``characteristic'' $\mymat{\chi} = (\mymat{k}, \mymat{h}, \mymat{l}, \theta, \mymat{\rho})$ captures both the distribution of $\mymat{Y}$ as well as the effects of any discounting. Furthermore, we assume the state vector $\mymat{Y}$ to be an affine jump diffusion with coefficients $(\mymat{k^*}, \mymat{h^*}, \mymat{l^*}, \theta^*)$ under the risk-neutral measure $\q$, and the relevant characteristic for risk-neutral pricing is then  $\mymat{\chi^*} = (\mymat{k^*}, \mymat{h^*}, \mymat{l^*}, \theta^*, \mymat{\rho^*})$.

\cite{Duffie2000Transform} define a transform $\phi: \mathcal{C}^N\times \mathcal{R}_+\times \mathcal{R}_+  \rightarrow \mathcal{C}$ of $\mymat{Y_T}$ conditional on $\mathcal{F}_t$, when well defined for all $t\leq T$, as\footnote{Notably, when $z= i u$ (where $i = \sqrt{-1}$), this $\phi(z;t,T)$ becomes the $Q$-transform of \cite{bakshi2000spanning} (see equation (6)), defined as the characteristic function of the state-price density.}
\begin{align}\label{eq:Qtransform}
	\begin{aligned}
		\phi(\mymat{z}; t, T) &\triangleq \E_t^\q\left[\exp{\left(-\int_t^T r(\mymat{Y_u})\D u\right)}\exp\left( \mymat{z'Y_T}\right)\right].
	\end{aligned}
\end{align}
The current prices of various types of contingent claims can be computed conveniently using the above transform. Since this transform is derived to value the current price of contingent claims, we refer to it as the $Q$-transform. To extend the $Q$-transform and compute the expected future prices of contingent claims, we define an $R$-transform $\phi^R: \mathcal{C}^N \times \mathcal{R}_+\times \mathcal{R}_+ \times \mathcal{R}_+ \rightarrow \mathcal{C}$ of $\mymat{Y_T}$ conditional on $\mathcal{F}_t$, when well defined for all $t\leq H\leq T$, as%
\begin{align}\label{eq:Rtransform}
	\phi^R(\mymat{z}; t, T, H) &\triangleq \E_t^\RR\left[\exp{\left(-\int_H^T r(\mymat{Y_u})\D u\right)}\exp\left( \mymat{z'Y_T}\right)\right]. %
\end{align}

Proposition \ref{prop:Rtrans} in Appendix \ref{app:soluRandExtT} obtains the solution of the $R$-transform $\phi^R$.  For the special case of $H=t$, the $R$-transform and its solution given by equations \eqref{eq:Rtransform} and \eqref{eq:Rtransformsol}, reduce to the $Q$-transform and its solution, respectively in  \citeauthor{Duffie2000Transform} (\citeyear{Duffie2000Transform}, equations 2.3 and 2.4). \cite{nawalkha2022sharpe} use the $R$-transform to derive the analytical solutions of the expected future prices of equity options under the stochastic-volatility-double-jump (SVJJ) model of \cite{Duffie2000Transform} and the CGMY model of \cite{carr2002fine}, and interest rate derivatives under the multifactor affine jump diffusion model of \cite{Chacko2002Pricing}. As an extension of the $R$-transform given in equation \eqref{eq:Rtransform}, Internet Appendix Section \ref{app:fsoexpected} proposes a new \emph{forward-start $R$-transform}, which can be used to obtain the expected future price of a forward-start option under affine-jump diffusion processes.

While the $Q$-transform given in equation \eqref{eq:Qtransform} is sufficient for the valuation of many contingent claims---for example, standard equity options, discount bond options, caps and floors, exchange rate options, chooser options, digital options, etc.---certain pricing problems like Asian option valuation or default-time distributions are easier to solve using the extended transform given by equation (2.13) in \cite{Duffie2000Transform}. The following proposes an \emph{extended} $R$-transform corresponding to their extended transform, which we refer to as the \emph{extended} $Q$-transform. We define the extended $R$-transform $\varphi^R:  \mathcal{R}^N \times \mathcal{C}^N  \times \mathcal{R}_+\times \mathcal{R}_+ \times \mathcal{R}_+ \rightarrow \mathcal{C}$ of $\mymat{Y_T}$ conditional on $\mathcal{F}_t$, when well defined for $t\leq H\leq T$, as
\begin{align}\label{eq:Rtransform2}
	\varphi^R(\mymat{v}, \mymat{z}; t, T, H) &\triangleq \E_t^\RR\left[\exp{\left(-\int_H^T r(\mymat{Y_u})\D u\right)}\left(v'Y_T\right)\exp\left( \mymat{z'Y_T}\right)\right]. %
\end{align}

Proposition \ref{prop:Rtransext} in Appendix \ref{app:soluRandExtT} gives the solution to the extended $R$-transform $\varphi^R$. For the special case of $H=t$, the extended $R$-transform and its solution given by equations \eqref{eq:Rtransform2} and \eqref{eq:Rtransformsol2} reduce to the extended $Q$-transform and its solution, respectively in \citeauthor{Duffie2000Transform} (\citeyear{Duffie2000Transform}, equations 2.13 and 2.14).\footnote{An application of the extended $R$-transform for the derivation of the analytical solution of the expected future price of an Asian option on interest rates is given by \cite{nawalkha2022sharpe}.}

\begin{center}
\section{The ${\mathbb{R}}^{T}_1$ Measure} \label{sec:R1Tmeasure} \vspace{-1em}
\end{center}

This section considers different applications of the $\RTo$ measure, which is perhaps the next most important EEM after the $\Ro$ measure, and is derived as a generalization of the forward measure $\QT$. Assuming similar forms for the stochastic processes under the physical measure $\pp$ and the forward measure $\QT$, the $\RTo$ measure leads to analytical solutions of expected future prices of contingent claims under all models that admit an analytical solution to the claim's price using the forward measure. The following corollary to Theorem \ref{thm:main2} can be used for all applications of the $\RTo$ measure.

\begin{corollary}\label{coro:three2}
	When the numeraire asset is a $T$-maturity pure discount bond, i.e., $\Y = P(\cdot,T)$, then $\Reeo$ = $\RTo$, and the expected future price of the contingent claim $\Y$, can be given using equation \eqref{eq:priceQIT1} as
	\begin{align}\label{eq:rep2}
		\begin{aligned}
			\E_t \left[ \X_H\right]   & =   \E_t^\p\left[P(H,T) \right]\E_t^{\RTo}\left[\X_T \right].
		\end{aligned}
	\end{align}
	Let $S_T$ be the terminal value of the asset underlying the claim, and $V$ = $S/P(\cdot,T)$ define the normalized asset price. Then the expectation of $V_T$ under the $\RTo$ measure can be computed as follows:
	\begin{align}\label{eq:priceQIT1RM}
		\E_t^\RTo\left[V_T\right] = \frac{\E^\pp_t\left[S_H\right]}{\E_t^\p\left[P(H,T) \right]}.
	\end{align}
\end{corollary}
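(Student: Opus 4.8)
The plan is to derive both displays as immediate specializations of Theorem~\ref{thm:main2}, taking the generic numeraire $\Y$ to be the $T$-maturity pure discount bond price process $P(\cdot,T)$. First I would check that $P(\cdot,T)$ qualifies as an admissible numeraire in the sense of Section~\ref{sec:model}: it is a traded asset, strictly positive on $[0,T]$, and $P(T,T)=1$. Consequently the EMM $\Qem$ attached to this numeraire is the $T$-forward measure $\QT$, and by Lemma~\ref{lemma:reeo} the associated EEM $\Reeo$ is exactly the measure we denote $\RTo$, with Radon-Nikod\'{y}m derivative process obtained from the formula in Lemma~\ref{lemma:reeo} upon substituting $\Y_s=P(s,T)$.

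For equation~\eqref{eq:rep2}, I would substitute $\Y_H=P(H,T)$ and $\Y_T=P(T,T)=1$ into equation~\eqref{eq:priceQIT1}. The normalized terminal payoff $\X_T/\Y_T$ then collapses to $\X_T$, and the leading factor $\E_t^\Ree[\Y_H]$ is already rewritten as $\E_t^\p[\Y_H]=\E_t^\p[P(H,T)]$ in the second line of \eqref{eq:priceQIT1} (equivalently, this follows from Proposition~\ref{prop:Ree1}\eqref{Ree:num1}, since $P(H,T)$ is $\F_H$-measurable). This yields $\E_t[\X_H]=\E_t^\p[P(H,T)]\,\E_t^{\RTo}[\X_T]$ directly.

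For equation~\eqref{eq:priceQIT1RM}, I would apply the just-established equation~\eqref{eq:rep2} to the particular claim whose terminal payoff equals $S_T$, namely the underlying asset itself, for which $\X_H=S_H$. Since $V_T=S_T/P(T,T)=S_T$, equation~\eqref{eq:rep2} reads $\E_t^\p[S_H]=\E_t^\p[P(H,T)]\,\E_t^{\RTo}[S_T]=\E_t^\p[P(H,T)]\,\E_t^{\RTo}[V_T]$; dividing by $\E_t^\p[P(H,T)]>0$ gives the claimed identity. Equivalently, this is the $\Y=P(\cdot,T)$ instance of equation~\eqref{eq:priceQIT1R}.

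The only genuinely delicate point---and the step I would state most carefully---is the first one: justifying that $P(\cdot,T)$ may serve as a numeraire and that its EMM is the forward measure. This presumes a model in which the zero-coupon bond is a strictly positive traded asset over all of $[0,T]$ and in which the change of measure underlying Lemma~\ref{lemma:reeo} is well defined (for instance, an integrability/Novikov-type condition on the relevant Radon-Nikod\'{y}m derivative process). Once that is granted, the remainder is a one-line substitution into Theorems~\ref{thm:main} and~\ref{thm:main2}; there is no further computation.
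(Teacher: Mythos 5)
Your proposal is correct and matches the paper's own proof, which simply states that equations \eqref{eq:rep2} and \eqref{eq:priceQIT1RM} are direct applications of equations \eqref{eq:priceQIT1} and \eqref{eq:priceQIT1R} with $\Y = P(\cdot,T)$ and $P(T,T)=1$. Your additional remarks on the admissibility of the bond as numeraire and the identification of $\Qem$ with the forward measure $\QT$ are consistent with the paper's setup but not part of its stated proof.
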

\begin{proof}
	Equations \eqref{eq:rep2} and \eqref{eq:priceQIT1RM} are direct applications of equations \eqref{eq:priceQIT1} and \eqref{eq:priceQIT1R}, respectively, with $\Y = P(\cdot,T)$ and $P(T,T) = 1$.
\end{proof}

The $\RTo$ measure can be used to derive the expected future prices of contingent claims under the following models:\footnote{We apply the $\RTo$ measures to some of the models in the main part of the paper, and provide this measure under a multidimensional Brownian  motion in Internet Appendix Section \ref{app:DiffEEM}.\ref{app:Rforward}, which can also be applied to other models not explicitly considered in the main part of the paper.} the stochastic interest rate-based equity  option pricing model of \cite{Merton1973Theory} and its extensions; the stochastic interest rate-based corporate debt pricing models of \cite{longstaff1995simple}, \cite{Jarrow1997A} and \cite{collin2001credit}; various term structure models for pricing bond options and caps, such as \citet{dai2000specification, Dai2002Expectation}, \cite{collin2008identification}; \citet{Ahn2002Quadratic},  \cite{leippold2003design}; \cite{Heath1992Bond}, \cite{miltersen1997closed}, \cite{brace1997market}, and \cite{jamshidian1997libor}; and the currency option pricing models of \cite{Grabbe1983The}, \cite{AMIN1991Pricing}, and \cite{hilliard1991currency}, among others.

We present the $\RTo$ measure using the examples of the  \cite{Merton1973Theory} option pricing model and the \cite{collin2001credit} corporate debt pricing model in Sections \ref{sec:R1Tmeasure}.\ref{sec:R1TmeasureMerton} and \ref{sec:R1Tmeasure}.\ref{sec:CDG1}, respectively.  Section \ref{sec:R1Tmeasure}.\ref{sec:recov} extends the theoretical framework of \cite{breeden1978prices} to obtain the $\RTo$ density and the physical density of the underlying asset using non-parametric and semi-parametric option return models.

\subsection{The Merton Model} \label{sec:R1TmeasureMerton}
As in \cite{Merton1973Theory}, we assume the asset price $S$ is described by
\begin{align}\label{eq:MertonS}
	\frac{\D S_s}{S_s} &= \mu(s)\D s + \sigma(s) \D W_{1s}^{\p},
\end{align}
where $W_{1s}^{\p}$ is the Brownian motion under $\p$, $\mu(s)$ is the instantaneous expected return which may be stochastic, and $\sigma(s)$ is the instantaneous volatility which is assumed to be deterministic.

For a $T$-maturity pure discount bond $P(\cdot, T)$, we assume its price follows
\begin{align}\label{eq:MertonP}
	\frac{\D P(s,T)}{P(s,T)} &= \mu_P(s,T)\D s + \sigma_P(s,T) \D W_{2s}^{\p},
\end{align}
where $W_{2s}^{\p}$ is the Brownian motion associated with the $T$-maturity pure disocunt bond, which has a deterministic correlation with $W_{1s}^{\p}$, given by the coefficient $\rho(s)$;\footnote{While \citeauthor{Merton1973Theory} (\citeyear{Merton1973Theory}, equation (25)) assumes a constant correlation, we allow the asset return's correlation with the bond return to change \emph{deterministically} as the bond maturity declines over time. Doing this also allows \emph{bond option models} based on the Gaussian multifactor term structure models to be nested in the framework presented in this section by assuming that the asset underlying the option contract is a pure discount bond maturing at any time after $T$.\label{footnote:Footref}} $\mu_P(s,T)$ is the instantaneous expected return, which may be stochastic; and $\sigma_P(s,T)$ is the instantaneous volatility which is assumed to be deterministic.

Consider a European call option $C$ written on the asset $S$ with a strike price of $K$, and an option expiration date equal to $T$. Using the $T$-maturity pure discount bond as the numeraire, the expected future price of this option is given by equation \eqref{eq:rep2} in Corollary \ref{coro:three2}, with $\X_T = C_T = \left(S_T - K\right)^+$ as the terminal payoff given as follows:
\begin{align}\label{eq:priceQIT1Merton}
	\begin{aligned}
		\E_t \left[ C_H\right] &= \E^\p_t \left[P(H,T)\right] \E_t^\RTo\left[\left(S_T - K\right)^+\right] \\
		&= \E^\p_t \left[P(H,T)\right] \E_t^\RTo\left[\frac{\left(S_T - K\right)^+}{P(T,T)}\right] \\
		&= \E^\p_t \left[P(H,T)\right] \E_t^\RTo\left[\left(V_T - K\right)^+\right],
	\end{aligned}
\end{align}
where $V = S/P(\cdot, T)$ is the asset price normalized by the numeraire.

The derivation of an analytical solution of the expected future call price \citep[of a similar form as in][]{Merton1973Theory} using equation \eqref{eq:priceQIT1Merton} requires that the asset price normalized by the bond price be distributed lognormally under the $\RTo$ measure. Based upon Theorem \ref{thm:main2}, and equations \eqref{eq:MertonS} and \eqref{eq:MertonP}, \emph{sufficient} assumptions for this to occur are:
\begin{enumerate}[i)]
	\item the physical drift of the asset price process is of the form $\mu$(s) = $r_s$ + $\gamma$(s), where the risk premium $\gamma$(s) is deterministic, and
	\item the short rate process $r_s$, and the bond price process in equation \eqref{eq:MertonP} are consistent with the various multifactor Gaussian term structure models \cite[see][] {dai2000specification,Heath1992Bond}, such that the physical drift of the bond price process is of the form $\mu_P(s,T)$ = $r_s$ + $\gamma_P(s,T)$, where the risk premium $\gamma_P(s,T)$ is deterministic.
\end{enumerate}

Without specifying a specific Gaussian term structure model for the short rate process it is not possible to express the asset price process and the bond price process under the $\RTo$ measure. However, even without knowing the term structure model, the two assumptions given above allow the derivation of the analytical solution of the expected future call price, which is very similar in its form to the \citeauthor{Merton1973Theory}'s (\citeyear{Merton1973Theory}) original formula. Our solution also generalizes Merton's formula to hold under multifactor Gaussian term structure models for the special case of $H = t$.\footnote{\citeauthor{Merton1973Theory}'s (\citeyear{Merton1973Theory}, footnote 43) original model assumes a constant (and not deterministic) correlation between the asset price process and the bond price process, which is consistent with only single-factor Gaussian term structure models.}

Assume that based upon equations \eqref{eq:MertonS} and \eqref{eq:MertonP}, $\ln V_T$ is normally distributed under both the $\pp$ measure and the $\QT$ measure with the same variance equal to $v_p^2$.\footnote{It is well-known that $V=S/P(\cdot, T)$ is a martingale under the $\QT$ measure. This implicitly follows from \citeauthor{Merton1973Theory}'s (\citeyear{Merton1973Theory}) PDE using Feynman Kac theorem, and explicitly by applying the $\QT$ measure as in \cite{geman1995changes}. Since the change of measure from $\pp$ to $\QT$ changes only the deterministic drift, the variance of $\ln V_T$ remains the same under both measures.} Using the Radon-Nikod\'{y}m derivative process of $\RTo$ with respect to $\pp$ given in Lemma \ref{lemma:reeo}, and the two assumptions given above, $\ln V_T$ is also distributed normally under the $\RTo$ measure with the same variance
$v_p^2$. The change of measure from $\pp$ to either $\QT$ or to $\RTo$, changes the drift of the $\ln V$ process from a given deterministic drift to another deterministic drift, which does not change the variance of $\ln V_T$.\footnote{As an example, see the case of the Merton model with the single-factor Vasicek model in Internet Appendix Section \ref{app:MertonVasicek}.} Thus, we can express the expectation of $\ln V_T$ under the $\RTo$ measure
as follows:
\begin{align}\label{eq:ETRolnT}
	\E_t^\RTo\left[\ln V_T\right] =\ln \E_t^\RTo\left[V_T\right] - \frac{1}{2}\mathrm{Var}_t^\RTo\left[\ln V_T\right] =\ln \frac{\E_t^\p\left[S_H\right]}{\E_t^\p\left[P(H,T)\right]} - \frac{1}{2}v_p^2,
\end{align}
where the second equality follows from equation \eqref{eq:priceQIT1RM} of Corollary \ref{coro:three2}.

Simplifying equation \eqref{eq:priceQIT1Merton}, gives
\begin{align}%
	\begin{aligned}
		\E_t \left[ C_H\right] & = \E^\p_t\left[P(H,T)\right] \E_t^\RTo\left[V_T \I_{\left\{\ln V_T>\ln K\right\}} \right] - K \E^\p_t\left[P(H,T)\right] \E_t^\RTo\left[\I_{\left\{\ln V_T>\ln K\right\}} \right] \\ \nonumber
		& = \E^\p_t\left[P(H,T)\right]\E_t^\RTo\left[V_T\right] \E_t^\RTohat\left[\I_{\left\{\ln V_T>\ln K\right\}} \right] - K \E^\p_t\left[P(H,T)\right] \E_t^\RTo\left[\I_{\left\{\ln V_T>\ln K\right\}} \right] \\\nonumber
		& = \E_t^\p\left[S_H\right] \E_t^\RTohat\left[\I_{\left\{\ln V_T>\ln K\right\}} \right] - K \E^\p_t\left[P(H,T)\right] \E_t^\RTo\left[\I_{\left\{\ln V_T>\ln K\right\}} \right],
	\end{aligned}
\end{align}
where $\RTohat$ is an adjusted probability measure defined by $  \E_t^{\RTohat}[Z_T] = \E_t^\RTo\left[\left({V_T}/\E_t^\RTo\left[V_T\right]\right) \cdot Z_T\right]$ for a given $\mathcal{F}_T$-measurable variable $Z_T$, and the last equality follows since $ \E^\p_t\left[P(H,T)\right]\E_t^\RTo\left[V_T\right]=\E_t^\p\left[S_H\right]$ using equation \eqref{eq:priceQIT1RM}. The solution to the expected future call price follows
from the above equation, and is given as  %
\begin{align}\label{eq:Mertonoptexp}
	\E_t[C_H] =\E_t\left[S_H\right]  \mathcal{N}\left(\hat{d}_1\right) - K\E_t\left[P(H,T)\right] \mathcal{N}\left( \hat{d}_2\right),
\end{align}
where
\begin{align*}
	\hat{d}_1 &= \frac{1}{v_p}\ln\frac{\E_t\left[S_H\right]}{\E_t\left[P(H,T)\right]K } + \frac{v_p}{2}, \ \ \quad
	\hat{d}_2 = \frac{1}{v_p}\ln\frac{\E_t\left[S_H\right]}{\E_t\left[P(H,T)\right]K } - \frac{v_p}{2}, \\
	v_p &= \sqrt{\int_t^T \left[\sigma(u)^2-2\rho(u)\sigma(u)\sigma_P(u,T) +\sigma_P(u,T)^2 \right]\D u}.
\end{align*}
The above analytical solution depends on four variables: $\E_t\left[S_H\right]$, $\E_t\left[P(H,T)\right]$, $v_p$, and $K$ (the option expiration, $T - t$, is implicitly included in $\E_t\left[P(H,T)\right]$ and $v_p$), and reduces to the \citeauthor{rubinstein1984simple}'s (\citeyear{rubinstein1984simple}) formula given in Appendix \ref{app:BSsolution} (see equation \eqref{eq:BSexp}), assuming a constant asset return volatility, a constant asset drift, and a constant short rate. The analytical solution also nests the \citeauthor{Merton1973Theory}'s (\citeyear{Merton1973Theory}) solution for the current call price for the special case of $H=t$, under which the $\RTo$ measure becomes the $\QT$ forward measure  \cite[see][]{geman1989importance}, as well as the solution of the expected future price of a call option written on a pure discount bond under all multifactor Gaussian term structure models.%
\footnote{See footnote \ref{footnote:Footref}.}

Under the two assumptions given above, the definition of $v_p$ remains the same as under the \citeauthor{Merton1973Theory}'s (\citeyear{Merton1973Theory}) model, and the specific solutions of $\E_t\left[S_H\right]$ and $\E_t\left[P(H,T)\right]$ can be derived with the appropriate specifications of the asset price process and the bond price process in equations \eqref{eq:MertonS} and \eqref{eq:MertonP}, respectively. These solutions can be shown to be consistent with various multifactor asset pricing models, such as the \citeauthor{merton1973intertemporal}'s (\citeyear{merton1973intertemporal}) ICAPM or
the APT  \cite[see][]{ross1976arbitrage, connor1989intertemporal}, which allow stochastic interest rates but with the additional restrictions of non-stochastic volatilities and non-stochastic risk premiums.\footnote{While the theoretical frameworks of \citeauthor{merton1973intertemporal} (\citeyear{merton1973intertemporal}, equation 7) and \citeauthor{connor1989intertemporal} (\citeyear{connor1989intertemporal}, equation 1) allow stochastic volatilities and stochastic risk premiums in an intertemporal setting, the principal components-based or the macrovariables-based empirical estimations of the unconditional ICAPM and the unconditional intertemporal APT assume constant factor volatilities and constant risk premiums.} For example, Internet Appendix Section \ref{app:MertonVasicek} derives a full solution of the expected call price given by equation \eqref{eq:Mertonoptexp}, which is consistent with the Merton's (\citeyear{merton1973intertemporal}) 2-factor ICAPM with the Vasicek's (\citeyear{Vasicek1977An}) short rate process.

The expected option price solution in equation \eqref{eq:Mertonoptexp} can be obtained by simply replacing the current prices $S_t$ and $P(t,T)$ with the expected future prices $\E_t\left[S_H\right]$ and $\E_t\left[P(H,T)\right]$, respectively, everywhere in the \citeauthor{Merton1973Theory}'s (\citeyear{Merton1973Theory}) original formula.\footnote{Of course, we also allow a \emph{deterministic} correlation term in the Merton's formula to allow for multiple Gaussian factors.} Interestingly, this handy feature of the new analytical solution applies to even other option pricing models with deterministic volatilities and deterministic risk premiums for the underlying asset return processes, including:
i) the currency option price models of \cite{Grabbe1983The}, \cite{AMIN1991Pricing}, and \cite{hilliard1991currency}, in which the expected future price of the option written on the currency exchange rate $S$ (the domestic price of 1 unit of foreign currency) can be obtained by replacing the domestic currency-based current prices of the foreign pure discount bond and the domestic pure discount bond, that is, $S_tP_f(t,T)$ and $P_d(t,T)$, with the expected future prices  $\E_t\left[S_HP_f(H,T)\right]$ and $\E_t\left[P_d(H,T)\right]$, respectively, everywhere in the option price solution under these models; and ii) the \citeauthor{margrabe1978value}'s (\citeyear{margrabe1978value}) exchange option model, in which the expected future price of the option to exchange asset $S_{1}$ for asset $S_{2}$, can be obtained by replacing the current prices $S_{1t}$ and $S_{2t}$ with the expected future prices $\E_t\left[S_{1H}\right]$ and $\E_t\left[S_{2H}\right]$, respectively, everywhere in the \citeauthor{margrabe1978value}'s (\citeyear{margrabe1978value}) option price solution (see Internet Appendix Section \ref{iapp:margrabe}).

\subsection{The Collin-Dufresne and Goldstein Model}\label{sec:CDG1}

This section uses the EEM $\RTo$ to obtain the expected future price of a corporate bond under the \cite{collin2001credit} (CDG hereafter) structural model. The CDG model allows the issuing firm to continuously adjust its capital structure to maintain a stationary mean-reverting leverage ratio. The firm's asset return process and the short rate process under the CDG model are given as follows:
\begin{align*}
	\begin{aligned}
		\frac{\D S_s}{S_s}&= (r_s + \gamma^S \sigma)\D s + \sigma \left(\rho \D W_{1s}^{\p}+ \sqrt{1-\rho^2}\D W_{2s}^{\p}\right), \\
		\D r_s &= \alpha_r \left(m_r - r_s\right)\D s + \sigma_r \D W_{1s}^{\p},
	\end{aligned}
\end{align*}
where $\alpha_r, m_r, \sigma_r, \sigma>0$, $-1\leq\rho\leq1$, $\gamma^S \in \mathbb{R}$, $Z_{1}^{\p}$ and $Z_{2}^{\p}$ are independent Brownian motions under the physical measure $\p$. The market prices of risks associated with the two Brownian motions are given as
\begin{align*}
	\begin{aligned}
		\gamma_{1s} &= \gamma_r , \\
		\gamma_{2s} &= \frac{1}{\sqrt{1-\rho^2}}\left(-\rho \gamma_r + \gamma^S\right).
	\end{aligned}
\end{align*}

CDG assume that the face value of firm's total debt $F^D$ follows a stationary mean reverting process given as\footnote{The \cite{longstaff1995simple} model obtains as a special case of the CDG model when the total debt of the firm is constant.}
\begin{align*}
	\frac{\D F^D_s}{F^D_s} &= \lambda \left( \ln S_s - \nu - \phi r_s -\ln F^D_s \right) \D s.
\end{align*}

By defining the log-leverage ratio as $l \triangleq \ln (F^D/S) $, and using It\^{o}'s lemma, we obtain the physical process for the log-leverage ratio as
\begin{align*}
	\D l_s = \lambda \left( \bar{l} - \frac{\sigma\gamma^S}{\lambda}- \left( \frac{1}{\lambda} + \phi\right) r_s - l_s \right) \D s - \sigma\left(\rho \D W_{1s}^{\p}+ \sqrt{1-\rho^2} \D W_{2s}^{\p}\right),
\end{align*}
where $\bar{l}\triangleq \frac{\sigma^2}{2\lambda} - \nu$.

Then using Lemma \ref{lemma:reeo} and Internet Appendix Section \ref{app:DiffEEM}.\ref{app:Rforward}, the processes of the state variables $\left\{ l, r \right\}$ under the $\RTo$ measure are given as
\begin{align}\label{eq:CDGlandr}
	\begin{aligned}
		\D l_s &=  \lambda\left(\bar{l} - \left( \frac{1}{\lambda} + \phi\right) r_s -  l_s + \frac{\rho\sigma\sigma_r}{\lambda} B_{\alpha_r}{(T-s)} -\left( \frac{\rho\sigma\sigma_r}{\lambda} B_{\alpha_r}{(H-s)} + \frac{\sigma\gamma^S}{\lambda} \right)\I_{\{s< H\}} \right) \D s \\
	&\quad	 - \sigma\left(\rho \D \ZTone + \sqrt{1-\rho^2} \D \ZTtwo\right),\\
		\D r_s &= \alpha_r\left(m_r -\frac{ \sigma_r \gamma_r}{\alpha_r}  - r_s - \frac{\sigma_r^2}{\alpha_r}B_{\alpha_r}{(T-s)}+ \left(\frac{\sigma_r^2}{\alpha_r}B_{\alpha_r}{(H-s)}+ \frac{ \sigma_r \gamma_r}{\alpha_r} \right) \I_{\{s< H\}}\right)\D s  \\
& \quad + \sigma_r \D \ZTone,
	\end{aligned}
\end{align}
where $B_{\alpha}{(\tau)} = (1/\alpha)(1-\ue^{-\alpha \tau})$.

Consider a risky discount bond issued by the firm with a promised payment of $1$ at the bond maturity date $T$. Under the CDG framework, all bonds issued by the firm default simultaneously when the log-leverage ratio $l$ hits a pre-specified constant $\ln K$ from below.  In the event of default, the bondholder receives a fraction $1-\omega$ times the present value of a default-free zero coupon bond that pays 1 at the maturity date $T$. Under these payoff assumptions, the current price of a risky discount bond $D(t, T)$ maturing at time $T$ is given by \cite{collin2001credit} as
\begin{align*}%
	D(t,T) = P(t,T)\left(1 -  \omega\E_t^{\QT} \left[\I_{\left\{\tau_{[t,T]}<T\right\}}\right]\right),
\end{align*}
where $\QT$ is the forward measure, the variable $\tau_{[t,T]}$ is the first passage time to default during the time period $t$ to $T$, and $\E_t^{\QT} \left[\I_{\left\{\tau_{[t,T]}<T\right\}}\right]$ is the $\QT$ probability of the log-leverage ratio hitting the boundary $\ln K$ from below during this time period.

We derive the expected future bond price by generalizing the above equation using the $\RTo$ measure. Substituting the terminal payoff $\X_T = 1 - \omega \I_{\left\{\tau_{[t,T]}<T\right\}}$ in equation \eqref{eq:rep2} of Corollary \ref{coro:three2}, the expected future bond price is given as
\begin{align}\label{eq:gvth3}
	\E_t\left[ D(H,T)\right] & =\E_t^{\p}\left[ P(H, T)\right] \left(1 - \omega \E_t^{\RTo} \left[\I_{\left\{\tau_{[t,T]}<T\right\}} \right] \right), %
\end{align}
where $\E_t^{\RTo} \left[\I_{\left\{\tau_{[t,T]}<T\right\}} \right]$ is the $\RTo$ probability of the log-leverage ratio hitting the boundary $\ln K$ during the time period $t$ to $T$. This probability can be approximated in semi-closed form using a single numerical integral by generalizing \citeauthor{mueller2000simple}'s (\citeyear{mueller2000simple}) extension of the CDG formula for the risky bond price. The solution of the expected future bond price in equation \eqref{eq:gvth3} is given in Appendix \ref{app:CDG}.

\subsection{Non-Parametric and Semi-Parametric Applications}\label{sec:recov}

Until now we have developed the EEM theory in the context of parametric models assuming the \emph{same form} for the state variable processes under both the physical measure and the given EMM. This assumption does not always hold empirically. For example, the physical process and the risk-neutral process for the asset return cannot have the same form under the cumulative prospect theory model of option returns given by \cite{baele2019cumulative}. This section gives additional theoretical results related to the EEM theory which do not require strong parametric assumptions.

To derive these results we assume that the expected future prices of standard European options with a sufficiently large range of strike prices have been estimated using non-parametric or semi-parametric methods \citep[e.g., see][]{Coval2001Expected, bondarenko2003statistical, bondarenko2014put, jones2006nonlinear, driessen2007empirical,  bakshi2010returns, constantinides2013puzzle, israelov2017forecasting, baele2019cumulative, buchner2019latent}. Using these expected future option prices, we show how one can obtain \emph{both} the $\RTo$ density and the physical density of the underlying asset return using an extension of the theoretical framework of \cite{breeden1978prices}. Furthermore, we show that the $\RTo$ density can be used to obtain the expected future prices (and expected returns) of a wider range of complex European contingent claims with terminal payoffs that are arbitrary functions of the underlying asset's future price.

To fix ideas assume that the Arrow-Debreu price density or the state price density (SPD) of the underlying asset at current time $t$ is given by the function $f_t(S_T)$, where $S_T$ is the future price of this asset at time $T$. Also, assume that the future state price density (FSPD) of this asset at a future time $H$ is given by the function $f_H(S_T|Y_H)$, where $Y_H$ is a vector of state variables (such as future asset price $S_H$, future volatility, etc.) at time $H$, for all $t\leq H\leq T$. %
Let the physical expectation $g_t(H, S_T) \triangleq \E_t[f_H(S_T|Y_H)]$ define the expected FSPD.

Consider a European call option with a payoff equal to $C_T = \left(S_T - K \right)^+$ and a European put option with a payoff equal to $P_T = \left(K - S_T \right)^+$ at time $T$, and let $C_H$ and $P_H$ be the future prices of the call option and the put option, respectively, at a future time $H \leq T$. Theorem \ref{thm:Breedenext} presents an extension of the \cite{breeden1978prices} result, showing the relationship of the expected FSPD with the probability densities of $S_T$ under the $\pp$, $\Ro$ and $\RTo$ measures.

\begin{theorem}\label{thm:Breedenext}
	The expected FSPD, $g_t(H, S_T) \triangleq \E_t[f_H(S_T|Y_H)]$, is given by the second derivative of either the expected future call price $\E_t[C_H]$ or the expected future put price $\E_t[P_H]$ with respect to the strike price $K$, and is related to the probability densities of $S_T$ under the $\RTo$, $\Ro$, and $\pp$ measures, as follows:
	
	\begin{align}\label{eq:AD-RTo}
		g_t(H, S_T)& = \left[\frac{\partial^2 \E_t\left[C_H\right]}{\partial K^2}\right]_{K=S_T}= \left[\frac{\partial^2 \E_t\left[P_H\right]}{\partial K^2}\right]_{K=S_T} \ \ \\
		&=  \E_t^\p\left[P(H,T)\right]p_t^\RTo(S_T), \nonumber
		\ \ \text{for all} \ \ t \leq H \leq T, \\
		&=  p_t^\pp(S_T), \nonumber
		\ \ \text{for all} \ \ t \leq H = T,
	\end{align}
	which under the special case of constant or deterministic short rate becomes,
	\begin{align}\label{eq:AD-Ro}
		g_t(H, S_T)&=  \left[\ue^{-\int_H^T r_u\D u}\right]p_t^\Ro(S_T), \ \ \text{for all} \ \ t \leq H \leq T, \\
		&=  p_t^\pp(S_T), \nonumber
		\ \ \text{for all} \ \ t \leq H = T,
	\end{align}
	where $p_t^\RTo(S_T)$, $p_t^\Ro(S_T)$, and $p_t^\pp(S_T)$ are time $t$ probability densities of $S_T$ under $\RTo$, $\Ro$, and $\pp$  measures, respectively. Moreover, the expected future price of a contingent claim $\X$ with an arbitrary terminal payoff function $\X_T = h(S_T)$, can be numerically computed using the expected FSPD, $g_t(H, S_T)$ as follows:
	\begin{align}\label{eq:ArbitClaimExpectation}
		\begin{aligned}
			\E_t \left[ \X_H\right]   & = \int_0^{\infty}g_t(H,S)h(S)dS.
		\end{aligned}
	\end{align}
\end{theorem}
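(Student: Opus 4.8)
The plan is to combine the Breeden--Litzenberger ``second derivative in the strike'' trick with the change-of-measure representation of Corollary~\ref{coro:three2}. First I would exploit the fact that $f_H(\cdot\,|\,Y_H)$ is, by construction, the state-price (Arrow--Debreu) density prevailing at time $H$ conditional on $Y_H$, so that the future call and put prices admit the integral representations
\begin{align*}
	C_H = \int_K^\infty (S-K)\,f_H(S\,|\,Y_H)\,\D S, \qquad P_H = \int_0^K (K-S)\,f_H(S\,|\,Y_H)\,\D S .
\end{align*}
Differentiating twice in $K$ (the boundary contributions vanish because each integrand is continuous and equals zero at $S=K$) gives $\partial^2 C_H/\partial K^2 = \partial^2 P_H/\partial K^2 = f_H(K\,|\,Y_H)$. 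Taking the physical expectation $\E_t[\cdot]$ and interchanging it with $\partial^2/\partial K^2$ yields $\partial^2 \E_t[C_H]/\partial K^2 = \partial^2 \E_t[P_H]/\partial K^2 = \E_t\!\left[f_H(K\,|\,Y_H)\right] = g_t(H,K)$; evaluating at $K=S_T$ establishes the first line of \eqref{eq:AD-RTo}.

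Next I would identify $g_t(H,S_T)$ with the density expressions. Feeding $\X_T=(S_T-K)^+$ into Corollary~\ref{coro:three2} gives $\E_t[C_H] = \E_t^\p[P(H,T)]\,\E_t^\RTo[(S_T-K)^+] = \E_t^\p[P(H,T)]\int_K^\infty (S-K)\,p_t^\RTo(S)\,\D S$, and likewise for the put; differentiating twice in $K$ (and noting $\E_t^\p[P(H,T)]$ does not depend on $K$) produces $\E_t^\p[P(H,T)]\,p_t^\RTo(S_T)$, which must therefore coincide with $g_t(H,S_T)$. For the endpoint $H=T$ we have $P(T,T)=1$, so $\E_t^\p[P(H,T)]=1$, and the Radon--Nikod\'{y}m density in Lemma~\ref{lemma:reeo} collapses to $1$ because $\Y_T=P(T,T)=1$; hence $\RTo=\pp$, $p_t^\RTo=p_t^\pp$, and $g_t(T,S_T)=p_t^\pp(S_T)$. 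For the deterministic-short-rate case, $P(H,T)=\ue^{-\int_H^T r_u\,\D u}$ is deterministic, so comparing the representation $\E_t[\X_H]=\E_t^\p[P(H,T)]\,\E_t^\RTo[\X_T]$ of Corollary~\ref{coro:three2} with $\E_t[\X_H]=\E_t^\Ro\!\left[\ue^{-\int_H^T r_u\D u}\X_T\right]=\ue^{-\int_H^T r_u\D u}\,\E_t^\Ro[\X_T]$ of Corollary~\ref{coro:three} forces $\E_t^\RTo[\X_T]=\E_t^\Ro[\X_T]$ for every $\X_T$, hence $p_t^\RTo=p_t^\Ro$, which gives \eqref{eq:AD-Ro}.

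For the final identity \eqref{eq:ArbitClaimExpectation}, I would price the claim at time $H$ against the FSPD, $\X_H=\int_0^\infty h(S)\,f_H(S\,|\,Y_H)\,\D S$, take $\E_t[\cdot]$, and interchange expectation and integral to obtain $\E_t[\X_H]=\int_0^\infty h(S)\,g_t(H,S)\,\D S$; the same formula also follows directly by feeding $\X_T=h(S_T)$ into Corollary~\ref{coro:three2} and substituting $g_t(H,S)=\E_t^\p[P(H,T)]\,p_t^\RTo(S)$.

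The main obstacle is technical rather than conceptual: every step rests on interchanging a derivative, a limit, or an integral with the expectation $\E_t[\cdot]$, which requires regularity hypotheses --- absolute continuity of the time-$T$ laws under $\pp$, $\Ro$, and $\RTo$ (so the densities exist), sufficient smoothness of $K\mapsto\E_t[C_H]$, and domination conditions (e.g.\ $\E_t^\p\!\left[P(H,T)\,|h(S_T)|\right]<\infty$) that license Fubini's theorem and differentiation under the integral sign. One must also take as given that $f_H(\cdot\,|\,Y_H)$ genuinely represents the time-$H$ pricing functional, i.e.\ that the span of European claims on $S_T$ is rich enough for the Breeden construction to apply at the future date $H$; granting that, the argument is a routine adaptation of \cite{breeden1978prices} combined with Corollaries~\ref{coro:three} and~\ref{coro:three2}.
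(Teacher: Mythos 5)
Your proposal is correct and follows essentially the same route as the paper's proof: represent $C_H$ as an integral against the future state price density, interchange the physical expectation with the strike derivative (the paper interchanges the order of integration first and then differentiates, which is an equivalent variation), and identify $g_t(H,S_T)$ with $\E_t^\p[P(H,T)]\,p_t^{\RTo}(S_T)$ via Corollary \ref{coro:three2}. You additionally spell out the $H=T$ and deterministic-rate specializations, whose proofs the paper omits, and your arguments for both (the Radon--Nikod\'{y}m derivative of $\RTo$ collapsing to $1$ when $H=T$ with the bond numeraire, and matching Corollaries \ref{coro:three} and \ref{coro:three2} to get $p_t^{\RTo}=p_t^{\Ro}$) are consistent with the paper's results.
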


\begin{proof}
	See Appendix \ref{app:thm1} (Section \ref{app:espd}) for the proofs of equations \eqref{eq:AD-RTo} and \eqref{eq:AD-Ro}. The proof of equation \eqref{eq:ArbitClaimExpectation} follows by substituting $g_t(H,S_T)$ from equation \eqref{eq:AD-RTo} into equation \eqref{eq:rep2}, and simplifying.
\end{proof}

Equation \eqref{eq:AD-RTo} shows that the second derivative of the expected future call price function with respect to the strike price gives the expected FSPD, which equals the discounted $\RTo$ probability density, similar to how the SPD equals the discounted forward density $\QT$ in \cite{breeden1978prices}. If the expected FSPD can be econometrically estimated using the expected future prices of standard European call and put options,\footnote{We demonstrate how the expected FSPD can be extracted using the expected future prices of standard European call and put options in Internet Appendix Section \ref{sec:extractFSPD} by extending the methodology of \cite{jackwerth2004option}.} then it can be used to obtain the expected future prices (and expected returns) of a wider range of complex European contingent claims with terminal payoffs that are arbitrary functions of the underlying asset's future price as shown in equation \eqref{eq:ArbitClaimExpectation}. This can be useful if a researcher desires consistency between the expected returns of such complex claims with those of the standard European options without making strong parametric assumptions about the underlying state variable processes.

A surprisingly simple and overlooked result in the option pricing literature is given by the last equality in both equations \eqref{eq:AD-RTo} and \eqref{eq:AD-Ro}, according to which the expected \emph{terminal} FSPD (i.e., $g_t(T, S_T)$)---defined as the second derivative of the expected terminal call (or put) price with respect to the strike price---equals the physical density of the underlying asset's future price. This arbitrage-based, tautologically-exact relationship between the expected terminal FSPD and the physical density is especially relevant in the light of the recent theoretical and empirical research that investigates the relationship between the SPD and the physical density \cite[see][]{ross2015recovery, martin2019expected, jensen2019generalized, jackwerth2020does}.\footnote{%
	While the motivation of the Ross recovery paradigm is that only today's option price data contains useful forward-looking information about the distributon of the future asset prices, the motivation of using Theorem \ref{thm:Breedenext} is that expected future option prices estimated using ``recent" option price data (e.g., daily option prices from the past 3 to 6  months) contain useful forward-looking information about the distribution of future asset prices.} The empirical implications of Theorem \ref{thm:Breedenext} can be explored by future research on non-parametric and semi-parametric option return models.

\begin{center}
\section{Extensions and Conclusion} \label{sec:conclusion}\vspace{-1em}
\end{center}

This section considers some extensions and applications of the EEM theory and then concludes the paper. Until now the paper has focused only on those contingent claim models that admit an analytical solution to the claim's current price. As the first extension, consider a general claim (e.g., European, Barrier, Bermudan, quasi-American, etc.) that remains alive until time $H$, but the analytical solution of its current price does not exist. Numerical methods can be used to compute the expected future price of such a claim at time $H$ using the EEMs, such as the $\Ro$ measure. For example, consider a quasi-American put option which is exercisable on or after time $T_e \geq H$, until the option expiration date $T$ (i.e., $T_e \leq T$). The expected future price of this option at time $H$ is given as
\begin{align*}
	\E_t \left[P_H^A\right]  = \sup_{\tau \in \mathcal{T}_{T_e, T}} \E^\Ro_t \left[\ue^{-\int_H^\tau r_u\D u} P_\tau^A\right]
	\ \   \text{for}  \ \ 0\leq t\leq H \leq T_e \leq T,
\end{align*}
where $P_s^A$ is the price of the quasi-American put option at time $s$, for all $H \leq s \leq T$, $P_T^A = P_T = \max (K - S_T, 0)$, $\mathcal{T}$ is the set of all stopping times with respect to the filtration $\F$, and $\mathcal{T}_{T_e, T} \triangleq \left\{\tau\in \mathcal{T}|\p\left(\tau\in\left[T_e, T\right]\right)=1\right\}$ is the subset of $\mathcal{T}$. The above equation follows by applying Propositions \ref{prop:Ree1}\eqref{Ree:num1} and \ref{prop:Ree1}\eqref{Ree:num2}, and valuing the American option as an optimal stopping problem \cite[see][]{jacka1991optimal} from time $T_e$ until time $T$. The above example shows how the EEM theory can be applied for obtaining the expected future price of any type of claim at time $H$, assuming the claim remains alive until time $H$. Numerical methods such as trees, finite difference methods, and Monte Carlo methods can be used for pricing such claims under the $\Ro$ measure.

Second, consider those claims that may not be alive until time $H$ in some states of nature. For example, consider a standard American option that can be exercised at any time until the expiration date of the option. While the expected future price of an American option cannot be computed, since the option is not alive at time $H$ in some states of nature, the expected future price of a \emph{reinvestment strategy} that invests the value of the American option upon exercise in another security---for example, an $H$-maturity pure discount bond or a money market account---until time $H$, can be computed numerically using the $\Ro$ measure.%

Third, consider those claims that pay out a stream of cash flows from time $t$ until time $T$---for example, a bond that makes periodic coupon payments. The \emph{expected returns} of such claims over the horizon $H$ cannot be computed without knowing the \emph{reinvestment strategy} that invests the cash flows maturing \emph{before} time $H$. Like the case in the second point above,
all of the cash flows maturing before time $H$ can be reinvested in another security---for example, an $H$-maturity pure discount bond or a money market account---for the remaining duration until time $H$, and the expected future value of these reinvested cash flows can be computed numerically using the $\Ro$ measure.

Fourth, though this paper has focused on the derivation of the analytical solutions of only the expected returns of contingent claims, future research can use the EEM theory to derive analytical solutions of the variance, covariance, and other higher-order moments of contingent claim returns.\footnote{For example, see \cite{nawalkha2022sharpe}.} Also, various numerical methods, such as trees, finite difference methods, and Monte Carlo methods can be used to obtain the higher order moments of the return on a portfolio of claims under the $\Ro$ measure. Furthermore, various finite-horizon tail risk measures (such as Value-at-Risk, conditional VaR, expected shortfall, and others) for contingent claim portfolio returns can be obtained with  Monte Carlo simulations under the $\Ro$ measure. Thus, the EEMs such as the $\Ro$ measure can significantly advance the risk and return analysis of fixed income and financial derivative portfolios.

Finally, perhaps the most important empirical application of the analytical solutions of the expected future prices of contingent claims obtained using the EEM theory is that various methods, such as Markov Chain Monte Carlo, generalized method of moments \cite[see][]{hansen1982large}, and others become feasible for studying both the \emph{cross-section} and the \emph{term structure} of returns of a variety of contingent claims, such as Treasury bonds, corporate bonds, interest rate derivatives, credit derivatives, equity derivatives, and others. It is our hope that the EEM theory can shift the focus of research from mostly valuation to \emph{both} valuation and expected returns, for a variety of finite-maturity securities and claims used in the fixed income markets and the derivatives markets.

Before concluding we would like to underscore an important point regarding the derivations of the expected future prices of different contingent claims under various classes of models in the equity, interest rate, and credit risk areas, given in Sections \ref{sec:Rmeasure} and \ref{sec:R1Tmeasure}, and in Internet Appendix Sections \ref{app:someATSM}, \ref{sec:qtsm}, \ref{app:fsoexpected}, \ref{app:MertonVasicek}, and \ref{iapp:margrabe}. %
Since the EEMs and the $R$-transforms provide a \emph{single analytical solution} for the expected future price (which includes both the current price for $H = t$, and the expected future price for all $H > t$), and since a given EEM and a given $R$-transform always nest the associated EMM and the associated $Q$-transform, respectively, for the special case when $H=t$, all future work on the derivations of the analytical solutions of the expected future prices of contingent claims can use either the EEMs or the $R$-transforms instead of using the associated EMMs or the associated $Q$-transforms, respectively. Since the expected future price is the most relevant input, and in many cases the only input required for computing the expected return of a contingent claim, the use of the EEMs or the $R$-transforms kills two birds with one stone---it allows the derivation of a single analytical solution that does both \emph{valuation} and computes the \emph{expected return} of the claim.

\renewcommand{\theequation}{\thesection\arabic{equation}}
\setcounter{equation}{0}
\theoremstyle{jfapp}
\begin{appendix}

 \setcounter{corollary}{0}
 \setcounter{proposition}{0}
 \setcounter{lemma}{0}
 \setcounter{definition}{0}

\renewcommand{\thelemma}{A\arabic{lemma}}
\renewcommand{\thedefinition}{A\arabic{definition}}
\renewcommand{\theproposition}{A\arabic{proposition}}
\renewcommand{\thecorollary}{A\arabic{corollary}}

\section{The Proofs}
\subsection{Proof of Lemma \ref{lemma:Ree} }\label{app:thm1}
\begin{proof}
	To prove that $\Ree$ and $\pp$ are equivalent probability measures, and $\LHs$ is the corresponding Radon-Nikod\'{y}m derivative process, we show that $\LHs$ is a martingale; $\LHT$ is almost surely positive; and $\E[\LHT]=1$ \citep[see, e.g.,][Theorem 1.6.1 and Definition 1.6.3]{shreve2004stochastic}.
	
	Consider the following three cases to show that $\LHs$ is a martingale \cite[see, e.g.,][Chapter I]{borodin2002handbook}, or
	$\E\left[\mathcal{L}_{s_2}^*(H) | \mathcal{F}_{s_1}\right] = \mathcal{L}_{s_1}^*(H)$, for all $0\leq s_1 \leq s_2\leq T$.
	First, for all $0\leq s_1 \leq s_2 \leq H \leq T$, $\mathcal{L}_{s_1}^*(H) = \mathcal{L}_{s_2}^*(H) = 1$, hence $\E\left[\mathcal{L}_{s_2}^*(H) | \mathcal{F}_{s_1}\right] = \mathcal{L}_{s_1}^*(H)$.
	Second, for all $0\leq H \leq s_1 \leq s_2\leq T$,
	\begin{align*}
		\E\left[\mathcal{L}_{s_2}^*(H) | \mathcal{F}_{s_1}\right]& =  \E\left[\frac{L_{s_2}^*}{\LH}\,\Big|\, \mathcal{F}_{s_1}\right] =   \E\left[\frac{\E\left[\LT|\mathcal{F}_{s_2}\right]}{\LH}\,\Big|\, \mathcal{F}_{s_1}\right]=  \frac{ \E\left[\E\left[\LT|\mathcal{F}_{s_2}\right]| \mathcal{F}_{s_1}\right]}{\LH}   \\
& =  \frac{ \E\left[\LT| \mathcal{F}_{s_1}\right]}{\LH} = \frac{L_{s_1}^*}{\LH} = \mathcal{L}_{s_1}^*(H).
	\end{align*}
	Third, for all $0\leq s_1  \leq H \leq s_2 \leq T$,
	\begin{align*}
		\E\left[\mathcal{L}_{s_2}^*(H) | \mathcal{F}_{s_1}\right]& =  \E\left[\frac{L_{s_2}^*}{\LH}\,\Big|\, \mathcal{F}_{s_1}\right] =   \E\left[\E\left[\frac{L_{s_2}^*}{\LH} \,\Big|\,\mathcal{F}_H\right]\mathcal{F}_{s_1}\right] \\
& =  \E\left[\E\left[\frac{\E\left[\LT|\mathcal{F}_{s_2}\right]}{\LH}\,\Big|\,\mathcal{F}_H \right]\,\Big|\,\mathcal{F}_{s_1}\right] =  \E\left[\frac{\E\left[\E\left[\LT|\mathcal{F}_{s_2}\right]|\mathcal{F}_H \right]}{\LH}\,\Big|\,\mathcal{F}_{s_1}\right] \\
& =  \E\left[\frac{\E\left[\LT|\mathcal{F}_H \right]}{\LH}\,\Big|\,\mathcal{F}_{s_1}\right] = \E\left[\frac{\LH}{\LH}\,\Big|\,\mathcal{F}_{s_1}\right] = \mathcal{L}_{s_1}^*(H).
	\end{align*}
	Therefore, $\LHs$ is a martingale.
	
	Next, we show that $\LHT$ is almost surely positive and $\E[\LHT]=1$. First, note that $\LT$ is almost surely positive since $\pp$ and $\Qem$ are equivalent probability measures. Since $\LH =  \E[\LT|\mathcal{F}_H]$, then $\LH$ is also almost surely positive. Finally, by definition in equation \eqref{eq:LR}, $\LHT = \LT/\LH$ is also almost surely positive. Additionally, $\E[\LHT]=\E[ \LT/\LH] = \E[\E[\LT|\mathcal{F}_H]/\LH] = \E[ \LH/\LH] = 1$.
\end{proof}

\subsection{Proof of Proposition \ref{prop:Ree1}}\label{app:Reempro}
\begin{proof}
	For the first property, for all $0\leq t\leq H$, given $A\in \F_H$, we have $\Ree(A|\F_t) =\E_t^\Ree\left[\I_A \right]= \E_t^\pp\left[\LHH/\LHt\cdot\I_A   \right]=\E_t^\pp\left[(\LH/\LH)/1\cdot\I_A   \right]=\E_t^\pp\left[\I_A\right]=\pp(A|\F_t)$.
	
	For the second property, for all $H\leq s\leq T$, given $A\in \F_T$, we have
	\begin{align*}
		\begin{aligned}
			\Ree(A\,|\,\F_s) &= \E^\Ree_s\left[\I_A\right]=\E_s^\p\left[\frac{\LHT}{\LHs}\I_A\right] =\E_s^\p\left[\frac{\LT/\LH}{\Ls/\LH}\I_A\right] =\E_s^\p\left[\frac{\LT}{\Ls}\I_A\right] \\
			&=\E_s^\Qem\left[\I_A\right] =\Qem(A\,|\,\F_s).
		\end{aligned}
	\end{align*}

	The third property is a special case of the first property when $H=T$, while the fourth property is a special case of the second property when $H=t$.
	
	For the last property, we have
	\begin{align*}
		\E_t^\Ree\left[Z_T\right]  =\E_t^\Ree\left[\E_H^\Ree\left[Z_T\right]\right] =\E_t^\Ree\left[\E_H^\Qem\left[Z_T\right]\right] =  \E_t^\p\left[\E_H^\Qem\left[Z_T\right]\right],
	\end{align*}
	where the second equality is an application of the second property, the last equality is an application of the first property, since $\E_H^\Qem\left[Z_T\right]$ is a random variable that is $\F_H$ measurable.
\end{proof}

\subsection{Proof of Proposition \ref{prop:Ree2}}\label{app:prop2}
\begin{proof}
	For the first property, for all $0\leq H\leq s_1 \leq s_2 \leq T$,
	\begin{align*}
		\E_{s_1}^\Ree\left[\frac{\X_{s_2}}{\Y_{s_2}}\right] & =   \E_{s_1}^\pp\left[\frac{\Lc_{s_2}^*(H)}{\Lc_{s_1}^*(H)}\cdot\frac{\X_{s_2}}{\Y_{s_2}}\right] =
		\E_{s_1}^\pp\left[\frac{L_{s_2}^*/\LH}{L_{s_1}^*/\LH}\cdot\frac{\X_{s_2}}{\Y_{s_2}}\right]
        =  \E_{s_1}^\pp\left[\frac{L_{s_2}^*}{L_{s_1}^*}\cdot\frac{\X_{s_2}}{\Y_{s_2}}\right] \\
        & = \E_{s_1}^\Qem\left[\frac{\X_{s_2}}{\Y_{s_2}}\right] =\frac{\X_{s_1}}{\Y_{s_1}}.
	\end{align*}
	
	For the second property, for all $0\leq s_1\leq s_2\leq H\leq T$,
	\begin{align*}
		\E_{s_1}^\Ree\left[\frac{\X_{s_2}}{\Y_{s_2}}\right] & =   \E_{s_1}^\pp\left[\frac{\Lc_{s_2}^*(H)}{\Lc_{s_1}^*(H)}\cdot\frac{\X_{s_2}}{\Y_{s_2}}\right] =
		\E_{s_1}^\pp\left[\frac{1}{1}\cdot\frac{\X_{s_2}}{\Y_{s_2}}\right] =  \E_{s_1}^\pp\left[\frac{\X_{s_2}}{\Y_{s_2}}\right],
	\end{align*}
	which is generally not equal to $ \E_{s_1}^\Qem\left[{\X_{s_2}}/{\Y_{s_2}}\right]$, except when $\p = \Qem$.
\end{proof}

\subsection{Proof of Lemma \ref{lemma:reeo} }\label{app:thm2}
\begin{proof}
	We show that $\Reeo$ and $\pp$ are equivalent probability measures, and $\LHso$ is the corresponding Radon-Nikod\'{y}m derivative process similar to the above process.
	First, we show that $\LHso$ is a martingale using the following three cases.
	For $0 \leq s_1 \leq s_2 \leq H\leq T$,
	\begin{align*}
		\E\left[\Lc_{1s_2}^*(H) | \mathcal{F}_{s_1}\right]& =  \E\left[\frac{\E_{s_2}\left[\Y_H\right]}{\E_0\left[\Y_H\right]}\,\Big|\, \mathcal{F}_{s_1}\right] =   \frac{\E_{s_1}\left[\E_{s_2}\left[\Y_H\right]\right]}{\E_0\left[\Y_H\right]}  =\frac{\E_{s_1}\left[\Y_H\right]}{\E_0\left[\Y_H\right]}   = \mathcal{L}_{1s_1}^*(H).
	\end{align*}
	For $0  \leq H \leq s_1 \leq s_2\leq T$,
	\begin{align*}
		\E\left[\Lc_{1s_2}^*(H) | \mathcal{F}_{s_1}\right]& =  \E\left[\frac{\Y_H}{\E_0\left[\Y_H\right]}\cdot\frac{L_{s_2}^*}{\LH}\,\Big|\, \mathcal{F}_{s_1}\right] =   \frac{\Y_H}{\E_0\left[\Y_H\right]}\cdot\frac{\E_{s_1}\left[L_{s_2}^*\right]}{\LH}=\frac{\Y_H}{\E_0\left[\Y_H\right]}\cdot\frac{L_{s_1}^*}{\LH}\\
& = \mathcal{L}_{1s_1}^*(H).
	\end{align*}
	For $0  \leq s_1\leq H  \leq s_2\leq T$,
	\begin{align*}
		\E\left[\Lc_{1s_2}^*(H) | \mathcal{F}_{s_1}\right]& =  \E\left[\frac{\Y_H}{\E_0\left[\Y_H\right]}\cdot\frac{L_{s_2}^*}{\LH}\,\Big|\, \mathcal{F}_{s_1}\right] =  \E_{s_1}\left[\E_H\left[\frac{\Y_H}{\E_0\left[\Y_H\right]}\cdot\frac{L_{s_2}^*}{\LH}\right]\right] \\
   & = \E_{s_1}\left[\frac{\Y_H}{\E_0\left[\Y_H\right]}\cdot\frac{\E_H\left[L_{s_2}^*\right]}{\LH}\right]
		 =\E_{s_1}\left[\frac{\Y_H}{\E_0\left[\Y_H\right]}\cdot\frac{\LH}{\LH}\right] =\frac{\E_{s_1}\left[\Y_H\right]}{\E_0\left[\Y_H\right]}\\
& = \mathcal{L}_{1s_1}^*(H).
	\end{align*}
	
	Next, we show that $\LHTo$ is almost surely positive and $\E[\LHTo]=1$. Since the numeraire asset $\Y$ is positive, using similar logic as in the proof of Lemma \ref{lemma:Ree}, $\LHTo$ is almost surely positive, and
	\begin{align*}
		\E\left[\LHTo\right]&=\E_0\left[\frac{\Y_H}{\E_0\left[\Y_H\right]}\cdot\frac{L_{T}^*}{\LH}\right]=\E_0\left[\E_H\left[\frac{\Y_H}{\E_0\left[\Y_H\right]}\cdot\frac{L_{T}^*}{\LH}\right]\right] \\
& =\E_0\left[\frac{\Y_H}{\E_0\left[\Y_H\right]}\cdot\frac{\E_H\left[L_{T}^*\right]}{\LH}\right] =\E_0\left[\frac{\Y_H}{\E_0\left[\Y_H\right]}\cdot\frac{\LH}{\LH}\right]=1.
	\end{align*}
\end{proof}

\subsection{Proof of Theorem \ref{thm:Breedenext}}\label{app:espd}
\begin{proof}
	Using the future state price density $f_H(S_T|Y_H)$ at time $H$, where $Y_H$ represents the state variable vector at time $H$, the future price of a European call option expiring at time $T$, can be obtained as
	\begin{align}\label{eq:CHspd}
		C_H =  \int_{K}^\infty \left(S-K\right) f_H(S |Y_H)\D S.
	\end{align}
	
	Let $p_t^\p(Y_H)$ be the probability density of the state variable vector $Y_H$ conditioned on time $t$ information under $\p$. For any $\F_H$-measurable variable $Z_H = Z_H(Y_H)$, which depends on the state variable vector $Y_H$, its expectation under $\p$ is
	\begin{align}\label{eq:EZH}
		\E_t[Z_H] &  = \int Z_H \ p_t^\p(Y_H)\D Y_H.
	\end{align}
	The expected future price of the call option can be simplified as
	\begin{align}\label{eq:ECHAD}
		\begin{aligned}
			\E_t[C_H] &  = \int C_H \ p_t^\p(Y_H)\D Y_H, \\
			&  = \int  \left(\int_{K}^\infty \left(S-K\right) f_H(S|Y_H)\D S\right) p_t^\p(Y_H)\D Y_H \\
			& = \int_{K}^\infty \left(S-K\right) \left( \int f_H(S|Y_H)p_t^\p(Y_H)\D Y_H\right) \D S \\
			&  = \int_{K}^\infty \left(S-K\right) \E_t\left[f_H(S|Y_H)\right]\D S,
		\end{aligned}
	\end{align}
	where the first equality uses equation \eqref{eq:EZH}, the second equality uses equation \eqref{eq:CHspd}, the third equality changes the integral order, and the last equality uses equation \eqref{eq:EZH} again.
	
	Taking the second derivative of $\E_t(C_H)$ with respect to $K$ gives
	\begin{align*}
		\begin{aligned}
			\E_t\left[f_H(K|Y_H)\right] =  \frac{\partial^2 \E_t\left[C_H\right] }{\partial K^2} \ \ \text{or,} \ \  \E_t\left[f_H(S_T|Y_H)\right] = \left[\frac{\partial^2 \E_t\left[C_H\right]}{\partial K^2}\right]_{K=S_T},
		\end{aligned}
	\end{align*}
	which proves the first part of equation \eqref{eq:AD-RTo} in Theorem \ref{thm:Breedenext}. The same logic applies to the case of the put option, so we omit the proof.
	
	Recalling equation \eqref{eq:rep2} in Corollary \ref{coro:three2}, we can also write the expected future price of the call option as
	\begin{align}\label{eq:ECHRTo}
		\begin{aligned}
			\E_t[C_H] & = \E_t^\p\left[P(H,T) \right]\E_t^\RTo\left[\left(S_T-K\right)^+\right] \\
			& = \E_t^\p\left[P(H,T) \right] \int_K^\infty \left(S-K\right) p_t^\RTo(S)\D S,
		\end{aligned}
	\end{align}
	where $p_t^\RTo(S_T)$ is time $t$ probability density of $S_T$ under $\RTo$.
	
	Due to the equivalence of equations \eqref{eq:ECHAD} and \eqref{eq:ECHRTo}, we have
	\begin{align*}%
		g_t(H, S_T) \triangleq \E_t\left[f_H(S_T|Y_H)\right]  =  \E_t^\p\left[P(H,T)\right]p_t^\RTo(S_T),
	\end{align*}
	which proves the second part of equation \eqref{eq:AD-RTo} in Theorem \ref{thm:Breedenext}. Equation \eqref{eq:AD-Ro} is a special case of equation \eqref{eq:AD-RTo}, and we omit the proof.
\end{proof}

\section{One-dimensional Brownian Motion under the $\mathbb{R}$ Measure}\label{prof:BMR}

Consider the Radon-Nikod\'{y}m derivative processes $\LHs$ and $\Ls$ in Lemma \ref{lemma:Ree} and denote them as $\LHsB$ and $\LsB$, respectively, for the special case when the numeraire is the money market account, or $\Y = B \triangleq \ue^{\int_0^{\cdot} r_u \D u}$. For this special case, $\Ree$ = $\Ro$, as defined in Corollary \ref{coro:three}, and we know that $\LsB$ under the \cite{black1973pricing} model is given as
\begin{align}\label{eq:LSB}
	\LsB &\triangleq\frac{\D \q}{\D \pp}\Big|_{\mathcal{F}_s}=\exp\left(-\int_0^s {\gamma}\D \Wpu - \frac{1}{2}\int_0^s \gamma^2\D u\right),
\end{align}
where $\gamma = \left(\mu-r\right)/\sigma$ is the market price of risk (MPR) and $r$ is the constant risk-free rate.

Using equations \eqref{eq:LR} and \eqref{eq:LSB}, $\LHsB$ can be given as follows:
\begin{align*}
	\begin{aligned}
		\LHsB \triangleq\frac{\D \R}{\D \pp}\Big|_{\mathcal{F}_s}&=\left\{\begin{array}{ll}
			\displaystyle \frac{\LsB}{\LHB}= \exp\left(-\int_H^s {\gamma}\D \Wpu - \frac{1}{2}\int_H^s {\gamma}^2\D u\right), &\text{if} \ \   H\leq s\leq T, \\
			\displaystyle 1, &\text{if} \ \ 0\leq s < H, \\
		\end{array}\right. \\
		&=\exp\left(-\int_0^s \GuH\D \Wpu -
		\frac{1}{2}\int_0^s \left(\GuH\right)^2\D u\right),
	\end{aligned}
\end{align*}
where
\begin{align*}
	\GsH &\triangleq \I_{\left\{s\geq H\right\}}\gamma=\left\{\begin{array}{ll}
		\displaystyle \gamma, &\text{if} \ \ H\leq s\leq T, \\
		\displaystyle 0, &\text{if} \ \  0\leq s < H. \\
	\end{array}\right.
\end{align*}

The Girsanov theorem allows the transformation of Brownian motions under two equivalent probability measures. Therefore, given $\q$ and $\pp$, we know that $\Wq = \Wp + \int_0^s{\gamma} \D u$. Similarly, the relationship of the Brownian motions under the $\Ro$ and $\pp$ measures is given as
\begin{align*}%
	\Wr = \Wp + \int_0^s \GuH \D u = \Wp + \int_0^s\I_{\left\{u\geq H\right\}}{\gamma} \D u.
\end{align*}

\section{Expected Future Option Price under the Black-Scholes Model}\label{app:BSsolution}
The expected European call price under the Black-Scholes model is given as
\begin{align}\label{eq:BSexp}
	\E_t \left[ C_H \right]
	&= S_t {\ue^{ \mu(H-t)}} \mathcal{N}\left(\hat{d}_1\right) -K\ue^{-r(T-{H})} \mathcal{N}\left(\hat{d}_2\right),
\end{align}
where $\mathcal{N}(\cdot)$ is the standard normal cumulative distribution function, and
\begin{align*}
	\displaystyle \hat{d}_1 & =  \frac{\ln( S_t/K) +{\mu(H-t)} +  r(T-{H}) + \frac{1}{2}\sigma^2(T-t)}{\sigma\sqrt{ T-t}}, \\
	\hat{d}_2 & = \frac{\ln( S_t/K) +{\mu(H-t)} +  r(T-{H}) - \frac{1}{2}\sigma^2(T-t)}{\sigma\sqrt{ T-t}}.
\end{align*}

\section{Expected Future Price of Forward-Start Option using the Black-Scholes Model}\label{app:fsoexpectedBS}

Consider a forward-start call option (FSO) with a strike price $kS_{T_0}$ determined at a future time $T_0$ (for any $T_0 \leq T$), and a terminal payoff $(S_T - kS_{T_0})$ at time $T$. Let $C(S_t, K, \sigma, r, t, T) \triangleq C_t = \E_t^\q\left[\ue^{-r(T-t)}(S_T - K)^+\right]$ be the Black-Scholes call price function. Using the homogeneity property of the call price, \cite{rubinstein1991pay} obtains the price of the FSO as follows:
\begin{align*}
	\text{FSO}_t & = \E_t^\q\left[\ue^{-r(T_0-t)}\E^\q_{T_0}\left[\ue^{-r(T-T_0)}(S_T - kS_{T_0})^+\right]\right]\\
	& = \E_t^\q\left[\ue^{-r(T_0-t)}C(S_{T_0}, kS_{T_0}, \sigma, r, T_0, T)\right] \\
	& = \E_t^\q\left[\ue^{-r(T_0-t)}S_{T_0}C(1, k, \sigma, r, T_0, T)\right] \\
	& = S_tC(1, k, \sigma, r, T_0, T).
\end{align*}

Since the FSO can be interpreted as a constant share $C(1, k, \sigma, r, T_0, T)$ of the asset $S$, until time $T_0$, the expected future price of the FSO when $t\leq H \leq T_0 \leq T$, obtains trivially by taking the expected future value of the asset at time $H$, as follows:\footnote{Alternatively, the result in equation \eqref{eq:RubTrivial} can also be derived by applying the $\R$ measure using Corollary \ref{coro:three}. }
\begin{align}\label{eq:RubTrivial}
	\E_t[\text{FSO}_H]  = S_t\ue^{\mu(H-t)}C(1, k, \sigma, r, T_0, T).
\end{align}

The non-trivial case occurs when $t\leq T_0\leq H  \leq T$. Under this case, the $\R$ measure can be used to compute the expected future price of the FSO by applying Corollary \ref{coro:three} as follows:
\begin{align*}
	\E_t[\text{FSO}_H] &= \E_t^\Ro\left[\ue^{-r(T-H)}(S_T - kS_{T_0})^+\right].
\end{align*}

Let $\bar{C}(S_t, K, \sigma, r, \mu, t, H, T) \triangleq \E_t[C_H] = \E_t^\Ro\left[\ue^{-r(T-H)}(S_T - K)^+\right]$ be the Black-Scholes expected future call price function given by equation \eqref{eq:BSexp}. Using the homogeneity property of the expected future call price, and using Proposition \ref{prop:Ree1}\eqref{Ree:num1}, we can solve the expected future price of the FSO given in the above equation as follows:
\begin{align}\label{eq:RubnonTrivial}
	\E_t[\text{FSO}_H] & = \E_t^\Ro\left[\E^\R_{T_0}\left[\ue^{-r(T-H)}(S_T - kS_{T_0})^+\right]\right]\nonumber \\
	& = \E_t^\Ro\left[\bar{C}(S_{T_0}, kS_{T_0}, \sigma, r, \mu, T_0, H, T) \right] \nonumber\\
	& = \E_t^\Ro\left[S_{T_0}\bar{C}(1, k, \sigma, r, \mu, T_0, H, T) \right] \nonumber\\
	& = \E_t^\pp\left[S_{T_0}\bar{C}(1, k, \sigma, r, \mu, T_0, H, T) \right] \nonumber \\
	& = S_t \ue^{\mu (T_0-t)} \bar{C}(1, k, \sigma, r, \mu, T_0, H, T).
\end{align}
Together, equations \eqref{eq:RubTrivial} and \eqref{eq:RubnonTrivial} provide the full solution to the expected future price of the forward-start call option using: i) the Black-Scholes call price function when $H \leq T_0$, and ii) the Black-Scholes expected future call price function when $H \geq T_0$, respectively.

\section{Solutions of the $R$-transform and the Extended $R$-transform}\label{app:soluRandExtT}
Here we provide the solutions of the $R$-transform and the extended $R$-transform by giving the following two propositions:

\begin{proposition}\label{prop:Rtrans}
	Under technical regularity conditions given in \cite{Duffie2000Transform}, the $R$-transform $\phi^R$ of $\mymat{Y}$ defined by equation \eqref{eq:Rtransform} is given by
	\begin{align}\label{eq:Rtransformsol}
		\begin{aligned}
			\phi^R(\mymat{z};  t, T, H)
			=\exp\left(-A_{\mymat{\chi}}^{(\mybmm{b}, \mybmm{c})}(H-t) -\mymat{B_{\mymat{\chi}}^{(\mybmm{b}, \mybmm{c})}(H-t)}'\mymat{Y_t} \right),
		\end{aligned}
	\end{align}
	where $ \mybmm{b} =\left({A_{\mymat{\chi^*}}^{(\mybmm{b^*}, \mybmm{c^*})}(T-H),  \mymat{B_{\mymat{\chi^*}}^{(\mybmm{b^*}, \mybmm{c^*})}}(T-H)} \right)$, $\mybmm{c} = \left(0, 0\right)$, $\mybmm{b^*}=\left(0, \mymat{-z}\right)$, and $\mybmm{c^*}=\left(\rho_0, \mymat{\rho_1}\right)$.
	
	For any well-defined $\mybmm{b}\triangleq\left(b_0, \mymat{b_1} \right)\in \mathcal{C}^{N+1}$, $\mybmm{c}\triangleq\left(c_0, \mymat{c_1}\right) \in \mathcal{R}^{N+1}$ and a coefficient characteristic $\mymat{\chi} = (\mymat{k}, \mymat{h}, \mymat{l}, \theta)$, $A_{\mymat{\chi}}^{(\mybmm{b}, \mybmm{c})}(\tau)$ and $ \mymat{B_{\mymat{\chi}}^{(\mybmm{b}, \mybmm{c})}(\tau)}$ are defined by the solution of the following expectation with $\tau<H-t$,
	\begin{align*}
		& \E_t^\R\left[\exp\left(-b_0-\mymat{b_1}'\mymat{Y_{t+\tau}}-\int_t^{t+\tau}\left(c_0 +\mymat{c_1}'\mymat{Y_u} \right)\D u\right)\right] \\
		= & \E_t^\p\left[\exp\left(-b_0-\mymat{b_1}'\mymat{Y_{t+\tau}}-\int_t^{t+\tau}\left(c_0 +\mymat{c_1}'\mymat{Y_u} \right)\D u\right)\right] \\
		= & \exp\left(-A^{(\mybmm{b}, \mybmm{c})}_{\mymat{\chi}}(\tau)- \mymat{B^{(\mybmm{b}, \mybmm{c})}_{\mymat{\chi}}(\tau)}'\mymat{Y_t}\right),
	\end{align*}
	and are obtained by solving the following complex-valued ODEs:
	\begin{align}\label{eq:ODEsAB}
		\begin{aligned}
			\frac{\D{A^{(\mybmm{b}, \mybmm{c})}_{\mymat{\chi}}(\tau)}}{\D \tau} &=   \mymat{k_0'B^{(\mybmm{b}, \mybmm{c})}_{\mymat{\chi}}(\tau)} - \frac{1}{2}\mymat{B^{(\mybmm{b}, \mybmm{c})}_{\mymat{\chi}}(\tau)'h_0 B^{(\mybmm{b}, \mybmm{c})}_{\mymat{\chi}}(\tau)} - l_0\left(\theta\left(\mymat{-B^{(\mybmm{b}, \mybmm{c})}_{\mymat{\chi}}(\tau)}\right)-1\right)+
			c_0, \\
			\frac{\D{\mymat{B^{(\mybmm{b}, \mybmm{c})}_{\mymat{\chi}}(\tau)}}}{\D \tau} &=   \mymat{k_1'B^{(\mybmm{b}, \mybmm{c})}_{\mymat{\chi}}(\tau)} - \frac{1}{2}\mymat{B^{(\mybmm{b}, \mybmm{c})}_{\mymat{\chi}}(\tau)'h_1 B^{(\mybmm{b}, \mybmm{c})}_{\mymat{\chi}}(\tau)} - \mymat{l_1}\left(\theta\left(\mymat{-B^{(\mybmm{b}, \mybmm{c})}_{\mymat{\chi}}(\tau)}\right)-1\right) +\mymat{c_1},
		\end{aligned}
	\end{align}
	with boundary conditions $A^{(\mybmm{b}, \mybmm{c})}_{\mymat{\chi}}(0)=b_0$, and $\mymat{B^{(\mybmm{b}, \mybmm{c})}_{\mymat{\chi}}(0)}=\mymat{b_1}$. The solutions can be found numerically, for example by Runge-Kutta. $A_{\mymat{\chi^*}}^{(\mybmm{b^*}, \mybmm{c^*})}(\tau)$ and $\mymat{B_{\mymat{\chi^*}}^{(\mybmm{b^*}, \mybmm{c^*})}(\tau)}$ are defined by the solution of the following expectation,
	\begin{align*}
		& 	\E_H^\R\left[\exp\left(-b_0-\mymat{b_1}'\mymat{Y_{H+\tau}}-\int_H^{H+\tau}\left(c_0 +\mymat{c_1}'\mymat{Y_u} \right)\D u\right)\right] \\
		= &\E_H^\q\left[\exp\left(-b_0-\mymat{b_1}'\mymat{Y_{H+\tau}}-\int_H^{H+\tau}\left(c_0 +\mymat{c_1}'\mymat{Y_u} \right)\D u\right)\right] \\
		= & \exp\left(-A^{(\mybmm{b^*}, \mybmm{c^*})}_{\mymat{\chi^*}}(\tau)- \mymat{B^{(\mybmm{b^*}, \mybmm{c^*})}_{\mymat{\chi^*}}(\tau)}'\mymat{Y_H}\right),
	\end{align*}
	and are obtained similarly as in equation \eqref{eq:ODEsAB} with the coefficient characteristic $\mymat{\chi} = (\mymat{k}, \mymat{h}, \mymat{l}, \theta)$ replaced with a coefficient characteristic $\mymat{\chi^*} = (\mymat{k^*}, \mymat{h^*}, \mymat{l^*}, \theta^*)$, and with the parameter $\mybmm{b}$ and $\mybmm{c}$ replaced with $\mybmm{b^*}\triangleq\left(b_0^*, \mymat{b_1^*}\right)$ and $\mybmm{c^*}\triangleq\left(c_0^*, \mymat{c_1^*}\right)$.
\end{proposition}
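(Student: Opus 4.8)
The plan is to reduce the computation of $\phi^R$ to two nested applications of the affine transform formula of \cite{Duffie2000Transform}, glued together at the horizon $H$ by the dynamic change of measure. First I would apply Proposition \ref{prop:Ree1}\eqref{Ree:num5} with $\Qem=\q$ to the definition \eqref{eq:Rtransform}, which turns the single $\R$-expectation into an iterated expectation
\begin{align*}
	\phi^R(\mymat{z}; t, T, H) = \E_t^\p\left[\E_H^\q\left[\exp\left(-\int_H^T r(\mymat{Y_u})\D u\right)\exp\left(\mymat{z'Y_T}\right)\right]\right].
\end{align*}
The inner conditional expectation is precisely the $Q$-transform \eqref{eq:Qtransform} of \cite{Duffie2000Transform}, started at time $H$ over the remaining horizon $T-H$, with terminal exponent $\mymat{z'Y_T}$ and running discount $r(\mymat{y})=\rho_0+\mymat{\rho_1'y}$. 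Matching these to the general template in the proposition forces $\mybmm{b^*}=(0,\mymat{-z})$ and $\mybmm{c^*}=(\rho_0,\mymat{\rho_1})$, and since $\mymat{Y}$ is an affine jump diffusion with characteristic $\mymat{\chi^*}=(\mymat{k^*},\mymat{h^*},\mymat{l^*},\theta^*,\mymat{\rho^*})$ under $\q$, their transform formula yields
\begin{align*}
	\E_H^\q\left[\exp\left(-\int_H^T r(\mymat{Y_u})\D u\right)\exp\left(\mymat{z'Y_T}\right)\right]=\exp\left(-A^{(\mybmm{b^*},\mybmm{c^*})}_{\mymat{\chi^*}}(T-H)-\mymat{B_{\mymat{\chi^*}}^{(\mybmm{b^*},\mybmm{c^*})}(T-H)}'\mymat{Y_H}\right),
\end{align*}
with the coefficients solving the Riccati system \eqref{eq:ODEsAB} under $\mymat{\chi^*}$.

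Next I would treat the outer expectation. Writing $b_0\triangleq A^{(\mybmm{b^*},\mybmm{c^*})}_{\mymat{\chi^*}}(T-H)$ and $\mymat{b_1}\triangleq\mymat{B_{\mymat{\chi^*}}^{(\mybmm{b^*},\mybmm{c^*})}(T-H)}$ — both \emph{deterministic}, being values of solutions of the Riccati system — what remains is $\phi^R(\mymat{z};t,T,H)=\E_t^\p[\exp(-b_0-\mymat{b_1'Y_H})]$, an expectation over $[t,H]$ with no discounting, which may equivalently be taken under $\R$ since $\R=\p$ there by Proposition \ref{prop:Ree1}\eqref{Ree:num1}. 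This is again of the form handled by the transform formula, now under the physical characteristic $\mymat{\chi}=(\mymat{k},\mymat{h},\mymat{l},\theta)$ with $\mybmm{b}=(b_0,\mymat{b_1})$ and $\mybmm{c}=(0,0)$; the additive constant $b_0$ is harmless, entering only through the boundary value $A^{(\mybmm{b},\mybmm{c})}_{\mymat{\chi}}(0)=b_0$ and otherwise factoring through. This delivers exactly \eqref{eq:Rtransformsol} with the stated choices of $\mybmm{b},\mybmm{c},\mybmm{b^*},\mybmm{c^*}$.

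To make each step rigorous I would follow the argument of \cite{Duffie2000Transform}: posit the candidate $\exp(-A-\mymat{B'Y})$ on the relevant subinterval, apply It\^{o}'s formula with the affine generator $\mathscr{D}$, and match the coefficients of the affine dependence of $\mymat{\mu}$, $\mymat{\sigma\sigma'}$, $\lambda$, and $r$ to obtain the ODE system \eqref{eq:ODEsAB} (with $\mymat{\chi}$ on $[t,H]$ and $\mymat{\chi^*}$ on $[H,T]$), then verify the integrability conditions that upgrade the resulting local martingale to a true martingale on each subinterval. The main obstacle is not any single calculation but the bookkeeping of these regularity conditions for the \emph{composed} problem — in particular checking that the intermediate exponent $-b_0-\mymat{b_1'Y_H}$ is well defined (finite, with $\mymat{b_1}\in\mathcal{C}^N$) so that the second application of the transform formula is legitimate — together with keeping the sign and ordering conventions straight: $\mybmm{b^*}=(0,\mymat{-z})$, the roles of $\mymat{\chi}$ versus $\mymat{\chi^*}$, and the two horizons $T-H$ and $H-t$. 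The extended $R$-transform of Proposition \ref{prop:Rtransext} then follows by differentiating $\phi^R$ in the direction $\mymat{v}$, exactly as the extended transform is obtained from the transform in \cite{Duffie2000Transform}.
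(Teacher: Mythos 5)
Your proposal is correct and follows essentially the same route as the paper's own proof: write $\phi^R$ as the iterated expectation $\E_t^\p[\E_H^\q[\cdot]]$ (the paper does this via the tower property under $\R$ together with the hybrid structure of $\R$, which is exactly Proposition \ref{prop:Ree1}\eqref{Ree:num5}), solve the inner expectation as the Duffie--Pan--Singleton transform under $\mymat{\chi^*}$ over $[H,T]$ with $\mybmm{b^*}=(0,\mymat{-z})$, $\mybmm{c^*}=(\rho_0,\mymat{\rho_1})$, and then solve the outer expectation as the transform under the physical characteristic $\mymat{\chi}$ over $[t,H]$ with the inner solution's coefficients as boundary data. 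Your added remarks on verifying the local-martingale-to-martingale upgrade and the well-definedness of the intermediate exponent are the same regularity conditions the paper defers to \cite{Duffie2000Transform}.
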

\begin{proof}
	Under the integration conditions in the definition of \citet[][page  1351]{Duffie2000Transform}, we can prove that the solution of the expectation $\E_t^\p\left[\exp\left(-b_0-\mymat{b_1}'\mymat{Y_{t+\tau}}-\int_t^{t+\tau}\left(c_0 +\mymat{c_1}'\mymat{Y_u} \right)\D u\right)\right] $ is  $\exp\left(-A^{(\mybmm{b}, \mybmm{c})}_{\mymat{\chi}}(\tau)- \mymat{B^{(\mybmm{b}, \mybmm{c})}_{\mymat{\chi}}(\tau)}'\mymat{Y_t}\right)$, and the solution of the expectation $\E_H^\q\Big[\exp\Big(-b_0-\mymat{b_1}'\mymat{Y_{H+\tau}}\Big.\Big.$
	$\left.\left. -\int_H^{H+\tau}\left(c_0 +\mymat{c_1}'\mymat{Y_u} \right)\D u\right)\right]$ is $\exp\left(-A^{(\mybmm{b^*}, \mybmm{c^*})}_{\mymat{\chi^*}}(\tau)- \mymat{B^{(\mybmm{b^*}, \mybmm{c^*})}_{\mymat{\chi^*}}(\tau)}'\mymat{Y_H}\right)$, by following the proof of Proposition 1 in \cite{Duffie2000Transform}. Note that under the $\R$ measure, the coefficient characteristic is ${\chi}$ from $t$ to $H$ and ${\chi^*}$ from $H$ to $T$. Therefore, for the $R$-transform
	\begin{align*}
		\phi^R(\mymat{z}; t, T, H) &\triangleq \E_t^\RR\left[\exp{\left(-\int_H^T r(\mymat{Y_u})\D u\right)}\exp\left( \mymat{z'Y_T}\right)\right]   \\
		& = \E_t^\RR\left[\E_H^\RR\left[\exp{\left(\mymat{z'Y_T} -\int_H^T \left(\rho_0 + \mymat{\rho_1'Y_u}\right) \D u \right)}\right]\right] \\
		& = \E_t^\p\left[\E_H^\q\left[\exp{\left(\mymat{z'Y_T} -\int_H^T \left(\rho_0 + \mymat{\rho_1'Y_u}\right) \D u \right)}\right]\right] \\
		& = \E_t^\p\left[\exp{\left( -{A_{\mymat{\chi^*}}^{(\mybmm{b^*}, \mybmm{c^*})}(T-H)-  \mymat{B_{\mymat{\chi^*}}^{(\mybmm{b^*}, \mybmm{c^*})}}(T-H)}'\mymat{Y_H} \right)}\right]\\
		& = \exp\left(-A_{\mymat{\chi}}^{(\mybmm{b}, \mybmm{c})}(H-t) -\mymat{B_{\mymat{\chi}}^{(\mybmm{b}, \mybmm{c})}(H-t)}'\mymat{Y_t} \right),
	\end{align*}
	with  $\mybmm{b^*}=\left(0, \mymat{-z}\right)$,  $\mybmm{c^*}=\left(\rho_0, \mymat{\rho_1}\right)$, $ \mybmm{b} =\left({A_{\mymat{\chi^*}}^{(\mybmm{b^*}, \mybmm{c^*})}(T-H), \mymat{B_{\mymat{\chi^*}}^{(\mybmm{b^*}, \mybmm{c^*})}}(T-H)} \right)$, and $\mybmm{c} = \left(0, 0\right)$.
\end{proof}

\begin{proposition}\label{prop:Rtransext}
	Under technical regularity conditions given in \cite{Duffie2000Transform}, the extended $R$-transform $\varphi^R$ of $\mymat{Y}$ defined by equation \eqref{eq:Rtransform2} is given by
	\begin{align}\label{eq:Rtransformsol2}
		\begin{aligned}
			\varphi^R(\mymat{v}, \mymat{z};  t, T, H) =\phi^R(\mymat{z};  t, T, H)\left(D_{\mymat{\chi}}^{(\mybmm{b}, \mybmm{c}, \mybmm{d})}(H-t) +\mymat{E_{\mymat{\chi}}^{(\mybmm{b}, \mybmm{c}, \mybmm{d})}(H-t)}'\mymat{Y_t} \right),
		\end{aligned}
	\end{align}
	where $ \mybmm{b} =\left({A_{\mymat{\chi^*}}^{(\mybmm{b^*}, \mybmm{c^*})}(T-H),  \mymat{B_{\mymat{\chi^*}}^{(\mybmm{b^*}, \mybmm{c^*})}}(T-H)} \right)$, $\mybmm{c} = \left(0, 0\right)$, $ \mybmm{d} =\left({D_{\mymat{\chi^*}}^{(\mybmm{b^*}, \mybmm{c^*}, \mybmm{d^*})}(T-H),  \mymat{E_{\mymat{\chi^*}}^{(\mybmm{b^*}, \mybmm{c^*}, \mybmm{d^*})}}(T-H)} \right)$, $\mybmm{b^*}=\left(0, \mymat{-z}\right)$, $\mybmm{c^*}=\left(\rho_0, \mymat{\rho_1}\right)$, and $\mybmm{d^*}=\left(0, \mymat{v}\right)$.
	
	For any well-defined $\mybmm{b}\triangleq\left(b_0, \mymat{b_1} \right)\in \mathcal{C}^{N+1}$, $\mybmm{c}\triangleq\left(c_0, \mymat{c_1}\right) \in \mathcal{R}^{N+1}$, $\mybmm{d}\triangleq\left(d_0, \mymat{d_1} \right)\in \mathcal{R}^{N+1}$, and a coefficient characteristic $\mymat{\chi} = (\mymat{k}, \mymat{h}, \mymat{l}, \theta)$, $A_{\mymat{\chi}}^{(\mybmm{b}, \mybmm{c})}(\tau)$, $\mymat{B_{\mymat{\chi}}^{(\mybmm{b}, \mybmm{c})}(\tau)}$, $ D_{\mymat{\chi}}^{(\mybmm{b}, \mybmm{c}, \mybmm{d})}(\tau)$ and $ \mymat{E_{\mymat{\chi}}^{(\mybmm{b}, \mybmm{c}, \mybmm{d})}(\tau)}$ are defined by the solution of the following expectation with $\tau<H-t$,
	\begin{align*}
		&\E_t^\R\left[\exp\left(-b_0-\mymat{b_1}'\mymat{Y_{t+\tau}}-\int_t^{t+\tau}\left(c_0 +\mymat{c_1}'\mymat{Y_u} \right)\D u\right)\left(d_0 + \mymat{d_1}'\mymat{Y_{t+\tau}}\right)\right] \\
		=&\E_t^\p\left[\exp\left(-b_0-\mymat{b_1}'\mymat{Y_{t+\tau}}-\int_t^{t+\tau}\left(c_0 +\mymat{c_1}'\mymat{Y_u} \right)\D u\right)\left(d_0 + \mymat{d_1}'\mymat{Y_{t+\tau}}\right)\right] \\
		=& \exp\left(-A^{(\mybmm{b}, \mybmm{c})}_{\mymat{\chi}}(\tau)- \mymat{B^{(\mybmm{b}, \mybmm{c})}_{\mymat{\chi}}(\tau)}'\mymat{Y_t}\right)\cdot\left(D^{(\mybmm{b}, \mybmm{c}, \mybmm{d})}_{\mymat{\chi}}(\tau)+ \mymat{E^{(\mybmm{b}, \mybmm{c}, \mybmm{d})}_{\mymat{\chi}}(\tau)}'\mymat{Y_t}\right),
	\end{align*}
	where $A^{(\mybmm{b}, \mybmm{c})}_{\mymat{\chi}}(\tau)$ and $\mymat{B^{(\mybmm{b}, \mybmm{c})}_{\mymat{\chi}}(\tau)}$ are obtained as in Proposition \ref{prop:Rtrans}, and $D^{(\mybmm{b}, \mybmm{c}, \mybmm{d})}_{\mymat{\chi}}(\tau)$ and $\mymat{E^{(\mybmm{b}, \mybmm{c}, \mybmm{d})}_{\mymat{\chi}}(\tau)}$ satisfy the complex-valued ODEs
	\begin{align}\label{ExRODE}
		\begin{aligned}
			\frac{\D{D^{(\mybmm{b}, \mybmm{c}, \mybmm{d})}_{\mymat{\chi}}(\tau)}}{\D \tau} &= \mymat{k_0'E^{(\mybmm{b}, \mybmm{c}, \mybmm{d})}_{\mymat{\chi}}(\tau)} - \mymat{B^{(\mybmm{b}, \mybmm{c})}_{\mymat{\chi}}(\tau)'h_0 E^{(\mybmm{b}, \mybmm{c}, \mybmm{d})}_{\mymat{\chi}}(\tau)} + l_0
			\nabla\theta\left(\mymat{-B^{(\mybmm{b}, \mybmm{c})}_{\mymat{\chi}}(\tau)}\right) \mymat{E^{(\mybmm{b}, \mybmm{c}, \mybmm{d})}_{\mymat{\chi}}(\tau)}, \\
			\frac{\D{\mymat{E^{(\mybmm{b}, \mybmm{c}, \mybmm{d})}_{\mymat{\chi}}(\tau)}}}{\D \tau} &=   \mymat{k_1'E^{(\mybmm{b}, \mybmm{c}, \mybmm{d})}_{\mymat{\chi}}(\tau)} -\mymat{B^{(\mybmm{b}, \mybmm{c})}_{\mymat{\chi}}(\tau)'h_1 E^{(\mybmm{b}, \mybmm{c}, \mybmm{d})}_{\mymat{\chi}}(\tau)} + \mymat{l_1}\nabla\theta \left(\mymat{-B^{(\mybmm{b}, \mybmm{c})}_{\mymat{\chi}}(\tau)}\right)\mymat{E^{(\mybmm{b}, \mybmm{c}, \mybmm{d})}_{\mymat{\chi}}(\tau)},
		\end{aligned}
	\end{align}
	with boundary conditions $D^{(\mybmm{b}, \mybmm{c}, \mybmm{d})}_{\mymat{\chi}}(0)=d_0$, and $\mymat{E^{(\mybmm{b}, \mybmm{c}, \mybmm{d})}_{\mymat{\chi}}(0)}=\mymat{d_1}$, and $\nabla\theta(\mymat{x})$ is gradient of $\theta(\mymat{x})$ with respect to $\mymat{x} \in \mathcal{C}^N$. $A_{\mymat{\chi^*}}^{(\mybmm{b^*}, \mybmm{c^*})}(\tau)$, $\mymat{B_{\mymat{\chi^*}}^{(\mybmm{b^*}, \mybmm{c^*})}(\tau)}$, $D_{\mymat{\chi^*}}^{(\mybmm{b^*}, \mybmm{c^*}, \mybmm{d^*})}(\tau)$ and $ \mymat{E_{\mymat{\chi^*}}^{(\mybmm{b^*}, \mybmm{c^*}, \mybmm{d^*})}(\tau)}$ are defined by the solution of the following expectation,
	\begin{align*}
		&\E_H^\R\left[\exp\left(-b_0-\mymat{b_1}'\mymat{Y_{H+\tau}}-\int_H^{H+\tau}\left(c_0 +\mymat{c_1}'\mymat{Y_u} \right)\D u\right)\left(d_0 + \mymat{d_1}'\mymat{Y_{H+\tau}}\right)\right] \\
		=&\E_H^\q\left[\exp\left(-b_0-\mymat{b_1}'\mymat{Y_{H+\tau}}-\int_H^{H+\tau}\left(c_0 +\mymat{c_1}'\mymat{Y_u} \right)\D u\right)\left(d_0 + \mymat{d_1}'\mymat{Y_{H+\tau}}\right)\right] \\
		=& \exp\left(-A^{(\mybmm{b^*}, \mybmm{c^*})}_{\mymat{\chi^*}}(\tau)- \mymat{B^{(\mybmm{b^*}, \mybmm{c^*})}_{\mymat{\chi^*}}(\tau)}'\mymat{Y_H}\right)\cdot\left(D^{(\mybmm{b^*}, \mybmm{c^*}, \mybmm{d^*})}_{\mymat{\chi^*}}(\tau)+ \mymat{E^{(\mybmm{b^*}, \mybmm{c^*}, \mybmm{d^*})}_{\mymat{\chi^*}}(\tau)}'\mymat{Y_H}\right),
	\end{align*}
	where $A^{(\mybmm{b^*}, \mybmm{c^*})}_{\mymat{\chi^*}}(\tau)$ and $\mymat{B^{(\mybmm{b^*}, \mybmm{c^*})}_{\mymat{\chi^*}}(\tau)}$ are obtained as in Proposition \ref{prop:Rtrans}, and $D^{(\mybmm{b^*}, \mybmm{c^*}, \mybmm{d^*})}_{\mymat{\chi^*}}(\tau)$ and $\mymat{E^{(\mybmm{b^*}, \mybmm{c^*}, \mybmm{d^*})}_{\mymat{\chi^*}}(\tau)}$  are obtained similarly as in equation \eqref{ExRODE} with the coefficient characteristic $\mymat{\chi} = (\mymat{k}, \mymat{h}, \mymat{l}, \theta)$ replaced with  a coefficient characteristic $\mymat{\chi^*} = (\mymat{k^*}, \mymat{h^*}, \mymat{l^*}, \theta^*)$, and with the parameter $\mybmm{b}$, $\mybmm{c}$, and $\mybmm{d}$ replaced with $\mybmm{b^*}\triangleq\left(b_0^*, \mymat{b_1^*}\right)$, $\mybmm{c^*}\triangleq\left(c_0^*, \mymat{c_1^*}\right)$, and $\mybmm{d^*}\triangleq\left(d_0^*, \mymat{d_1^*} \right)$.
\end{proposition}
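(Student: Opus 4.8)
The plan is to mimic the proof of Proposition~\ref{prop:Rtrans}, exploiting that under the $\RR$ measure the state vector $\mymat{Y}$ evolves with the physical characteristic $\mymat{\chi}$ on $[t,H)$ and with the risk-neutral characteristic $\mymat{\chi^*}$ on $[H,T]$. First I would apply Proposition~\ref{prop:Ree1}\eqref{Ree:num5} (with $\Qem=\q$, the numeraire being the money market account) to break the expectation defining $\varphi^R$ in~\eqref{eq:Rtransform2} at the horizon date:
\begin{align*}
	\varphi^R(\mymat{v},\mymat{z};t,T,H) = \E_t^\p\left[\E_H^\q\left[\exp\left(-\int_H^T r(\mymat{Y_u})\D u\right)\left(\mymat{v}'\mymat{Y_T}\right)\exp\left(\mymat{z}'\mymat{Y_T}\right)\right]\right].
\end{align*}
The inner conditional expectation is exactly the extended transform of \cite{Duffie2000Transform} evaluated under $\q$ over $[H,T]$; following the argument behind equations 2.13--2.14 of \cite{Duffie2000Transform}, it equals
\begin{align*}
	\exp\left(-A^{(\mybmm{b^*},\mybmm{c^*})}_{\mymat{\chi^*}}(T-H)-\mymat{B^{(\mybmm{b^*},\mybmm{c^*})}_{\mymat{\chi^*}}(T-H)}'\mymat{Y_H}\right)\left(D^{(\mybmm{b^*},\mybmm{c^*},\mybmm{d^*})}_{\mymat{\chi^*}}(T-H)+\mymat{E^{(\mybmm{b^*},\mybmm{c^*},\mybmm{d^*})}_{\mymat{\chi^*}}(T-H)}'\mymat{Y_H}\right),
\end{align*}
with $\mybmm{b^*}=(0,\mymat{-z})$, $\mybmm{c^*}=(\rho_0,\mymat{\rho_1})$, $\mybmm{d^*}=(0,\mymat{v})$, where $A,\mymat{B}$ solve the Riccati ODEs~\eqref{eq:ODEsAB} and $D,\mymat{E}$ the linear ODEs~\eqref{ExRODE}, all carrying the characteristic $\mymat{\chi^*}$.

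Second, writing $b_0=A^{(\mybmm{b^*},\mybmm{c^*})}_{\mymat{\chi^*}}(T-H)$, $\mymat{b_1}=\mymat{B^{(\mybmm{b^*},\mybmm{c^*})}_{\mymat{\chi^*}}}(T-H)$, $d_0=D^{(\mybmm{b^*},\mybmm{c^*},\mybmm{d^*})}_{\mymat{\chi^*}}(T-H)$ and $\mymat{d_1}=\mymat{E^{(\mybmm{b^*},\mybmm{c^*},\mybmm{d^*})}_{\mymat{\chi^*}}}(T-H)$, the remaining outer expectation becomes $\E_t^\p[\exp(-b_0-\mymat{b_1}'\mymat{Y_H})(d_0+\mymat{d_1}'\mymat{Y_H})]$, which is again an extended-transform expectation --- now under $\p$, over horizon $H-t$, with characteristic $\mymat{\chi}$ and no discounting, i.e.\ $\mybmm{c}=(0,0)$. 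A second application of the \cite{Duffie2000Transform} argument then gives
\begin{align*}
	\E_t^\p\left[\exp\left(-b_0-\mymat{b_1}'\mymat{Y_H}\right)\left(d_0+\mymat{d_1}'\mymat{Y_H}\right)\right] = \exp\left(-A^{(\mybmm{b},\mybmm{c})}_{\mymat{\chi}}(H-t)-\mymat{B^{(\mybmm{b},\mybmm{c})}_{\mymat{\chi}}(H-t)}'\mymat{Y_t}\right)\left(D^{(\mybmm{b},\mybmm{c},\mybmm{d})}_{\mymat{\chi}}(H-t)+\mymat{E^{(\mybmm{b},\mybmm{c},\mybmm{d})}_{\mymat{\chi}}(H-t)}'\mymat{Y_t}\right),
\end{align*}
with $\mybmm{b}=(b_0,\mymat{b_1})$ and $\mybmm{d}=(d_0,\mymat{d_1})$, which are precisely the parameter assignments in the statement. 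Recognizing the leading exponential factor as $\phi^R(\mymat{z};t,T,H)$ by Proposition~\ref{prop:Rtrans} --- the $\mybmm{b},\mybmm{c}$ arising here coincide with those in~\eqref{eq:Rtransformsol} --- then yields~\eqref{eq:Rtransformsol2}.

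The step that requires the most care is the reuse of the \cite{Duffie2000Transform} machinery across the break point $H$: one must verify that the affine structure genuinely ``closes'' there, i.e.\ that a function which is exponential-affine times affine in $\mymat{Y_H}$ has, under $\p$, an expectation that is again exponential-affine times affine in $\mymat{Y_t}$, with $D,\mymat{E}$ solving~\eqref{ExRODE} and meeting the boundary conditions $D^{(\mybmm{b},\mybmm{c},\mybmm{d})}_{\mymat{\chi}}(0)=d_0$ and $\mymat{E^{(\mybmm{b},\mybmm{c},\mybmm{d})}_{\mymat{\chi}}(0)}=\mymat{d_1}$. This is the perturbation argument of \cite{Duffie2000Transform}: the extended transform is the directional derivative of the basic transform in the direction $\mymat{v}$, so $\varphi^R=\phi^R\,(D+\mymat{E}'\mymat{Y})$, and the linear ODEs for $D,\mymat{E}$ follow by differentiating the Riccati ODEs already satisfied by $A,\mymat{B}$. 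Making this rigorous requires the integrability and regularity conditions of that paper to hold on $[t,H]$ under $\p$ as well as on $[H,T]$ under $\q$, which are assumed in the hypothesis; once closure is granted, deriving~\eqref{ExRODE} and checking the boundary conditions reduces to substituting the ansatz into the Feynman--Kac PDE and matching affine coefficients, exactly as in the proof of Proposition~\ref{prop:Rtrans}.
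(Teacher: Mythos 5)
Your proposal is correct and takes essentially the same route as the paper: the paper's own proof of this proposition is a single remark that it ``follows similarly from the proof of Proposition \ref{prop:Rtrans},'' and your argument is precisely that omitted derivation --- split the $\RR$-expectation at $H$ via Proposition \ref{prop:Ree1}\eqref{Ree:num5}, apply the \cite{Duffie2000Transform} extended transform under $\q$ on $[H,T]$, and then apply it again under $\p$ on $[t,H]$ with the resulting affine-times-exponential-affine function of $\mymat{Y_H}$ supplying the boundary conditions, identifying the leading exponential with $\phi^R$. Your closing paragraph on verifying that the affine structure closes at the break point is exactly the right place to flag where the regularity conditions of \cite{Duffie2000Transform} are being invoked.
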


\begin{proof}
	The proof follows similarly from the proof of Proposition \ref{prop:Rtrans}, we omit the specific steps.
\end{proof}

\section{Expected Future Corporate Bond Price under the CDG Model}\label{app:CDG}

To compute the default probability term $\E_t^{\RTo} \left[\I_{\left\{\tau_{[t,T]}<T\right\}} \right]$ in equation \eqref{eq:gvth3}, we follow the method of \cite{mueller2000simple} which requires the conditional mean and conditional variance of the log-leverage ratio $l$. As can be seen from equation \eqref{eq:CDGlandr}, the variable $l$ follows a Gaussian distribution under the $\RTo$ measure. Therefore, we have
\begin{align*}
	\E_t^{\RTo} \left[\I_{\left\{\tau_{[t,T]}<T\right\}} \right]=\Pr(\tau_{[t,T]}<T)=\sum_{i=1}^n q_i\left(\tau_{\ln K} = t+t_i | l_t, r_t, t\right) = \sum_{i=1}^n q_i,
\end{align*}
where
\begin{align*}
	q_1 &=\frac{\mathcal{N}(a_1)}{\mathcal{N}(b_{1,1})}, \quad   q_i = \frac{\mathcal{N}(a_i) - \sum\limits_{j=1}^{i-1} \mathcal{N}(b_{i,j})\times q_j }{\mathcal{N}(b_{i,i})},
\end{align*}
with
\begin{align*}
	\mathcal{N}(a_i) &\triangleq  \Pr \left(l_{t+i \Delta}\geq \ln K | l_t, r_t, t\right) \\
	&= \mathcal{N}\left(\frac{ M(t, t+i\Delta, H, T) - \ln K}{\sqrt{S(t, t+i\Delta)}} \right), \\
	\mathcal{N}(b_{i,j})&\triangleq \Pr \left(l_{t+i \Delta} \geq \ln K | l_{t+t_j}=\ln K, l_t, r_t, t\right) \\
	&= \mathcal{N}\left(\frac{ \tilde{M}(t, t+i\Delta, t+t_j, \ln K, H, T) -\ln K}{\sqrt{\tilde{S}(t, t+i\Delta, t_j)}} \right).
\end{align*}
The log-leverage ratio $(l_u | l_t, r_t, t) \sim \mathcal{N}\left(M(t,u,H,T)\right.$, $\left.S(t, u)\right)$, for $t<u<T$. The conditional mean $M(t, u, H, T)$ and the conditional variance $S(t, u)$ can be computed as
\begin{align*}
	M(t,u,H,T) &= l_t\ue^{-\lambda(u-t)} - \left(1 + \lambda\phi \right)\left( r_t - m_r \right) \ue^{-\alpha_r\left(u-t\right)}B_{\left(\lambda - \alpha_r\right)}{\left(u-t\right)} \\
	&- \left( \frac{\rho\sigma\sigma_r}{\alpha_r} + \left(1 + \lambda\phi \right)\frac{\sigma_r^2}{2{\alpha_r}^2} \right)\left( \ue^{-\alpha_r \left(T-u\right)} - \ue^{-\alpha_r\left(H-u\right)} \right) B_{\left(\lambda + \alpha_r\right)}{(u-t)} \\
	&+ \left(1 + \lambda\phi \right)\frac{\sigma_r^2}{2{\alpha_r}^2} \left(  \ue^{-\alpha_r \left(T-u\right)} - \ue^{-\alpha_r\left(H-u\right)}  \right) \ue^{-2\alpha_r\left(u-t\right)}B_{\left(\lambda - \alpha_r\right)}{(u-t)} \\
	& + \left(  \lambda \bar{l} - \sigma\gamma^S - \left(1+\lambda\phi\right)m_r \right)B_{\lambda}{(u-t)}  \\
	&+ \I_{\{u\geq H\}} \left[ -\left(1 + \lambda\phi \right)\left(\frac{\sigma_r^2}{2{\alpha_r}^2} + \frac{\sigma_r\gamma_r}{\alpha_r} \right)\ue^{-\alpha_r\left(u-H\right)}B_{\left(\lambda - \alpha_r\right)}{\left(u-H\right)} \right. \\
	& - \left( \frac{\rho\sigma\sigma_r}{\alpha_r}  +   \left(1 + \lambda\phi \right)\frac{\sigma_r^2}{2{\alpha_r}^2} \right)\ue^{-\alpha_r\left(H-u\right)} B_{\left(\lambda + \alpha_r\right)}{(u-H)}  \\
	& + \left. \left( \frac{\rho\sigma\sigma_r}{\alpha_r} + \sigma\gamma^S + \left(1 + \lambda\phi \right)\left(\frac{\sigma_r^2}{{\alpha_r}^2} + \frac{\sigma_r\gamma_r}{\alpha_r}   \right)  \right)B_{\lambda}{\left(u-H\right)}  \right], \\
	S(t,u) &= \left( \frac{\left(1 + \lambda\phi \right)\sigma_r}{\lambda - \alpha_r}  \right)^2 B_{2\alpha_r}{\left(u-t\right)}  \\
	& +\left( \sigma^2 +  \left( \frac{\left(1 + \lambda\phi \right)\sigma_r}{\lambda - \alpha_r}  \right)^2 - \frac{2\rho\sigma\sigma_r \left(1 + \lambda\phi \right)}{\lambda - \alpha_r}  \right)B_{2\lambda}{\left(u-t\right)} \\
	& + 2\left(\frac{\rho\sigma\sigma_r \left(1 + \lambda\phi \right)}{\lambda - \alpha_r} - \left( \frac{\left(1 + \lambda\phi \right)\sigma_r}{\lambda - \alpha_r}  \right)^2   \right)B_{\left(\lambda+\alpha_r\right)}{\left(u-t\right)}.
\end{align*}

In addition, $(l_u | l_s, l_t, r_t, t) \sim \mathcal{N}\left(\tilde{M}(t,u,s, l_s, H,T), \tilde{S}(t, u, s)\right)$, for $u>s$, which can be obtained as
\begin{align*}
	\tilde{M}(t,u,s, l_s, H,T) &= M(t,u,H,T) + \frac{V(t,u,s)}{S(t,s)}\times \left(l_s - M(t, s, H, T)\right), \\
	\tilde{S}(t, u, s) &= S(t,u)\times\left(1-\frac{V(t,u,s)^2}{S(t,u)S(t,s)}\right),
\end{align*}
where $V(t,u,s)$ is the covariance of the random variables $(l_u | l_t, r_t, t )$ and $(l_s | l_t, r_t, t )$, given as
\begin{align*}
	V(t,u,s) &= \left( \frac{\left(1 + \lambda\phi \right)\sigma_r}{\lambda - \alpha_r}  \right)^2 \frac{\ue^{-\alpha_r\left(u-s\right)}-\ue^{-\alpha_r\left(u+s-2t\right)} }{2\alpha_r}\\
	& +\left( \sigma^2 +  \left( \frac{\left(1 + \lambda\phi \right)\sigma_r}{\lambda - \alpha_r}  \right)^2 - \frac{2\rho\sigma\sigma_r \left(1 + \lambda\phi \right)}{\lambda - \alpha_r}  \right)\frac{\ue^{-\lambda\left(u-s\right)}-\ue^{-\lambda\left(u+s-2t\right)} }{2\lambda} \\
	& + \left(\frac{\rho\sigma\sigma_r \left(1 + \lambda\phi \right)}{\lambda - \alpha_r} - \left( \frac{\left(1 + \lambda\phi \right)\sigma_r}{\lambda - \alpha_r}  \right)^2   \right) \\
	&\times\frac{\ue^{-\alpha_r\left(u-s\right)}-\ue^{-\alpha_r\left(u-t\right)-\lambda\left(s-t\right)} -\ue^{-\alpha_r\left(s-t\right)-\lambda\left(u-t\right)} + \ue^{-\lambda\left(u-s\right)} }{\lambda + \alpha_r}.
\end{align*}

Furthermore, $\E_t^{\p}\left[ P(H, T)\right]$ in equation \eqref{eq:gvth3} is the expected future bond price under the Vasicek model, which can be solved using the results derived in Section \ref{sec:Rmeasure}.\ref{sec:atsm}, and the solution is given in Internet Appendix Section \ref{app:someATSM}.\ref{app:Vas} (see equation \eqref{eq:vasbondprice}).

\end{appendix}


\clearpage

\putbib

%
%

\end{bibunit}

\clearpage

\phantomsection
 \addcontentsline{toc}{section}{Internet Appendix}

 \setcounter{equation}{0}
\setcounter{figure}{0}
\setcounter{table}{0}
\setcounter{lemma}{0}
\setcounter{proposition}{0}
\setcounter{corollary}{0}
\setcounter{footnote}{0}
\setcounter{page}{0}

 \begin{appendices}
 \begin{bibunit}
 \begin{center}
 {\large \bf Internet Appendix for \vspace{-0.05cm}\\
 ``A Theory of Equivalent Expectation Measures for Contingent Claim Returns'' \vspace{-0.05cm}}\\
 {SANJAY K. NAWALKHA and XIAOYANG ZHUO\vspace{-0.05cm}}\footnote{Citation format: Nawalkha, Sanjay K, and Xiaoyang Zhuo, Internet Appendix for ``A Theory of Equivalent Expectation Measures for Contingent Claim Returns,'' \emph{Journal of Finance} [Forthcoming]. Please note: Wiley-Blackwell is not responsible for the content or functionality of any additional information provided by the authors. Any queries (other than missing material) should be directed to the authors of the article.}
 \end{center}

 \begin{abstract}
  \noindent This Internet Appendix provides the following supplemental sections to the main article:
\begin{itemize}
  \item Section \ref{app:DiffEEM}. Additional Theoretical Results Related to Different EEMs
   \item Section \ref{app:someATSM}. Expected Future Bond Prices under Specific ATSMs
  \item Section \ref{sec:qtsm}. Expected Future Bond Price under the Quadratic Term Structure Model
  \item Section \ref{app:fsoexpected}. Expected Future Price of Forward-Start Option using the Forward-Start $R$-Transform
  \item Section \ref{app:MertonVasicek}. Expected Future Option Price under the Merton Model with Vasicek Short Rate Process
  \item Section \ref{iapp:margrabe}. Expected Future Price of an Asset Exchange Option
  \item Section \ref{sec:extractFSPD}. A Procedure to Extract the Expected FSPD
\end{itemize}
 \end{abstract}

 \setcounter{page}{1}

 \setcounter{section}{0}
 \setcounter{subsection}{0}

 \renewcommand{\thesection}{\Roman{section}}
 \renewcommand{\thesubsection}{\Alph{subsection}}

 \setcounter{equation}{0}
 \renewcommand{\theequation}{\thesection.\arabic{equation}}



\renewcommand{\theequation}{IA\arabic{equation}}%
\renewcommand{\thefigure}{IA.\arabic{figure}} \setcounter{figure}{0}
\renewcommand{\thetable}{IA.\Roman{table}} \setcounter{table}{0}

\renewcommand{\thelemma}{IA\arabic{lemma}}
\renewcommand{\thetheorem}{IA\arabic{theorem}}
\renewcommand{\thedefinition}{IA\arabic{definition}}
\renewcommand{\theproposition}{IA\arabic{proposition}}
\renewcommand{\thecorollary}{IA\arabic{corollary}}

\clearpage

\begin{center}
\section{Additional Theoretical Results Related to Different EEMs}\label{app:DiffEEM} \vspace{-1em}
\end{center}

\subsection{Properties of the ${\mathbb{R}}^{*}_1$ Measure}\label{app:propofReeo}
The properties of the $\Reeo$ measure is given by the following proposition:
\begin{proposition}\label{prop:Ree1star}
	For all $0\leq t\leq H\leq s \leq T$, the $\Reeo$ measure has the following properties:
	\begin{enumerate}[(i)]
		\item $\Reeo(A\,|\,\F_{t})=\E_t^\p\left[\frac{\Y_H}{\E_t^\p\left[\Y_H\right]}\I_A\right]$ for all $A\in \mathcal{F}_H$. \label{Ree1:num1}
		\item $\Reeo(A\,|\,\F_{s})=\Ree(A\,|\,\F_{s})=\Qem(A\,|\,\F_{s})$ for all $A\in \mathcal{F}_{T}$. \label{Ree1:num2}
		\item When $H=T$, $\Reeo(A\,|\,\F_{t})=\E_t^\p\left[\frac{\Y_T}{\E_t^\p\left[\Y_T\right]}\I_A\right]$  for all $A\in \mathcal{F}_T$.  \label{Ree1:num3}
		\item When $H=t$, $\Reeo(A\,|\,\F_{t})=\Ree(A\,|\,\F_{t})=\Qem(A\,|\,\F_{t})$ for all $A\in \mathcal{F}_T$. \label{Ree1:num4}
		\item  $\E_t^\Reeo\left[Z_T\right] = \E_t^\Ree\left[\frac{\Y_H}{\E_t^\Ree\left[\Y_H\right]} \cdot Z_T\right]= \E_t^\p\left[\frac{\Y_H}{\E_t^\p\left[\Y_H\right]}\E_H^\Qem\left[Z_T\right]\right]$ for any random variable $Z_T$ at time $T$. \label{Ree1:num5}
	\end{enumerate}
\end{proposition}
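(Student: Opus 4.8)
The plan is to obtain all five properties by direct manipulation of the explicit Radon-Nikod\'ym derivative process $\LHso$ supplied by Lemma \ref{lemma:reeo}, using only two tools: the fact (proved in that lemma) that $\LHso$ is a $\p$-martingale, and the abstract Bayes rule $\E_H^\p[\LT Z_T]=\LH\,\E_H^\Qem[Z_T]$ for $\F_T$-measurable $Z_T$, which is already invoked in the proof of Theorem \ref{thm:main}. Write $m_H\triangleq\E_0^\p[\Y_H]$ for the deterministic normalizer, so that $\LHso=(\Y_H/m_H)(\Ls/\LH)$ on $\{s\ge H\}$ and $\LHso=\E_s^\p[\Y_H]/m_H$ on $\{s<H\}$; note the two branches agree at $s=H$ since $\E_H^\p[\Y_H]=\Y_H$, so in particular $\LHHo=\Y_H/m_H$ and $\LHto=\E_t^\p[\Y_H]/m_H$ for every $t\le H$.

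For (i), fix $A\in\F_H$ with $t\le H$. Then $\Reeo(A\,|\,\F_t)=\E_t^\p[(\LHTo/\LHto)\,\I_A]$; since $\I_A$ and $\LHto$ are $\F_H$-measurable I would first condition on $\F_H$ and use the martingale property of $\LHso$ to replace $\E_H^\p[\LHTo]$ by $\LHHo$, giving $\Reeo(A\,|\,\F_t)=\E_t^\p[(\LHHo/\LHto)\,\I_A]$. Substituting $\LHHo/\LHto=\Y_H/\E_t^\p[\Y_H]$ finishes (i), and (iii) is simply the case $H=T$. For (ii), fix $A\in\F_T$ with $s\ge H$; on $\{s\ge H\}$ both $\LHTo$ and $\LHso$ lie in the first branch and the factor $\Y_H/m_H$ cancels, so $\LHTo/\LHso=\LT/\Ls=\LHT/\LHs$, the $\Ree$-density ratio. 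Hence $\Reeo(A\,|\,\F_s)=\E_s^\p[(\LHT/\LHs)\,\I_A]=\Ree(A\,|\,\F_s)$, and $\Ree(A\,|\,\F_s)=\Qem(A\,|\,\F_s)$ is Proposition \ref{prop:Ree1}\eqref{Ree:num2}; (iv) is the case $H=t$, where $\F_s=\F_t$. For (v), the first equality is the defining relation \eqref{eq:QIT12def} of $\Reeo$ together with $\E_t^\Ree[\Y_H]=\E_t^\p[\Y_H]$ (the chain of identities in \eqref{eq:RandR1}); for the second, write $\E_t^\Reeo[Z_T]=\E_t^\p[(\LHTo/\LHto)Z_T]$, condition on $\F_H$, pull the $\F_H$-measurable factor $\Y_H/(m_H\LH)$ out of $\E_H^\p[\LHTo Z_T]$, apply the Bayes rule to get $\E_H^\p[\LHTo Z_T]=(\Y_H/m_H)\,\E_H^\Qem[Z_T]$, and divide by $\LHto=\E_t^\p[\Y_H]/m_H$.

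I do not expect a genuine obstacle here: every claim reduces to bookkeeping with the density in Lemma \ref{lemma:reeo}, and the five statements parallel the corresponding items of Proposition \ref{prop:Ree1}. The two places that need a little care are (a) checking that the two branches of $\LHso$ glue correctly at $s=H$, so that $\LHHo=\Y_H/m_H$ is unambiguous and $\LHto=\E_t^\p[\Y_H]/m_H$ holds for all $t\le H$; and (b) routing the conditioning the right way in (i) and (v) --- always passing to expectations under $\p$ and then conditioning on $\F_H$, rather than attempting to iterate $\pp$- and $\Qem$-expectations directly across the horizon $H$, which is precisely the step that is illegitimate for $\pp$ and $\Qem$ but becomes valid once the whole computation is carried out under $\p$.
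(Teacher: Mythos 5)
Your proposal is correct and follows essentially the same route as the paper: direct bookkeeping with the Radon--Nikod\'ym density of Lemma \ref{lemma:reeo}, cancellation of the numeraire factor for $s\geq H$ in (ii), and reduction of (iii) and (iv) to special cases. The only cosmetic difference is in (v), where the paper invokes the tower property under $\Reeo$ together with the already-proven parts (i) and (ii), while you redo the same computation at the level of the $\p$-densities; both are equivalent.
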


\begin{proof}
For the first property, for all $0\leq t\leq H$, given $A\in \F_H$, we have
\begin{align*}
\Reeo(A|\F_t) &=\E_t^\Reeo\left[\I_A \right]= \E_t^\pp\left[\frac{\LHHo}{\LHto}\cdot\I_A   \right]
= \E_t^\pp\left[ \frac{\Y_H/\E_0^\p\left[\Y_H\right]\cdot \LH/\LH}{ {\E_t^\p\left[\Y_H\right]}/{\E_0^\p\left[\Y_H\right]}}\cdot\I_A   \right]\\
&=\E_t^\p\left[\frac{\Y_H}{\E_t^\p\left[\Y_H\right]}\I_A\right].
\end{align*}

For the second property, for all $H\leq s\leq T$, given $A\in \F_T$, we have
\begin{align*}
\begin{aligned}
 \Reeo(A\,|\,\F_s) &= \E^\Reeo_s\left[\I_A\right]=\E_s^\p\left[\frac{\LHTo}{\LHso}\I_A\right]
 =\E_s^\p\left[\frac{\Y_H/\E_0^\p\left[\Y_H\right]\cdot\LT/\LH}{\Y_H/\E_0^\p\left[\Y_H\right]\cdot\Ls/\LH}\I_A\right] \\
 & =\E_s^\p\left[\frac{\LT}{\Ls}\I_A\right] = \E_s^\Qem\left[\I_A\right] =\Qem(A\,|\,\F_s). \\
\end{aligned}
\end{align*}

Furthermore, according to Proposition \ref{prop:Ree1}\eqref{Ree:num2}, $\Reeo(A\,|\,\F_s) = \Ree(A\,|\,\F_s)$ also holds for all $A\in \F_T$.

The third property is a special case of the first property when $H=T$, while the fourth property is a special case of the second property when $H=t$.

The first part of the last property is a direct result of equation \eqref{eq:QIT12def}, which was proven in equation \eqref{eq:RandR1}. For the second part of the last property, we have
\begin{align*}
\E_t^\Reeo\left[Z_T\right] = \E_t^\Reeo\left[\E_H^\Reeo\left[Z_T\right]\right] = \E_t^\Reeo\left[\E_H^\Qem\left[Z_T\right]\right] = \E_t^\p\left[\frac{\Y_H}{\E_t^\p\left[\Y_H\right]}\E_H^\Qem\left[Z_T\right]\right],
\end{align*}
where the second equality is an application of the second property, the last equality is an application of the first property, since $\E_H^\Qem\left[Z_T\right]$ is a random variable that is $\F_H$ measurable.
\end{proof}

Propositions \ref{prop:Ree1star}\eqref{Ree1:num2} and \ref{prop:Ree1star}\eqref{Ree1:num4} for the  $\Reeo$ measure are identical to the corresponding Propositions \ref{prop:Ree1}\eqref{Ree:num2} and \ref{prop:Ree1}\eqref{Ree:num4} for the $\Ree$ measure, respectively. According to Proposition \ref{prop:Ree1star}\eqref{Ree1:num2}, the conditional $\Reeo$ probabilities at any time $s$ on or after time $H$, of the events at time $T$, are the same as the corresponding conditional $\Ree$ probabilities, as well as the conditional $\Qem$ probabilities of those events. According to Proposition \ref{prop:Ree1star}\eqref{Ree1:num4}, when $H = t$, the $\Reeo$ measure becomes the same as the $\Ree$ measure and the $\Qem$ measure, and the expected value computing problem becomes the traditional claim pricing problem. Propositions \ref{prop:Ree1star}\eqref{Ree1:num1} and \ref{prop:Ree1star}\eqref{Ree1:num3} for the $\Reeo$ measure are different from the corresponding Propositions \ref{prop:Ree1}\eqref{Ree:num1} and \ref{prop:Ree1}\eqref{Ree:num3} for the $\Ree$ measure, respectively. According to Proposition \ref{prop:Ree1star}\eqref{Ree1:num1}, the conditional $\Reeo$ probabilities at any time $t$ until time $H$, of any events at time $H$, are not the same as the corresponding $\pp$ probabilities of those events, due to an adjustment term equal to $\Y_H/{\E_t^\p\left[\Y_H\right]}$, related to the numeraire $\Y$. According to Proposition \ref{prop:Ree1star}\eqref{Ree1:num3}, when $H=T$, the $\Reeo$ measure does not become the $\pp$ measure, due to an adjustment term equal to $\Y_T/{\E_t^\p\left[\Y_T\right]}$, related to the numeraire $\Y$. However, there is one exception to the last argument. When the numeraire $\Y$ is given as the $T$-maturity pure discount bond, $P(\cdot,T)$, then $\Y_T = P(T,T) = 1$.  Therefore, for the special case of the numeraire $\Y = P(\cdot,T)$, Proposition \ref{prop:Ree1star}\eqref{Ree1:num3} implies that when $H=T$, the $\Reeo$ measure becomes the $\pp$ measure, since $\Y_T/{\E_t^\p\left[\Y_T\right]} =P(T,T)/{\E_t^\p\left[P(T,T)\right]} = 1$. Proposition \ref{prop:Ree1star}\eqref{Ree1:num5} shows that it is easier to obtain the analytical solution of a contingent claim using the $\Reeo$ measure directly (whenever this measure leads to an analytical solution), since the term $\Y_H/{\E_t^\p\left[\Y_H\right]}$ complicates the solution using the $\Ree$ measure.

\subsection{The ${\mathbb{R}}^{*}_2$ Measure}\label{sec:Reet}
We first give the definition of the $\Reet$ measure in the following lemma:

\begin{lemma}\label{lemma:R2star}
	For a fixed $H$ with $0\leq H\leq T$, define a process $\LHst$ as
	\begin{align*}
		\begin{aligned}
			\LHst &  =\left\{\begin{array}{ll}
				\displaystyle  \frac{\Y_H\E_s^\Ree\left[\Y_s/\Y_T\right]}{\Y_s\E_0^\Ree\left[\Y_H/\Y_T\right]}\cdot \frac{\Ls}{\LH}, &  \text{if} \ \  {H\leq s\leq T}, \\
				\vspace{-0.8em}& \\
				\displaystyle  \frac{\E_s^\Ree\left[\Y_H/\Y_T\right]}{\E_0^\Ree\left[\Y_H/\Y_T\right]}, &    \text{if } \ \ {0\leq s<H}.
			\end{array}\right.
		\end{aligned}
	\end{align*}
	Let
	\begin{align*}%
		\Reet(A) \triangleq \int_A \mathcal{L}_{2T}^*(H; \omega)\D \pp(\omega) \ \ \text{for all $A \in \mathcal{F}_T$},
	\end{align*}
	then $\Reet$ is a probability measure equivalent to $\pp$, and $\LHst$ is the Radon-Nikod\'{y}m derivative process of $\Reet$ with respect to $\pp$.
\end{lemma}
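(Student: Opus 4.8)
The plan is to mimic, essentially verbatim, the proofs of Lemma~\ref{lemma:Ree} and Lemma~\ref{lemma:reeo}, since $\LHst$ is built from the same ingredients: a density-type factor involving the numeraire $\Y$ together with the tail ratio $\Ls/\LH$ of the original Radon--Nikod\'ym process. To show that $\Reet$ is a probability measure equivalent to $\pp$ with $\LHst$ as its Radon--Nikod\'ym derivative process, it suffices (as in the cited proofs) to verify three things: (a) $\LHst$ is a nonnegative $\pp$-martingale with respect to $\{\mathcal{F}_s\}$; (b) $\mathcal{L}_{2T}^*(H)$ is almost surely strictly positive; and (c) $\E[\mathcal{L}_{2T}^*(H)]=1$. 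Positivity in (b) is immediate: $\Y$ is a positive numeraire asset, so all the ratios $\Y_H/\Y_s$, $\Y_H/\Y_T$ are a.s.\ positive, $\Ls/\LH$ is a.s.\ positive because $\Qem\sim\pp$, and the conditional expectations $\E_s^\Ree[\Y_H/\Y_T]$ of an a.s.-positive quantity are a.s.\ positive; (c) then follows from (a) together with $\mathcal{L}_{20}^*(H)=1$, which one reads off the piecewise definition at $s=0$ (the $0\le s<H$ branch gives $\E_0^\Ree[\Y_H/\Y_T]/\E_0^\Ree[\Y_H/\Y_T]=1$).

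So the substance is the martingale property (a), which I would establish by the same case split used for $\LHso$ in Appendix~\ref{app:thm2}: $0\le s_1\le s_2\le H$, then $H\le s_1\le s_2\le T$, then $0\le s_1\le H\le s_2\le T$. In the first case both $\mathcal{L}_{2s_1}^*(H)$ and $\mathcal{L}_{2s_2}^*(H)$ equal $\E_\cdot^\Ree[\Y_H/\Y_T]/\E_0^\Ree[\Y_H/\Y_T]$, and the tower property for conditional expectations (which applies under $\Ree$ since, by Proposition~\ref{prop:Ree1}\eqref{Ree:num1}, $\Ree=\pp$ on $\mathcal{F}_H$, so conditioning on $\mathcal{F}_{s_1}\subset\mathcal{F}_{s_2}$ behaves identically under both measures up to time $H$) gives $\E[\mathcal{L}_{2s_2}^*(H)\mid\mathcal{F}_{s_1}]=\mathcal{L}_{2s_1}^*(H)$. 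In the third case one first conditions on $\mathcal{F}_H$: writing $\mathcal{L}_{2s_2}^*(H)=\frac{\Y_H}{\Y_{s_2}}\cdot\frac{\E_{s_2}^\Ree[\Y_{s_2}/\Y_T]}{\E_0^\Ree[\Y_H/\Y_T]}\cdot\frac{L_{s_2}^*}{\LH}$, one uses $\E_H[L_{s_2}^*]=\LH$ and the fact that $\E_H^\Ree[\,\cdot\,]=\E_H^\Qem[\,\cdot\,]=\frac{1}{\LH}\E_H^\pp[L_\cdot^*\,\cdot\,]$ to collapse the inner $\mathcal{F}_H$-conditional expectation down to $\E_H^\Ree[\Y_H/\Y_T]$, recovering a factor of the form of the $0\le s<H$ branch; then one applies the first-case argument to get from $\mathcal{F}_H$ down to $\mathcal{F}_{s_1}$. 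The second case $H\le s_1\le s_2\le T$ is the cleanest: the numeraire factor $\Y_H\E_{s_i}^\Ree[\Y_{s_i}/\Y_T]/\Y_{s_i}$ must itself be shown to be an $\mathcal{F}_{s}$-martingale on $[H,T]$ — but that is exactly the statement that $\Y_s\mapsto\Y_s\cdot(\text{something})$, i.e.\ it follows from the martingale property of $\Y_s/\Y_T$ under $\Ree$ on $[H,T]$ guaranteed by Proposition~\ref{prop:Ree2}\eqref{Ree2:num1} applied with $\X=\Y$ (giving $\E_{s_1}^\Ree[\Y_{s_2}/\Y_T]=\cdots$), combined with the $\pp$-martingale property of $L_s^*/\LH$ on $[H,T]$.

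I expect the main obstacle to be precisely the bookkeeping in that second case: one must carefully disentangle the three nested conditional expectations (under $\pp$, then the $\Ree$-expectation appearing \emph{inside} the definition of $\LHst$ at time $s_2$, then the outer $\pp$-conditioning to $\mathcal{F}_{s_1}$) and show they telescope, and in particular one must be alert to the fact that $\E_{s_2}^\Ree[\Y_{s_2}/\Y_T]$ is \emph{not} a constant but itself an $\mathcal{F}_{s_2}$-measurable random variable whose own conditional expectation (back to $\mathcal{F}_{s_1}$) has to be tracked. A clean way to sidestep some of this is the alternative route: rather than checking the martingale property of this explicit piecewise process directly, define $\Reet$ abstractly via the footnote formula $\E_t^{\Reet}[Z_T]=\E_t^\Ree[(\Y_H/\Y_T)Z_T/\E_t^\Ree[\Y_H/\Y_T]]$ (whose well-posedness is guaranteed by the existence of $\Ree$, just as $\Reeo$ was obtained from $\Ree$ in equation~\eqref{eq:QIT12def}), and then \emph{verify} that the Radon--Nikod\'ym derivative process of this $\Reet$ with respect to $\pp$ coincides with the stated $\LHst$, by the same chain-rule manipulation used in equation~\eqref{eq:RandR1} to reconcile Lemma~\ref{lemma:reeo} with \eqref{eq:QIT12def}. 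That computation reduces to repeatedly inserting and cancelling the normalizing constant $\E_0^\Ree[\Y_H/\Y_T]$ and using $\LHs=\Ls/\LH$ on $[H,T]$, which is routine once set up. Either way the proof is short and structurally parallel to the two preceding lemmas; no new ideas are needed beyond Propositions~\ref{prop:Ree1} and~\ref{prop:Ree2}.
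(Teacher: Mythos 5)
Your proposal is correct and takes essentially the same approach as the paper, which omits this proof altogether with the remark that it ``follows similarly'' from the proofs of Lemmas \ref{lemma:Ree} and \ref{lemma:reeo} --- precisely the three-case martingale verification plus almost-sure positivity and unit expectation that you outline, and your alternative route via $\E_t^{\Reet}[Z_T]=\E_t^\Ree\bigl[(\Y_H/\Y_T)Z_T/\E_t^\Ree[\Y_H/\Y_T]\bigr]$ is likewise the chain-rule computation the paper carries out immediately after stating the lemma. One small repair to your sketch of the case $H\le s_1\le s_2\le T$: invoking Proposition \ref{prop:Ree2}(i) with $\X=\Y$ is vacuous; the clean observation is that $\Y_s$ cancels inside the conditional expectation (since $\E_s^\Ree[\Y_s/\Y_T]=\Y_s\E_s^\Qem[1/\Y_T]=(\Y_s/\Ls)\E_s^\pp[\LT/\Y_T]$), so that on $[H,T]$ one has $\LHst=\Y_H\,\E_s^\pp\!\left[\LT/\Y_T\right]\big/\left(\LH\,\E_0^\Ree\!\left[\Y_H/\Y_T\right]\right)$, which is manifestly a $\pp$-martingale because $\Y_H/\LH$ is $\mathcal{F}_H$-measurable.
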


We omit the proof of Lemma \ref{lemma:R2star} because it follows similarly from Appendix \ref{app:thm1} (Section \ref{app:thm1}). For the relation between $\Reet$ and $\Ree$, we have
\begin{align*}
	\begin{aligned}
		\E_t^{\Reet}\left[Z_T\right] &= \E_t^\p\left[\frac{\LHTt
		}{\LHtt}Z_T\right] =\E_t^\p\left[\frac{\LT}{\LH}\cdot\frac{\Y_H/\Y_T}{\E_t^\Ree\left[\Y_H/\Y_T\right]}Z_T\right]\\ &=\E_t^\p\left[\frac{\LHT}{\LHt}\cdot\frac{\Y_H/\Y_T}{\E_t^\Ree\left[\Y_H/\Y_T\right]}Z_T\right]  = \E_t^\Ree\left[\frac{\Y_H/\Y_T}{\E_t^\Ree\left[\Y_H/\Y_T\right]}  Z_T\right],
	\end{aligned}
\end{align*}
for any $\mathcal{F}_T$-measurable variable $Z_T$.

Now, we prove that the $\Reeo$ measure subsumes the $\Reet$ measure when the numeraires are restricted to be either the money market account or the pure discount bond.

According to the classification of EEMs in Section \ref{sec:model}, when $\Y=B=\ue^{\int_0^{\cdot}r_u\D u}$; in other words, the numeraire is the money market account, we have $\Ree = \Ro$, $\Reet = \Rtt$, and $\Ls = \LsB$. Then, for $\Rtt$'s Radon-Nikod\'{y}m derivative process
\begin{align*}
	\begin{aligned}
		\LHstB  =\left\{\begin{array}{ll}
			\displaystyle  \frac{B_H\E_s^\R\left[B_s/B_T\right]}{B_s\E_0^\R\left[B_H/B_T\right]}\cdot \frac{\LsB}{\LHB}=\frac{B_HP(s,T)}{B_s\E_0^\p\left[P(H,T)\right]}\cdot \frac{\LsB}{\LHB}, &  \text{if} \ \  {H\leq s\leq T}, \\
			\vspace{-0.8em}& \\
			\displaystyle  \frac{\E_s^\R\left[B_H/B_T\right]}{\E_0^\R\left[B_H/B_T\right]}=\frac{\E_s^\p\left[P(H,T)\right]}{\E_0^\p\left[P(H,T)\right]}, &    \text{if } \ \ {0\leq s<H},
		\end{array}\right.
	\end{aligned}
\end{align*}
where $\E_s^\R\left[P(H,T)\right]=\E_s^\p\left[P(H,T)\right]$ for $0\leq s\leq H$ due to Proposition \ref{prop:Ree1}.

When $\Y=P(\cdot,T)$, in other words, the numeraire is the pure discount bond, we have $\Ree = \RT$, $\Reet = \RTt$, and $\Ls = \LsP =  \frac{P(s,T)B_0}{P(0,T)B_s}\LsB$. Then, for $\RTt$'s Radon-Nikod\'{y}m derivative process
\begin{align*}
	\begin{aligned}
		\LHstP  =\left\{\begin{array}{ll}
			\displaystyle  \frac{P(H,T)\E_s^\RT\left[P(s,T)\right]}{P(s,T)\E_0^\RT\left[P(H,T)\right]}\cdot \frac{P(s,T)B_H}{P(H,T)B_s}\frac{\LsB}{\LHB}=\frac{B_HP(s,T)}{B_s\E_0^\p\left[P(H,T)\right]}\cdot \frac{\LsB}{\LHB}, &  \text{if} \ \  {H\leq s\leq T}, \\
			\vspace{-0.8em}& \\
			\displaystyle  \frac{\E_s^\RT\left[P(H,T)\right]}{\E_0^\RT\left[P(H,T)\right]}=\frac{\E_s^\p\left[P(H,T)\right]}{\E_0^\p\left[P(H,T)\right]}, &    \text{if } \ \ {0\leq s<H},
		\end{array}\right.
	\end{aligned}
\end{align*}
where $\E_s^\RT\left[P(H,T)\right]=\E_s^\p\left[P(H,T)\right]$ for $0\leq s\leq H$ due to Proposition \ref{prop:Ree1}.

When $\Y=P(\cdot,T)$, we have $\Ree = \RT$, $\Reeo = \RTo$, and $\Ls = \LsP =  \frac{P(s,T)B_0}{P(0,T)B_s}\LsB$. Then, for $\RTo$'s Radon-Nikod\'{y}m derivative process (also see Internet Appendix Section \ref{app:DiffEEM}.\ref{app:Rforward})
\begin{align*}
	\begin{aligned}
		\LHsoP & =\left\{\begin{array}{ll}
			\displaystyle \frac{P(H,T)}{\E_0^\p\left[P(H,T)\right]}\cdot \frac{P(s,T)B_H}{P(H,T)B_s}\frac{\LsB}{\LHB} =\frac{B_HP(s,T)}{B_s\E_0^\p\left[P(H,T)\right]}\cdot \frac{\LsB}{\LHB}, &  \text{if} \ \  {H\leq s\leq T}, \\
			\vspace{-0.8em}& \\
			\displaystyle  \frac{\E_s^\p\left[P(H,T)\right]}{\E_0^\p\left[P(H,T)\right]}, &    \text{if } \ \ {0\leq s<H}.
		\end{array}\right.
	\end{aligned}
\end{align*}

As can be seen, the Radon-Nikod\'{y}m derivative processes satisfy $\LHsoP=\LHstP=\LHstB$, which means $\RTo=\RTt=\Rtt$. Hence, $\RTo$ subsumes both $\RTt$ and $\Rtt$ measures when the numeraires are either the pure discount bond or the money market account.

\subsection{The $\mathbb{R}$ Measure for Multidimensional Brownian Motions}\label{app:multiBM}

Here we consider the $\R$ measure construction when there are multidimensional Brownian motions as the source of uncertainty. Assuming the existence of a risk-neutral measure $\q$, we let the Radon-Nikod\'{y}m derivative process $\LsB$ of $\q$ with respect to $\pp$ be
\begin{align}\label{eq:LsBmul}
	\LsB &\triangleq\frac{\D \q}{\D \pp}\Big|_{\F_s}=\exp\left(-\int_0^s {\gamma}_u\D \Wpu - \frac{1}{2}\int_0^s ||{\gamma}_u||^2\D u\right),
\end{align}
where $\gamma_s=\left(\gamma_{1s}, ..., \gamma_{Ns}\right)$, $0\leq s\leq T$, is an $N$-dimensional measurable, adapted process satisfying $ \int_0^T\gamma_{iu}^2\D u <\infty$ almost surely for $1\leq i\leq N$, and the Novikov condition, that is, $\E\left[\exp\left(\frac{1}{2}\int_0^T ||{\gamma}_u||^2\D u\right)\right]<\infty$. In addition, $||\gamma_s|| = \left(\sum_{i=1}^d \gamma_{is}^2\right)^{\frac{1}{2}}$ denotes the Euclidean norm. %

Then, similarly to the case of one-dimensional Brownian motion in Appendix \ref{prof:BMR}, we have the following lemma:
\begin{lemma}\label{prop:muBMchange}
	In the economy coupled with multidimensional Brownian motion as the source of uncertainty, for a fixed $H$ with $0\leq H\leq T$, define a process $\LHsB$ as
	\begin{align*}
		\LHsB &=\exp\left(-\int_0^s \GuH\D \Wpu - \frac{1}{2}\int_0^s ||\GuH||^2\D u\right),
	\end{align*}
	where
	\begin{align*}
		\GsH &\triangleq \I_{\left\{s\geq H\right\}}\gamma_s=\left\{\begin{array}{ll}
			\displaystyle \gamma_s, &\text{if} \ \  H\leq s \leq T, \\
			\displaystyle 0, &\text{if} \ \  0\leq s < H. \\
		\end{array}\right.
	\end{align*}
	Let
	\begin{align*}
		\Ro(A) \triangleq \int_A L_T(H; \omega)\D \mathbb{P}(\omega) \ \ \text{for all $A \in \mathcal{F}_T$}.
	\end{align*}
	Then $\R$ is a probability measure equivalent to $\pp$, and $\LHsB$ is the Radon-Nikod\'{y}m derivative process of $\R$ with respect to $\pp$.
	
	Moreover, define
	\begin{align}\label{eq:BMchangeRP}
		\Wr &\triangleq \Wp + \int_0^s \GuH \D u  \nonumber \\
		&= \Wp + \int_0^s\I_{\left\{u\geq H\right\}}{\gamma}_u \D u.
	\end{align}
	The process $\Wro$ is an $N$-dimensional Brownian motion under the $\Ro$ measure.
\end{lemma}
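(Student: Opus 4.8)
The plan is to handle the two assertions in turn: first that $\R$ is a probability measure equivalent to $\pp$ with Radon--Nikod\'ym derivative process $\LHsB$, and then that $\Wro$ is an $N$-dimensional $\R$-Brownian motion, the latter being a direct application of the multidimensional Girsanov theorem. This lemma is the multidimensional counterpart of Appendix \ref{prof:BMR}, so the work is mostly a matter of citing the standard martingale and change-of-measure machinery with one structural observation about the truncated integrand.

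For the first assertion, I would start from the observation that the truncated market-price-of-risk process $\GsH=\I_{\{s\ge H\}}\gamma_s$ inherits the regularity of $\gamma$: it is measurable and adapted, since the horizon $H$ is deterministic and so the indicator $\I_{\{s\ge H\}}$ is deterministic; it satisfies $\int_0^T\|\GuH\|^2\,\D u<\infty$ almost surely, inherited from $\gamma$ because $\|\GsH\|\le\|\gamma_s\|$ pathwise; and, crucially, it satisfies the Novikov condition, because $\E[\exp(\tfrac12\int_0^T\|\GuH\|^2\,\D u)]\le\E[\exp(\tfrac12\int_0^T\|\gamma_u\|^2\,\D u)]<\infty$. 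Hence the stochastic exponential $\LHsB$ is a strictly positive $(\F_s,\pp)$-martingale with $\mathcal{L}_0(H)=1$, so $\E[\LHTB]=1$, and by the standard criterion used in the proof of Lemma \ref{lemma:Ree} (Shreve, Theorem 1.6.1 and Definition 1.6.3) the set function $\R(A)=\int_A\LHTB\,\D\pp$ is a probability measure equivalent to $\pp$ whose Radon--Nikod\'ym derivative process is exactly $\LHsB$. Equivalently, and with even less work, I would note that splitting the stochastic integral at $H$ and using $\GuH=0$ for $u<H$ and $\GuH=\gamma_u$ for $u\ge H$ shows that $\LHsB=\LsB/\LHB$ for $H\le s\le T$ while $\LHsB=1$ for $0\le s<H$; this is precisely the piecewise formula \eqref{eq:LR} of Lemma \ref{lemma:Ree} with $L^*=L^B$ the money-market density process, so the first assertion is just the money-market specialization of Lemma \ref{lemma:Ree}.

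For the second assertion, having identified $\LHsB$ as the density process, I would invoke the $N$-dimensional Girsanov theorem: since $\LHsB$ is the stochastic exponential of $-\int_0^{\cdot}\GuH\,\D\Wpu$ and the Novikov condition established above holds, the shifted process $\Wr=\Wp+\int_0^s\GuH\,\D u=\Wp+\int_0^s\I_{\{u\ge H\}}\gamma_u\,\D u$ is an $N$-dimensional Brownian motion under $\R$. As a fallback giving the same conclusion, one can use L\'evy's characterization: after the Girsanov drift removal each component of $\Wro$ is a continuous $\R$-local martingale, and the cross-variations $\langle W^R_i,W^R_j\rangle_s=\delta_{ij}s$ are unchanged from those of $\Wpo$, so $\Wro$ is a standard $N$-dimensional $\R$-Brownian motion.

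I do not expect any genuinely hard step here. The only thing to watch is that the truncation by the deterministic indicator $\I_{\{s\ge H\}}$ does not spoil any hypothesis of Girsanov's theorem, and since $\GsH$ is pointwise dominated by $\gamma_s$ and remains measurable and adapted, the integrability and Novikov conditions transfer immediately; everything else is routine.
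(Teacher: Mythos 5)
Your proposal is correct and follows essentially the same route as the paper's proof: verify that the truncated process $\GuH$ inherits adaptedness, square-integrability, and the Novikov condition from $\gamma$ by pathwise domination, conclude that $\LHsB$ is a strictly positive martingale with unit expectation so that $\R$ is a well-defined equivalent probability measure, and then apply the multidimensional Girsanov theorem for the Brownian motion claim. Your additional observation that splitting the stochastic integral at $H$ identifies $\LHsB$ with $\LsB/\LHB$ (the money-market specialization of Lemma \ref{lemma:Ree}) matches how the paper treats the one-dimensional case in Appendix \ref{prof:BMR}, so nothing is missing.
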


\begin{proof}
	We skip the steps that are consistent with the proof of Lemma \ref{lemma:Ree}. For multidimensional Brownian motions, clearly, we have
	\begin{align*}
		&\E\left[\exp\left(\frac{1}{2}\int_0^T ||
		\GuH||^2\D u\right)\right] \\
=&\E\left[\exp\left(\frac{1}{2}\int_H^T ||{{\gamma}}_u||^2\D u\right)\right]\leq\E\left[\exp\left(\frac{1}{2}\int_0^T ||{{\gamma}}_u||^2\D u\right)\right]<\infty.
	\end{align*}
	In other words, $\GsH$ also satisfies the Novikov condition, then the defined $\LHsB$ is a martingale and $\E\left[\LHTB\right]=1$. Since $\LHTB$ is defined as an exponential, it is almost surely strictly positive; hence, the probability measure $\Ro$ is well-defined and equivalent to $\mathbb{P}$.
	
	This Brownian motion transformation between $\Ro$ and $\pp$ in equation \eqref{eq:BMchangeRP} is an application of the multidimensional Girsanov theorem. The Girsanov theorem can apply to any $N$-dimensional process $\GsH$ that has the properties \romannumeral1). $\GsH$ is $\F_s$-adapted for each $s\in [0,T]$; \romannumeral2). $\int_0^T\left({\gamma}_{iu}(H)\right)^2\D u <\infty$ almost surely for $1\leq i\leq N$; \romannumeral3). $\GsH$ satisfies the Novikov condition, which was proven.
	
	For a given $0 \leq H \leq T$, by definition, when $s< H$, $\GsH=0$ is a constant. Since constant functions are measurable with respect to any $\sigma$-algebra, in particular with respect to $\mathcal{F}_s$, $s< H$, this proves that $\GsH$ is $\mathcal{F}_s$-adapted for each $s< H$. When $s \geq H$, $\GsH=\gamma_s$ is also $\mathcal{F}_s$-adapted by the definition of $\gamma_s$. Therefore, $\GsH$ is $\mathcal{F}_s$-adapted for each $s\in [0,T]$, and condition \romannumeral1) is proven. Condition \romannumeral2) is also satisfied due to that, for $1\leq i\leq N$,
	\begin{align*}
		\int_0^T\left({\gamma}_{iu}(H)\right)^2\D u=  \int_0^T\I_{\left\{u\geq H\right\}}{\gamma}_{iu}^2\D u =  \int_H^T {\gamma}_{iu}^2\D u \leq \int_0^T{\gamma}_{iu}^2\D u <\infty \ \ \text{almost surely},
	\end{align*}
	which ends the proof.
\end{proof}

\subsection{The ${\mathbb{R}}^{T}_1$  Measure for Multidimensional Brownian Motions}\label{app:Rforward}

Consider the Radon-Nikod\'{y}m derivative processes $\LHso$ and $\Ls$ in Lemma \ref{lemma:reeo} and denote them as $\LHsoP$ and $\LsP$, respectively, for the special case when the numeraire is the pure discount bond, or $\Y = P(\cdot,T)$. For this special case, $\Reeo = \RTo$, as defined in Corollary \ref{coro:three2}. We know that $\LsP$ is given as
\begin{align*}
	\LsP \triangleq\frac{\D \QT}{\D \pp}\Big|_{\F_s}=\frac{\D \QT}{\D \q}\cdot \frac{\D \q}{\D \pp}\Big|_{\F_s} = \frac{P(s,T)B_0}{P(0,T)B_s}\LsB,
\end{align*}
where $B$ is the money market account, $\LsB$ is defined in equation \eqref{eq:LSB} for one-dimensional Brownian motions, or equation \eqref{eq:LsBmul} for multidimensional Brownian motions.

Assume that the numeraire bond $P(\cdot,T)$'s volatility is $v(\cdot, T)$\footnote{For example, under $\q$, the pure discount bond follows
	\begin{align*}
		\frac{\D P(s,T)}{P(s,T)} = r_s\D s - v(s,T)\D \Wq.
\end{align*}}, then the process of $\LsP$ satisfies
\begin{align*}
	\frac{\D \LsP}{\LsP} = \left( - \gamma_s - v(s, T)\right)\D \Wp.
\end{align*}

Then, we have the following lemma:
\begin{lemma}
	In the economy coupled with stochastic interest rates, for a fixed $H$ with $0\leq H\leq T$, define a process $\LHsoP$ as
	\begin{align*}
		\begin{aligned}
			\LHsoP & =\left\{\begin{array}{ll}
				\displaystyle \frac{P(H,T)}{\E_0^\p\left[P(H,T)\right]}\cdot \frac{P(s,T)B_H}{P(H,T)B_s}\frac{\LsB}{\LHB} =\frac{B_HP(s,T)}{B_s\E_0^\p\left[P(H,T)\right]}\cdot \frac{\LsB}{\LHB}, &  \text{if} \ \  {H\leq s\leq T}, \\
				\vspace{-0.8em}& \\
				\displaystyle  \frac{\E_s^\p\left[P(H,T)\right]}{\E_0^\p\left[P(H,T)\right]}, &    \text{if } \ \ {0\leq s<H}.
			\end{array}\right.
		\end{aligned}
	\end{align*}
	Let
	\begin{align*}%
		\RTo(A) \triangleq \int_A \mathcal{L}_{1T}^P(H; \omega)\D \pp(\omega) \ \ \text{for all $A \in \mathcal{F}_T$},
	\end{align*}
	then $\RTo$ is a probability measure equivalent to $\pp$, and $\LHsoP$ is the Radon-Nikod\'{y}m derivative process of $\RTo$ with respect to $\pp$.
	
	Moreover, assume that
	\begin{align*}
		\frac{\D \LHsoP}{\LHsoP} = \left( - \GsH - v(s, T, H )\right)\D \Wp,
	\end{align*}
	then define
	\begin{align*}%
		\WT &\triangleq \Wp + \int_0^s \GuH \D u  + \int_0^s v(u,T,H)\D u \nonumber \\
		&= \Wp + \int_0^s\I_{\left\{u\geq H\right\}}{\gamma}_u \D u + \int_0^s v(u,T,H)\D u.
	\end{align*}
	The process $\WT$ is a Brownian motion under the $\RTo$ measure.
\end{lemma}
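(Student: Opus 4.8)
The plan is to prove the two assertions of the lemma in turn: first, that $\RTo$ is an equivalent probability measure with Radon--Nikod\'{y}m derivative process $\LHsoP$, by reducing to Lemma~\ref{lemma:reeo}; and second, that $\WT$ is an $\RTo$-Brownian motion, by the multidimensional Girsanov theorem applied to the stochastic exponential $\LHsoP$.

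For the first assertion, I would observe that $\LHsoP$ is precisely the process $\LHso$ of Lemma~\ref{lemma:reeo} specialized to the numeraire $\Y=P(\cdot,T)$. Substituting $\Y_H=P(H,T)$ and the change-of-numeraire relation $\LsP=\frac{P(s,T)B_0}{P(0,T)B_s}\,\LsB$ recorded above into the definition of $\LHso$, the factors $B_0/P(0,T)$ cancel and one recovers exactly the two-branch expression for $\LHsoP$ in the statement. Since the proof of Lemma~\ref{lemma:reeo} in Appendix~\ref{app:thm2} uses only the tower property and positivity of $\Y$, it carries over verbatim to the multidimensional Brownian setting; hence the associated measure $\RTo$ is equivalent to $\pp$ and $\LHsoP$ is its derivative process. (Alternatively one argues directly, as in Appendix~\ref{app:thm2}: $\LHsoP$ is a $\pp$-martingale, checked in the three cases $0\le s_1\le s_2<H$, $H\le s_1\le s_2\le T$, and $s_1<H\le s_2$, the crossing case using that $(P(s,T)/B_s)\LsB$ is a $\pp$-martingale; it is a.s.\ strictly positive since $P(\cdot,T)>0$ and $\LsB$ is an exponential; and $\E[\mathcal{L}_{1T}^P(H)]=1$ by one more application of the tower property.)

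For the second assertion, the lemma posits the representation $\D\LHsoP/\LHsoP=(-\GsH-v(s,T,H))\,\D\Wp$, which I would read as the definition of $v(\cdot,T,H)$ through the decomposition of the $\Wp$-volatility of the continuous $\pp$-martingale $\LHsoP$ into the prescribed part $-\GsH$ and a remainder $-v(s,T,H)$. Then $\LHsoP=\mathcal{E}\bigl(-\int_0^{\cdot}(\GuH+v(u,T,H))\,\D\Wpu\bigr)$ with $\LHsoP\big|_{s=0}=1$. Because the first assertion has already established that $\LHsoP$ is a \emph{true} $\pp$-martingale with unit expectation, the multidimensional Girsanov theorem applies without any further Novikov-type check: under $\RTo$, $\WT=\Wp+\int_0^s(\GuH+v(u,T,H))\,\D u=\Wp+\int_0^s\I_{\{u\ge H\}}\gamma_u\,\D u+\int_0^s v(u,T,H)\,\D u$ is an $N$-dimensional Brownian motion. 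The measurability and a.s.\ finiteness hypotheses on the integrand reduce, for the $\GsH$ piece, to the properties of $\gamma_s$ already verified in Lemma~\ref{prop:muBMchange}, and for the $v(\cdot,T,H)$ piece to the regularity of the numeraire bond's volatility.

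The step I expect to be the real work — and the only nontrivial one — is the bookkeeping behind the assumed SDE for $\LHsoP$: identifying $v(s,T,H)$ explicitly and checking that the piecewise process glues into a single It\^{o} equation across $s=H$. Concretely, on $\{s\ge H\}$ the $s$-dynamics come from the $\pp$-martingale $(P(s,T)/B_s)\LsB$ scaled by the $\F_H$-measurable factor $B_H/(\LHB\,\E_0^\p[P(H,T)])$, so its volatility is $-(\gamma_s+v(s,T))$, giving $v(s,T,H)=v(s,T)$ there; on $\{s<H\}$ the volatility is that of the $\pp$-martingale $\E_s^\p[P(H,T)]$; and one must confirm continuity at $s=H$. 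The lemma grants this representation by hypothesis, so given it, everything else is a direct appeal to Lemma~\ref{lemma:reeo} and to Girsanov's theorem.
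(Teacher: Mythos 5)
Your proposal is correct and follows essentially the same route as the paper, whose entire proof reads: the first part is an application of Lemma \ref{lemma:reeo}, and the second part follows from Proposition 26.4 in \cite{bjork2009arbitrage} (a Girsanov-type result) with $Q^1=\RTo$ and $Q^0=\pp$. You simply spell out the details the paper leaves implicit — the cancellation of $B_0/P(0,T)$ in the specialization of $\LHso$, and the reading of the posited SDE as defining $v(\cdot,T,H)$ before invoking Girsanov.
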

\begin{proof}
	The first part is an application of Lemma \ref{lemma:reeo}, and the second part follows from Proposition 26.4 in \cite{bjork2009arbitrage} with $Q^1 = \RTo$ and $Q^0 = \pp$.
\end{proof}

Notably, when $s\geq H$, $v(s,T,H)$ is equal to $v(s,T)$; when $s<H$, $v(s,T,H)$ is generally a term that is adjusted from $v(s,T)$. For example, under the Vasicek model defined in Internet Appendix Section \ref{app:someATSM}.\ref{app:Vas} (see equation \eqref{eq:vasprocess}), we have
\begin{align*}
	\begin{aligned}
		v(s,T,H) & =\left\{\begin{array}{ll}
			\displaystyle \sigma_r \frac{1-\ue^{-\alpha_r (T-s)}}{\alpha_r}, &  \text{if} \ \  {H\leq s\leq T}, \\
			\vspace{-0.8em}& \\
			\displaystyle  \sigma_r \frac{\ue^{-\alpha_r (H-s)}-\ue^{-\alpha_r (T-s)}}{\alpha_r}, &    \text{if } \ \ {0\leq s<H}.
		\end{array}\right.
	\end{aligned}
\end{align*}

\begin{center}
\section{Expected Future Bond Prices under Specific ATSMs}\label{app:someATSM} \vspace{-1em}
\end{center}
\subsection{Expected Future Bond Price under the Vasicek Model}\label{app:Vas}
For the ease of exposition, this section illustrates the solution of the expected future price of a pure discount bond under the classical one-factor model of \cite{Vasicek1977An}. The Ornstein-Uhlenbeck process for the short rate under the Vasicek model is given as follows:
\begin{align}\label{eq:vasprocess}
	\begin{aligned}
		\D r_s &= \alpha_r \left(m_r - r_s\right)\D s + \sigma_r  \D W_{s}^{\p},
	\end{aligned}
\end{align}
where $\alpha_r$ is the speed of mean reversion, $m_r$ is the long-term mean of the short rate, and $\sigma_r$ is the volatility of the changes in the short rate. Assuming $\Wq = \Wp + \int_0^s{\gamma_r} \D u$, the short rate process under the $\Ro$ measure can be derived using Lemma \ref{lemma:Ree} and Internet Appendix Section \ref{app:DiffEEM}.\ref{app:multiBM}, as follows:
\begin{align*}
	\begin{aligned}
		\D r_s &= [\alpha_r \left(m_r - r_s\right) -\I_{\{s\geq H\}} \sigma_r \gamma_r]\D s + \sigma_r \D \Wr \\
		& = \left[\alpha_r \left(m_r - r_s\right)\I_{\{s< H\}}  +  \alpha_r^* \left(m_r^* - r_s\right)\I_{\{s\geq H\}}  \right]\D s + \sigma_r  \D W_{s}^{\R}, %
	\end{aligned}
\end{align*}
where $ \alpha_r^* = \alpha_r $, $m_r^* = m_r - \sigma_r\gamma_r/\alpha_r$.

The expected future price of a $T$-maturity pure discount bond at time $H$ is given under the Vasicek model as follows: For all $t\leq H\leq T$,
\begin{align}\label{eq:vasbondprice}
	\begin{aligned}
		\E_t \left[P(H,T)\right] &= \E_t^\R\left[\exp\left(-\int_{H}^{T}r_u\D u\right)\right]  \\
		&=\exp\left(- A^{(b^*, c^*)}_{\alpha_r^*, m_r^*}(T-H)- A^{(b, {c})}_{\alpha_r, m_r}(H-t)  -B^{(b, c)}_{\alpha_r, m_r}(H-t)\cdot{r_t} \right),
	\end{aligned}
\end{align}
where $b^* = 0$, $c^* = 1$, $b = B^{(b^*,c^*)}_{\alpha_r^*, m_r^*}(T-H)$, $c=0$.
For any well-behaved $b$, $c$, $\alpha$, and $m$, $A^{(b,c)}_{\alpha, m}\left(\tau\right)$  and $B^{(b,c)}_{\alpha, m}\left(\tau\right)$ are given by
\begin{align*}
	\begin{aligned}
		A^{(b,c)}_{\alpha, m}\left(\tau\right)& =  \alpha m\left(bB_{\alpha}{(\tau)}  + c\frac{\tau - B_{\alpha}{(\tau)} }{\alpha}\right) \\ & -\frac{1}{2}\sigma_r^2\left( b^2\frac{1-\ue^{-2\alpha \tau}}{2\alpha} + bc {B_{\alpha}{(\tau)}}^2 + c^2\frac{2\tau-2B_{\alpha}{(\tau)} - \alpha {B_{\alpha}{(\tau)}}^2}{{2\alpha}^2} \right),\\
		B^{(b,c)}_{\alpha, m}\left(\tau\right) &= b\ue^{-\alpha \tau} + c B_{\alpha}{(\tau)},
	\end{aligned}
\end{align*}
with $B_{\alpha} (\tau)=\left(1-\ue^{-\alpha \tau}\right)/\alpha$.
When $H=t$, equation \eqref{eq:vasbondprice} reduces to the Vasicek bond price solution \cite[see][equation 27]{Vasicek1977An}.

\subsection{Expected Future Bond Price under the CIR Model} \label{app:CIR}
We now illustrate the expected future price of a pure discount bond under the classic one-factor Cox-Ingersoll-Ross \cite[CIR,][]{Cox1985A} model. The square-root process for the short rate under the CIR model is given as follows:
\begin{align*}
	\begin{aligned}
		\D r_s &= \alpha_r \left(m_r - r_s\right)\D s + \sigma_r \sqrt{r_s} \D W_{s}^{\p}.
	\end{aligned}
\end{align*}
Assume that the market price of risk $\gamma_{s}$ satisfies
\begin{align*}
	\begin{aligned}
		\gamma_{s} &= \gamma_r \sqrt{r_s}.
	\end{aligned}
\end{align*}
Thus, $\Wq = \Wp + \int_0^s{\gamma_r \sqrt{r_u}} \D u$, and the short rate process under the $\Ro$ measure can be derived using Lemma \ref{lemma:Ree} and Internet Appendix Section \ref{app:DiffEEM}.\ref{app:multiBM}, as follows:
\begin{align*}
	\D r_s &= \left[\alpha_r \left(m_r - r_s\right) - \I_{\left\{s\geq H\right\}}\gamma_r\sigma_r r_s\right]\D s + \sigma_r \sqrt{r_s} \D \Wr \\
	& = \left[\alpha_r \left(m_r - r_s\right)\I_{\{s< H\}} +  \alpha_r^* \left(m_r^* - r_s\right)\I_{\{s\geq H\}} \right]\D s + \sigma_r \sqrt{r_s} \D W_{s}^{\R},
\end{align*}
where $ \alpha_r^* = \alpha_r + \gamma_r\sigma_r$, $m_r^* = \alpha_r m_r/\left(\alpha_r + \gamma_r\sigma_r\right)$.

The expected future price of a $T$-maturity pure discount bond at time $H$ is given under the CIR model as follows: For all $t\leq H\leq T$,
\begin{align}\label{eq:CIRbondprice}
	\E_t\left[ P(H, T)\right] &=\E_t^{\R}\left[\exp\left(- \int_H^T r_u\D u \right) \right] \nonumber \\
	&=\exp\left(- A^{(b^*, c^*)}_{\alpha_r^*, m_r^*}(T-H)- A^{(b, {c})}_{\alpha_r, m_r}(H-t)  -B^{(b, c)}_{\alpha_r, m_r}(H-t)\cdot{r_t} \right),
\end{align}
where $b^* = 0$, $c^* = 1$, $b = B^{(b^*,c^*)}_{\alpha_r^*, m_r^*}(T-H)$, $c=0$. For any well-behaved $b$, $c$, $\alpha$, and $m$, $A^{(b,c)}_{\alpha, m}\left(\tau\right)$  and $B^{(b,c)}_{\alpha, m}\left(\tau\right)$ are given as
\begin{align*}
	\begin{aligned}
		A^{(b,c)}_{\alpha, m}\left(\tau\right)& = -\frac{2\alpha m}{\sigma_r^2}\ln\left(\frac{2\beta\ue^{\frac{1}{2}\left(\beta+\alpha\right)\tau}}
		{b\sigma_r^2\left(\ue^{\beta \tau}-1\right)+\beta-\alpha +\ue^{\beta \tau}\left(\beta+\alpha\right)}\right),\\
		B^{(b,c)}_{\alpha, m}\left(\tau\right) &= \frac{b\left(\beta+\alpha +\ue^{\beta \tau}\left(\beta-\alpha\right)\right)+2c\left(\ue^{\beta \tau}-1\right)}{b\sigma_r^2\left(\ue^{\beta \tau}-1\right)+\beta-\alpha +\ue^{\beta \tau}\left(\beta+\alpha\right)},
	\end{aligned}
\end{align*}
with $ \beta = \sqrt{{\alpha}^2+2\sigma_r^2c}$.
When $H=t$, equation \eqref{eq:CIRbondprice} reduces to the CIR bond price solution \cite[see][equation 23]{Cox1985A}.

\subsection{Expected Future Bond Price under the ${A_{1r}(3)}$ Model}
\label{sec:A1rmodel}

Here, we employ the commonly used $A_{1r}(3)$ form of three-factor ATSMs, which has been developed in various models, including \cite{balduzzi1996simple}, \cite{dai2000specification}. Specifically, we use the $A_{1r}(3)_{\text{MAX}}$ model to compute the expected bond price. Under the physical measure $\p$, the state variable processes under the $A_{1r}(3)_{\text{MAX}}$ model are defined as follows:
\begin{align*}
	\begin{aligned}
		\D v_s &= {\alpha}_v\left({m}_v - v_s\right) \D s + \eta \sqrt{v_s} \D {W}_{1s}^{\p}, \\
		\D \theta_s &={\alpha}_{\theta}\left( {m}_{\theta} - \theta_s \right)\D s + \sigma_{\theta v}\eta \sqrt{v_s} \D {W}_{1s}^{\p} + \sqrt{\zeta^2 + \beta_{\theta}v_s}\D {W}_{2s}^{\p} + \sigma_{\theta r} \sqrt{\delta_r + v_s} \D {W}_{3s}^{\p}, \\
		\D r_s &= {\alpha}_{r v }\left( {m}_{v} -v_s \right)\D s +{\alpha}_{r}\left( \theta_s -r_s \right)\D s +\sigma_{r v}\eta \sqrt{v_s} \D {W}_{1s}^{\p} + \sigma_{r \theta}\sqrt{\zeta^2 + \beta_{\theta}v_s}\D {W}_{2s}^{\p} \\
 & + \sqrt{\delta_r + v_s} \D {W}_{3s}^{\p},
	\end{aligned}
\end{align*}
where $W_{1}^{\p}$, $W_{2}^{\p}$ and $W_{3}^{\p}$ are independent Brownian motions under the physical measure $\p$. We assume the market price of risks (MPRs), $\gamma_s= \left(\gamma_{1s}, \gamma_{2s}, \gamma_{3s}\right)$, are given by
\begin{align*}
	\begin{aligned}
		\gamma_{1s} &= \gamma_1\sqrt{v_s}, \\
		\gamma_{2s} &= \gamma_2\sqrt{\zeta^2 + \beta_{\theta}v_s},  \\
		\gamma_{3s} &= \gamma_3\sqrt{\delta_r + v_s}.
	\end{aligned}
\end{align*}

Therefore, using Lemma \ref{lemma:Ree} and Internet Appendix Section \ref{app:DiffEEM}.\ref{app:multiBM}, the state variable processes under the $A_{1r}(3)_{\text{MAX}}$ model under the $\RR$ measure are given by
\begin{align*}
	\begin{aligned}
		\D v_s &= \left[{\alpha}_v\left({m}_v - v_s\right)-\I_{\left\{s\geq H\right\}}\gamma_1\eta v_s \right]\D s + \eta \sqrt{v_s} \D {W}_{1s}^{\h}, \\
		\D \theta_s &=\left[{\alpha}_{\theta}\left( {m}_{\theta} - \theta_s \right)- \I_{\left\{s\geq H\right\}}\gamma_1\sigma_{\theta v}\eta v_s - \I_{\left\{s\geq H\right\}}\gamma_2\left(\zeta^2 + \beta_{\theta}v_s\right) \right.\\
		& \bigg.-\I_{\left\{s\geq H\right\}}\gamma_3\sigma_{\theta r}\left(\delta_r + v_s\right)  \bigg]\D s + \sigma_{\theta v}\eta \sqrt{v_s} \D {W}_{1s}^{\h} + \sqrt{\zeta^2 + \beta_{\theta}v_s}\D {W}_{2s}^{\h} + \sigma_{\theta r} \sqrt{\delta_r + v_s} \D {W}_{3s}^{\h}, \\
		\D r_s &= \left[{\alpha}_{r v }\left( {m}_{v} -v_s \right)+{\alpha}_{r}\left( \theta_s -r_s \right)- \I_{\left\{s\geq H\right\}}\gamma_1\sigma_{r v}\eta v_s -\I_{\left\{s\geq H\right\}}\gamma_2\sigma_{r\theta}\left(\zeta^2 + \beta_{\theta}v_s\right) \right. \\
		&\bigg.-\I_{\left\{s\geq H\right\}}\gamma_3\left(\delta_r + v_s\right)  \bigg]\D s +\sigma_{r v}\eta \sqrt{v_s} \D {W}_{1s}^{\h} + \sigma_{r \theta}\sqrt{\zeta^2 + \beta_{\theta}v_s}\D {W}_{2s}^{\h} + \sqrt{\delta_r + v_s} \D {W}_{3s}^{\h}.
	\end{aligned}
\end{align*}

Hence, using equation \eqref{eq:rep1} in Corollary \ref{coro:three}, the time $t$ expectation of the future price of a $T$-maturity default-free zero-coupon bond at time $H$ under the $A_{1r}(3)_{\text{MAX}}$ model is given  as follows. For all  $t\leq H\leq T$,
\begin{align*}
	\E_t\left[ P(H, T)\right]&=\E_t^{\R}\left[\exp\left(- \int_H^T r_u\D u \right) \right]  \nonumber \\
	&=\exp\left( -A^{\!\left( \!\begin{smallmatrix} \lambda_1^*,\mu_1^*\\ \lambda_2^*,\mu_2^* \\ \lambda_3^*,\mu_3^*\end{smallmatrix} \!\right)}_{\chi^*}\!(T-H)-A_{\chi}^{\!\left( \!\begin{smallmatrix} \lambda_1,\mu_1\\ \lambda_2,\mu_2 \\ \lambda_3,\mu_3\end{smallmatrix} \!\right)}\!(H-t) -B_{\chi}^{\!\left( \!\begin{smallmatrix} \lambda_1,\mu_1\\ \lambda_2,\mu_2 \\ \lambda_3,\mu_3\end{smallmatrix} \!
		\!\right)}\!(H-t)v_t \right. \nonumber \\
	&\hspace{4em} \left. -C_{\chi}^{\!\left( \!\begin{smallmatrix} \lambda_1,\mu_1\\ \lambda_2,\mu_2 \\ \lambda_3,\mu_3\end{smallmatrix} \! \!\right)}\!(H-t)\theta_t -D_{\chi}^{\!\left( \!\begin{smallmatrix} \lambda_1,\mu_1\\ \lambda_2,\mu_2 \\ \lambda_3,\mu_3\end{smallmatrix} \! \!\right)}\!(H-t)r_t\right),
\end{align*}
where
\begin{align*}
	\lambda_1^* &  = 0, \ \ \lambda_2^* = 0, \ \ \lambda_3^* = 0, \ \ \mu_1^* = 0, \ \ \mu_2^* = 0, \ \ \mu_3^* = 1, \ \ 	 \mu_1 = 0, \ \ \mu_2= 0, \ \ \mu_3 = 0, \\
	\lambda_1 & = B^{\!\left( \!\begin{smallmatrix} \lambda_1^*,\mu_1^*\\ \lambda_2^*,\mu_2^* \\ \lambda_3^*,\mu_3^*\end{smallmatrix} \!\right)}_{\chi^*}\!(T-H), \ \ \lambda_2 = C^{\!\left( \!\begin{smallmatrix} \lambda_1^*,\mu_1^*\\ \lambda_2^*,\mu_2^* \\ \lambda_3^*,\mu_3^*\end{smallmatrix} \!\right)}_{\chi^*}\!(T-H), \ \ \lambda_3= D^{\!\left( \!\begin{smallmatrix} \lambda_1^*,\mu_1^*\\ \lambda_2^*,\mu_2^* \\ \lambda_3^*,\mu_3^*\end{smallmatrix} \!\right)}_{\chi^*}\!(T-H),
\end{align*}
and
\begin{align*}
	{\chi}&=\left(\alpha_v, \alpha_\theta, \alpha_r, \alpha_{rv},  m_v, m_\theta, \sigma_{\theta v}, \sigma_{\theta r}, \sigma_{rv}, \sigma_{r\theta}, \eta, \zeta, \beta_\theta, \delta_r, 0, 0 \right), \\
	{\chi}^*&=\left(\alpha_v^*, \alpha_\theta^*, \alpha_r^*, \alpha_{rv}^*, m_v^*, m_\theta^*, \sigma_{\theta v}, \sigma_{\theta r}, \sigma_{rv}, \sigma_{r\theta}, \eta, \zeta, \beta_\theta, \delta_r,   \alpha_{\theta v}^*, m_r^* \right),
\end{align*}
with
\begin{align*}
	\alpha_v^* &= \alpha_v + \gamma_1\eta, \ \ m_v^* = \frac{\alpha_v m_v}{\alpha_v + \gamma_1\eta}, \ \ \alpha_\theta^* = \alpha_\theta, \ \ m_\theta^* = \frac{\alpha_\theta m_\theta - \gamma_2\zeta^2 -\gamma_3\sigma_{\theta r} \delta_r}{\alpha_\theta}, \\
	\alpha_r^* &= \alpha_r, \ \ \alpha_{rv}^* = \alpha_{rv} + \gamma_1\sigma_{r v}\eta + \gamma_2\sigma_{r\theta}\beta_{\theta} + \gamma_3, \ \
	\alpha_{\theta v}^* = - \gamma_1\sigma_{\theta v}\eta - \gamma_2\beta_{\theta} - \gamma_3\sigma_{\theta r}, \\
	m_r^* &=  \alpha_{rv} m_v - \alpha_{rv}^* m_v^* -  \gamma_2\sigma_{r\theta}\zeta^2 - \gamma_3\delta_r.
\end{align*}
For any well-defined $\lambda_i$, $\mu_i$ ($i=1,2,3$) and the characteristic $\chi$, $A_{\chi}(\tau) \triangleq A_{\chi}^{\!\!\left( \!\!\begin{smallmatrix} \lambda_1,\mu_1\\ \lambda_2,\mu_2 \\ \lambda_3,\mu_3\end{smallmatrix} \!
	\!\right)}\!\!(\tau)$, $B_{\chi}(\tau) \triangleq B_{\chi}^{\!\!\left( \!\!\begin{smallmatrix} \lambda_1,\mu_1\\ \lambda_2,\mu_2 \\ \lambda_3,\mu_3\end{smallmatrix} \!
	\!\right)}\!\!(\tau)$, $C_{\chi}(\tau)\triangleq C_{\chi}^{\!\!\left( \!\!\begin{smallmatrix} \lambda_1,\mu_1\\ \lambda_2,\mu_2 \\ \lambda_3,\mu_3\end{smallmatrix} \!
	\!\right)}\!\!(\tau)$, $D_{\chi}(\tau)\triangleq D_{\chi}^{\!\!\left( \!\!\begin{smallmatrix} \lambda_1,\mu_1\\ \lambda_2,\mu_2 \\ \lambda_3,\mu_3\end{smallmatrix} \!
	\!\right)}\!\!(\tau)$ satisfy the following ODEs
\begin{align*}
	\begin{aligned}
		\frac{\partial A_{\chi}(\tau)}{\partial \tau} =& \ \  B_{\chi}(\tau){\alpha}_v {m}_v + C_{\chi}(\tau){\alpha}_{\theta}{m}_{\theta}+D_{\chi}(\tau)\left({\alpha}_{r v}{m}_v + m_r\right)- \frac{1}{2}C_{\chi}^2(\tau)\left(\sigma_{\theta r}^2\delta_r + \zeta^2\right) \\
&-\frac{1}{2}D_{\chi}^2(\tau)\left(\sigma_{r\theta}^2\zeta^2 + \delta_r \right) - C_{\chi}(\tau)D_{\chi}(\tau)\left(\sigma_{r\theta}\zeta^2 + \sigma_{\theta r}\delta_r \right),  \\
\frac{\partial B_{\chi}(\tau)}{\partial \tau}=&-B_{\chi}(\tau){\alpha}_v +C_{\chi}(\tau)\alpha_{\theta v}-D_{\chi}(\tau){\alpha}_{r v} -\frac{1}{2} B_{\chi}^2(\tau) \eta ^2 - \frac{1}{2} C_{\chi}^2(\tau) \left( \beta_{\theta}  + \sigma_{\theta v}^2\eta^2 + \sigma_{\theta r}^2 \right)
	\end{aligned}
\end{align*}
\begin{align}\label{eq:a1r3}
	\begin{aligned}
		&- \frac{1}{2} D_{\chi}^2(\tau)\left( \sigma_{r v}^2\eta^2 + \sigma_{r \theta}^2 \beta_{\theta} +1 \right) -B_{\chi}(\tau)C_{\chi}(\tau)\sigma_{\theta v}\eta^2 -  B_{\chi}(\tau)D_{\chi}(\tau) \sigma_{r v}\eta^2 \\
		&-C_{\chi}(\tau) D_{\chi}(\tau)\left(\sigma_{\theta v}\sigma_{r v}\eta^2+\sigma_{r \theta}\beta_{\theta}+ \sigma_{\theta r}\right) + \mu_1, \\
		\frac{\partial C_{\chi}(\tau)}{\partial \tau}=& -C_{\chi}(\tau) {\alpha}_{\theta} +  D_{\chi}(\tau) {\alpha}_r  + \mu_2, \\
		\frac{\partial D_{\chi}(\tau)}{\partial \tau}=& -D_{\chi}(\tau)   {\alpha}_r  + \mu_3,
	\end{aligned}
\end{align}
subject to the boundary condition $A_{\chi}=0$, $B_{\chi}(0)=\lambda_1$, $C_{\chi}(0)=\lambda_2$, $D_{\chi}(0)=\lambda_3$. $A_{\chi}(\tau)$ and $B_{\chi}(\tau)$ can be solved easily using numerical integration, and $C_{\chi}(\tau)$ and $D_{\chi}(\tau)$ are given as
\begin{align*}
	C_{\chi}(\tau)=&\lambda_2\ue^{-{\alpha}_{\theta}\tau} + \frac{ \mu_2+\mu_3 }{{\alpha}_{\theta} }\left(1-\ue^{{-\alpha}_{\theta} \tau } \right) +\frac{\lambda_3 {\alpha}_r - \mu_3}{{\alpha}_{\theta} - {\alpha}_r }\left(\ue^{-{\alpha}_r \tau}-\ue^{-{\alpha}_{\theta} \tau}\right),    \\
	D_{\chi}(\tau)=& \lambda_3 \ue^{-{\alpha}_r \tau} + \frac{\mu_3 }{{\alpha}_r }\left(1-\ue^{-{\alpha}_r \tau}\right).
\end{align*}
Similarly,  $A_{\chi^*}(\tau) \triangleq A_{\chi^*}^{\!\!\left( \!\!\begin{smallmatrix} \lambda_1^*,\mu_1^*\\ \lambda_2^*,\mu_2^* \\ \lambda_3^*,\mu_3^*\end{smallmatrix} \!
	\!\right)}\!\!(\tau)$, $B_{\chi^*}(\tau) \triangleq B_{\chi^*}^{\!\!\left( \!\!\begin{smallmatrix} \lambda_1^*,\mu_1^*\\ \lambda_2^*,\mu_2^* \\ \lambda_3^*,\mu_3^*\end{smallmatrix} \!
	\!\right)}\!\!(\tau)$, $C_{\chi^*}(\tau)\triangleq C_{\chi^*}^{\!\!\left( \!\!\begin{smallmatrix} \lambda_1^*,\mu_1^*\\ \lambda_2^*,\mu_2^* \\ \lambda_3^*,\mu_3^*\end{smallmatrix} \!
	\!\right)}\!\!(\tau)$, $D_{\chi^*}(\tau)\triangleq D_{\chi^*}^{\!\!\left( \!\!\begin{smallmatrix} \lambda_1^*,\mu_1^*\\ \lambda_2^*,\mu_2^* \\ \lambda_3^*,\mu_3^*\end{smallmatrix} \!
	\!\right)}\!\!(\tau)$ are obtained as in equation \eqref{eq:a1r3} with the characteristic $\chi$ replaced with the characteristic $\chi^*$, and the parameters $\lambda_i$, $\mu_i$ ($i=1,2,3$) replaced with  $\lambda_i^*$, $\mu_i^*$ ($i=1,2,3$).

\begin{center}
\section{Expected Future Bond Price under the Quadratic Term Structure Model}\label{sec:qtsm}\vspace{-1em}
\end{center}

As in \cite{Ahn2002Quadratic}, an $N$-factor quadratic term structure model (QTSM) satisfies the following assumptions. First, the nominal instantaneous interest rate is a quadratic function of the state variables
\begin{align*}
	r_s = \alpha + \mymat{\beta}'\mymat{Y_s} + \mymat{Y_s}'\mymat{\Psi} \mymat{Y_s},
\end{align*}
where $\alpha$ is a constant, $\mymat{\beta}$ is an $N$-dimensional vector, and $\mymat{\Psi}$ is an $N\times N$ matrix of constants. We assume that $\alpha - \frac{1}{4}\mymat{\beta}'\mymat{\Psi}^{-1}\mymat{\beta}\geq 0$, and $\mymat{\Psi}$ is a positive semidefinite matrix.

The SDEs of the state variables $\mymat{Y}$ are characterized as multivariate Gaussian processes with mean reverting properties:
\begin{align*}
	\D \mymat{Y_s} = \left(\mymat{\mu} + \mymat{\xi}\mymat{ Y_s}\right)\D s + \mymat{\Sigma}\D \mymat{\Wp},
\end{align*}
where $\mymat{\mu}$ is an $N$-dimensional vector of constants and $\mymat{\xi}$ and $\mymat{\Sigma}$ are $N$-dimensional square matrices. We assume that $\mymat{\xi}$ is ``diagonalizable'' and has negative real components of eigenvalues. $\mymat{W}^\p$ is an $N$-dimensional vector of standard Brownian motions that are mutually independent.

Similar to \cite{Ahn2002Quadratic}, the market price of risks $\mymat{\gamma_s}$ is assumed to be
\begin{align*}
	\mymat{\gamma_s} =\inv{\mymat{\Sigma}}\left(\mymat{\gamma_0} + \mymat{\gamma_1}\mymat{Y_s} \right),
\end{align*}
where $\mymat{\gamma_0}$ is an $N$-dimensional vector, $\mymat{\gamma_1}$ is an $N \times N$ matrix satisfying that $\inv{\mymat{\Sigma}}'\inv{\mymat{\Sigma}}\mymat{\gamma_1}$ is symmetric. Hence, using Lemma \ref{lemma:Ree} and Internet Appendix Section \ref{app:DiffEEM}.\ref{app:multiBM}, the state variables $\mymat{Y}$ under the $\RR$ measure satisfy the following SDE:
\begin{align*}
	\D \mymat{Y_s} &= \left[\mymat{\mu} + \mymat{\xi}\mymat{ Y_s}-\I_{\left\{s\geq H\right\}}\left(\mymat{\gamma_0} + \mymat{\gamma_1}\mymat{Y_s} \right)\right]\D s + \mymat{\Sigma}\D \mymat{\Wr} \\
	& =  \left[ \left(\mymat{\mu} + \mymat{\xi}\mymat{ Y_s}\right)\I_{\left\{s<H\right\}} + \left(\mymat{\mu^*} + \mymat{\xi^*}\mymat{ Y_s}\right)\I_{\left\{s\geq H\right\}}  \right]\D s + \mymat{\Sigma}\D \mymat{\Wr},
\end{align*}
where $\mymat{\mu^*} = \mymat{\mu} - \mymat{\gamma_0}, \mymat{\xi^*} = \mymat{\xi} - \mymat{\gamma_1}$.

Therefore, the time $t$ expectation of the future price of a $T$-maturity default-free zero-coupon bond at time $H$ under the $N$-factor QTSM model is given as follows. For all $t\leq H\leq T$,
\begin{align*}
	\E_t\left[ P(H, T)\right] &=\E_t^{\R}\left[\exp\left(- \int_H^T r_u\D u \right) \right] = \E_t^{\R}\left[\exp\left(  - \int_H^T \left( \alpha + \mymat{\beta}'\mymat{Y_u} + \mymat{Y_u}'\mymat{\Psi} \mymat{Y_u}\right)\D u \right) \right] \nonumber \\
	&=\exp\left(-\alpha(T-H) - A^{\!\left( \!\begin{smallmatrix}{\mymat{b^*}}, \mymat{d^*}\\ {\mymat{c^*}}, \mymat{q^*} \end{smallmatrix} \!\right)}_{\mymat{\mu^*}, \mymat{\xi^*}}\!(T-H) -A^{\!\left( \!\begin{smallmatrix}{\mymat{b}}, \mymat{d}\\ {\mymat{c}}, \mymat{q} \end{smallmatrix} \!\right)}_{\mymat{\mu}, \mymat{\xi}}\!(H-t)-\mymat{B^{{\!\left( \!\begin{smallmatrix}{\mymat{b}}, \mymat{d}\\ {\mymat{c}}, \mymat{q} \end{smallmatrix} \!\right)}\!}_{\mymat{\mu}, \mymat{\xi}}(H-t)'\mymat{Y_t}} \right.  \nonumber \\
	&\hspace{4em}  \left. -  \mymat{\mymat{Y_t}' C^{{\!\left( \!\begin{smallmatrix}{\mymat{b}}, \mymat{d}\\ {\mymat{c}}, \mymat{q} \end{smallmatrix} \!\right)}\!}_{\mymat{\mu}, \mymat{\xi}}(H-t)\mymat{Y_t}}\right),
\end{align*}
where
\begin{align*}
	\mymat{b^*} &= \mymat{0}, \ \   \mymat{c^*} = \mymat{0}, \ \ \mymat{d^*} = \mymat{\beta}, \ \ \mymat{q^*} = \mymat{\Psi}, \\ \mymat{b}&=\mymat{B^{\!\left( \!\begin{smallmatrix}{\mymat{b^*}}, \mymat{d^*}\\ {\mymat{c^*}}, \mymat{q^*} \end{smallmatrix} \!\right)}_{\mymat{\mu^*}, \mymat{\xi^*}}\!(T-H)}, \ \ \mymat{c}=\mymat{C^{\!\left( \!\begin{smallmatrix}{\mymat{b^*}}, \mymat{d^*}\\ {\mymat{c^*}}, \mymat{q^*} \end{smallmatrix} \!\right)}_{\mymat{\mu^*}, \mymat{\xi^*}}\!(T-H)}, \ \ \mymat{d} = \mymat{0}, \ \ \mymat{q} = \mymat{0}.
\end{align*}

For any well-defined matrices  $\mymat{b}, \mymat{c}, \mymat{d}, \mymat{q}$, $\mymat{\mu}$ and $\mymat{\xi}$, $A_{\mymat{\mu}, \mymat{\xi}}(\tau)\triangleq A^{\!\left( \!\begin{smallmatrix}{\mymat{b}}, \mymat{d}\\ {\mymat{c}}, \mymat{q} \end{smallmatrix} \!\right)}_{\mymat{\mu}, \mymat{\xi}}\!(\tau) $, $\mymat{B_{\mymat{\mu}, \mymat{\xi}}(\tau)}\triangleq \mymat{B^{\!\left( \!\begin{smallmatrix}{\mymat{b}}, \mymat{d}\\ {\mymat{c}}, \mymat{q} \end{smallmatrix} \!\right)}_{\mymat{\mu}, \mymat{\xi}}\!(\tau)} $, and $\mymat{C_{\mymat{\mu}, \mymat{\xi}}(\tau)}\triangleq \mymat{C_{\mymat{\mu}, \mymat{\xi}}^{\!\left( \!\begin{smallmatrix}{\mymat{b}}, \mymat{d}\\ {\mymat{c}}, \mymat{q} \end{smallmatrix} \!\right)}\!(\tau)} $ satisfy the following ODEs
\begin{align}\label{eq:qtsmODE}
	\begin{aligned}
		\frac{\D A_{\mymat{\mu}, \mymat{\xi}}(\tau)}{\D \tau} &= \mymat{\mu}'\mymat{B_{\mymat{\mu}, \mymat{\xi}}(\tau)} - \frac{1}{2}\mymat{B_{\mymat{\mu}, \mymat{\xi}}(\tau)}' \mymat{\Sigma} \mymat{\Sigma}'\mymat{B_{\mymat{\mu}, \mymat{\xi}}(\tau)} + \text{tr}\left(  \mymat{\Sigma} \mymat{\Sigma}'\mymat{C_{\mymat{\mu}, \mymat{\xi}}(\tau)}\right),  \\
		\frac{\D \mymat{B_{\mymat{\mu}, \mymat{\xi}}(\tau)}}{\D \tau} &=  \mymat{\xi}'\mymat{B_{\mymat{\mu}, \mymat{\xi}}(\tau)} + 2\mymat{C_{\mymat{\mu}, \mymat{\xi}}(\tau)}\mymat{\mu}-2\mymat{C_{\mymat{\mu}, \mymat{\xi}}(\tau)}\mymat{\Sigma} \mymat{\Sigma}'\mymat{B_{\mymat{\mu}, \mymat{\xi}}(\tau)} + \mymat{d}, \\
		\frac{\D \mymat{C_{\mymat{\mu}, \mymat{\xi}}(\tau)}}{\D \tau} &= -2\mymat{C_{\mymat{\mu}, \mymat{\xi}}(\tau)}\mymat{\Sigma} \mymat{\Sigma}'\mymat{C_{\mymat{\mu}, \mymat{\xi}}(\tau)} + \mymat{C_{\mymat{\mu}, \mymat{\xi}}(\tau)}\mymat{\xi} + \mymat{\xi}'\mymat{C_{\mymat{\mu}, \mymat{\xi}}(\tau)}
		+ \mymat{q},
	\end{aligned}
\end{align}
with the terminal conditions $A_{\mymat{\mu}, \mymat{\xi}}(0) = 0$, $\mymat{B_{\mymat{\mu}, \mymat{\xi}}(0)} =\mymat{b}$, $\mymat{C_{\mymat{\mu}, \mymat{\xi}}(0)} =\mymat{c}$. These ODEs can be easily solved numerically. $A_{\mymat{\mu^*}, \mymat{\xi^*}}(\tau)\triangleq A^{\!\left( \!\begin{smallmatrix}{\mymat{b^*}}, \mymat{d^*}\\ {\mymat{c^*}}, \mymat{q^*} \end{smallmatrix} \!\right)}_{\mymat{\mu^*}, \mymat{\xi^*}}\!(\tau) $, $\mymat{B_{\mymat{\mu^*}, \mymat{\xi^*}}(\tau)}\triangleq \mymat{B^{\!\left( \!\begin{smallmatrix}{\mymat{b^*}}, \mymat{d^*}\\ {\mymat{c^*}}, \mymat{q^*} \end{smallmatrix} \!\right)}_{\mymat{\mu}, \mymat{\xi}}\!(\tau)} $, and $\mymat{C_{\mymat{\mu^*}, \mymat{\xi^*}}(\tau)}\triangleq \mymat{C_{\mymat{\mu^*}, \mymat{\xi^*}}^{\!\left( \!\begin{smallmatrix}{\mymat{b^*}}, \mymat{d^*}\\ {\mymat{c^*}}, \mymat{q^*} \end{smallmatrix} \!\right)}\!(\tau)} $ are obtained similarly as in equation \eqref{eq:qtsmODE} with the drift coefficients $({\mymat{\mu}, \mymat{\xi}})$ replaced with drift coefficient $({\mymat{\mu^*}, \mymat{\xi^*}})$, and the parameters $(\mymat{b}, \mymat{c}, \mymat{d}, \mymat{q})$ replaced with  $(\mymat{b^*}, \mymat{c^*}, \mymat{d^*}, \mymat{q^*})$.

\textit{Example: The QTSM3 model.}
Since the empirical literature suggests that three factors are required to describe the term structure \citep[see, e.g.,][]{Knez1996Explorations, dai2000specification}, we fix $N=3$ and explore the particular QTSM3 model defined in \cite{Ahn2002Quadratic}. According to the parameter restrictions by \cite{Ahn2002Quadratic}, in this case, $\mymat{\mu}\geq 0$, $\alpha >0$, $\mymat{\beta}=\mymat{0}$, and $\mymat{\Sigma}$ is a diagonal matrix. In addition, $\mymat{\xi}$ and $\mymat{\gamma_1}$ are diagonal, and this assumption results in orthogonal state variables under the $\pp$ measure as well as the $\RR$ measure. Furthermore, $\mymat{\Psi}$ is the $3\times 3$ identity matrix $\mymat{I_3}$, such that there are no interactions among the state variables in the determination of the interest rate. The number of parameters in QTSM3 is 16. An important advantage is that the QTSM3 model allows for a fully closed-form solution for the expected bond price:
\begin{align}
	\E_t\left[ P(H, T)\right] &=\E_t^{\R}\left[\exp\left(- \int_H^T r_u\D u \right) \right] = \E_t^{\R}\left[\exp\left(  - \int_H^T \left( \alpha +  \mymat{Y_u}'\mymat{\Psi} \mymat{Y_u}\right)\D u \right) \right] \nonumber \\
	&=\exp\left(-\alpha(T-H) - \sum_{i=1}^3A^{\!\left( \!\begin{smallmatrix}{{b_i^*}}, {0}\\ {{c_i^*}}, {q_i^*} \end{smallmatrix} \!\right)}_{{\mu^*_i}, {\xi^*_{ii}}}\!(T-H) -\sum_{i=1}^3A^{\!\left( \!\begin{smallmatrix}{{b_i}}, {0}\\ {{c_i}}, {q_i} \end{smallmatrix}  \!\right)}_{{\mu_i}, {\xi_{ii}}}\!(H-t) \right. \nonumber \\
	&\hspace{4em}  \left. -\sum_{i=1}^3{B^{{\!\left( \!\begin{smallmatrix}{{b_i}}, {0}\\ {{c_i}}, {q_i} \end{smallmatrix}  \!\right)}\!}_{{\mu_i}, {\xi_{ii}}}(H-t){Y_{it}}}- \sum_{i=1}^3 {C^{{\!\left( \!\begin{smallmatrix}{{b_i}}, {0}\\ {{c_i}}, {q_i} \end{smallmatrix} \!\right)}\!}_{{\mu_i}, {\xi_{ii}}}(H-t){Y_{it}^2}}\right),
\end{align}
where $	{\mu^*_i} = {\mu_i} - {\gamma_{0i}}, \ \ {\xi^*_{ii}} = {\xi_{ii}} - {\gamma_{1ii}}$, and
\begin{align*}
	b_i^* & = 0, \ \ c_i^* = 0, \ \ q_i^* = 1, \ \	{{b}_i} = {B^{\!\left( \!\begin{smallmatrix}{{b_i^*}}, {0}\\ {{c_i^*}}, {q_i^*} \end{smallmatrix} \!\right)}_{{\mu^*_i}, {\xi^*_{ii}}}\!(T-H)}, \ \ {{c}_i}={C^{\!\left( \!\begin{smallmatrix}{{b_i^*}}, {0}\\ {{c_i^*}}, {q_i^*} \end{smallmatrix} \!\right)}_{{\mu^*_i}, {\xi^*_{ii}}}\!(T-H)}, \ \ q_i = 0. \end{align*}           						

For any well-defined ${b_i}, {c_i}, {q_i}$, and the coefficients ${\mu_i}$,  ${\xi_{ii}}$, we have
\begin{align}\label{eq:qtsm3}
	\begin{aligned}
		A^{\!\left( \!\begin{smallmatrix}{{b_i}}, {0}\\ {{c_i}}, {q_i} \end{smallmatrix} \!\right)}_{{\mu_i}, {\xi_{ii}}}\!(\tau)
		&=\left(\frac{\mu_i}{\beta_i}\right)^2q_i\tau-\frac{b_i^2\beta_i\Sigma_{ii}^2\left(\ue^{2\beta_i\tau}-1\right)-2b_i\mu_i\left(\ue^{\beta_i\tau}-1\right)\left[\left(\beta_i-\xi_{ii}\right)
			\left(\ue^{\beta_i\tau}-1\right)+2\beta_i\right]}
		{2\beta_i\left[\left(2c_i\Sigma_{ii}^2+\beta_i-\xi_{ii}\right)\left(\ue^{2\beta_i\tau}-1\right)+2\beta_i\right]}  \\
		&-\frac{\mu_i^2\left(\ue^{\beta_i\tau}-1\right)\left[\left(\left(\beta_i-2\xi_{ii}\right)q_i-2c_i\xi_{ii}^2\right)
			\left(\ue^{\beta_i\tau}-1\right)+2\beta_iq_i\right]}
		{\beta_i^3\left[\left(2c_i\Sigma_{ii}^2+\beta_i-\xi_{ii}\right)\left(\ue^{2\beta_i\tau}-1\right)+2\beta_i\right]}\\
		&-\frac{1}{2}\ln
		\frac{2\beta_i\ue^{\left(\beta_i-\xi_{ii}\right)\tau}}{\left(2c_i\Sigma_{ii}^2+\beta_i-\xi_{ii}\right)\left(\ue^{2\beta_i\tau}-1\right)+2\beta_i},\\
		B^{\!\left( \!\begin{smallmatrix}{{b_i}}, {0}\\ {{c_i}}, {q_i} \end{smallmatrix} \!\right)}_{{\mu_i}, {\xi_{ii}}}\!(\tau) &=\frac{2b_i\beta_i^2\ue^{\beta_i\tau}+2\mu_i\left[\left(q_i+c_i\left(\xi_{ii}+\beta_i\right)\right)
			\left(\ue^{\beta_i\tau}-1\right)^2+2c_i\beta_i\left(\ue^{\beta_i\tau}-1\right)\right]}
		{\beta_i\left[\left(2c_i\Sigma_{ii}^2+\beta_i-\xi_{ii}\right)\left(\ue^{2\beta_i\tau}-1\right)+2\beta_i\right]}, \\
		C^{\!\left( \!\begin{smallmatrix}{{b_i}}, {0}\\ {{c_i}}, {q_i} \end{smallmatrix} \!\right)}_{{\mu_i}, {\xi_{ii}}}\!(\tau) &=\frac{c_i\left[2\beta_i+\left(\ue^{2\beta_i\tau}-1\right)\left(\beta_i+\xi_{ii}\right)\right]+q_i\left(\ue^{2\beta_i\tau}-1\right)}
		{\left(2c_i\Sigma_{ii}^2+\beta_i-\xi_{ii}\right)\left(\ue^{2\beta_i\tau}-1\right)+2\beta_i},
	\end{aligned}
\end{align}
where $\beta_i = \sqrt{\xi_{ii}^2+2\Sigma_{ii}^2q_i}$. $A^{\!\left( \!\begin{smallmatrix}{{b_i^*}}, {0}\\ {{c_i^*}}, {q_i^*} \end{smallmatrix} \!\right)}_{{\mu_i^*}, {\xi_{ii}^*}}\!(\tau)$, $B^{\!\left( \!\begin{smallmatrix}{{b_i^*}}, {0}\\ {{c_i^*}}, {q_i^*} \end{smallmatrix} \!\right)}_{{\mu_i^*} {\xi_{ii}^*}}\!(\tau)$ and $C^{\!\left( \!\begin{smallmatrix}{{b_i^*}}, {0}\\ {{c_i^*}}, {q_i^*} \end{smallmatrix} \!\right)}_{{\mu_i^*}, {\xi_{ii}^*}}\!(\tau)$ are obtained similarly as in equation \eqref{eq:qtsm3} with the coefficients $({\mu_i}, {\xi_{ii}})$ replaced with the coefficients $({\mu_i^*}, {\xi_{ii}^*})$, and the parameters $({b_i}, {c_i}, {q_i})$ replaced with the parameters $({b_i^*}, {c_i^*}, {q_i^*})$.

\vspace{-3em}
\begin{center}
\item \section{Expected Future Price of Forward-Start Option using the Forward-Start $R$-Transform}\label{app:fsoexpected} \vspace{-1em}
\end{center}

As an extension of the $R$-transforms in Section \ref{sec:Rmeasure}.\ref{sec:Rtransforms}, we propose the following \emph{forward-start $R$-transform}, which can be used to obtain the expected future price of a forward-start option with the terminal date $T$ payoff given as $(S_T - kS_{T_0})^+$, with $T_0$ as the intermediate future date on which the strike price $kS_{T_0}$ is determined:
\begin{align*}
	\psi^R(a_1, a_2, \mymat{z}) &\triangleq \E_t^\RR\left[\exp{\left(-\int_H^T r_u\D u+ a_1 \ln S_T + a_2 \ln S_{T_0}+z\left(\ln S_T-\ln S_{T_0}\right)\right)}\right].
\end{align*}
The analytical solution of the above forward-start $R$-transform can be derived under the affine-jump diffusion processes given in \cite{Duffie2000Transform} using the $\Ro$ measure.\footnote{This solution also extends the work of \cite{kruse2005pricing} for computing the \emph{expected future price} of the forward-start call option under the affine-jump diffusion processes given in \cite{Duffie2000Transform}. There exist two solutions of $\psi^R(a_1, a_2, \mymat{z})$ depending upon whether $t\leq H\leq T_0\leq T$ or $t\leq T_0\leq H\leq T$.} The solution of $\psi^R(a_1, a_2, \mymat{z})$ can be used to obtain the expected future price of the forward-start call option as follows:
\begin{align*}
	\E_t \left[ C_H \right] &=\E^\R_t \left[\ue^{-\int_H^T r_u\D u} \left( S_T - k S_{T_0} \right)^+\right] \nonumber \\
	&=   \E_t^{\R}\left[\ue^{-\int_H^T r_u\D u} S_T\I_{\left\{\ln S_T-\ln S_{T_0}>\ln k \right\}}\right] - k\E_t^{\R}\left[  \ue^{-\int_H^T r_u\D u}S_{T_0} \I_{\left\{\ln S_T-\ln S_{T_0}>\ln k \right\}}\right] \\
	&=\E_t^{\R}\left[\ue^{-\int_H^T r_u \D u }S_T\right]\Pi_1 -k\E_t^\R\left[\ue^{-\int_H^T r_u \D u } S_{T_0}\right]\Pi_2.
\end{align*}

Using the Fourier inversion method outlined in \cite{bakshi2000spanning} and \cite{Duffie2000Transform}, the expected future price of the forward-start option can be solved as
\begin{align*}
	\E_t \left[ C_H \right] = \E_t^{\R}\left[\ue^{-\int_H^T r_u \D u }S_T\right]\Pi_1 -k\E_t^\R\left[\ue^{-\int_H^T r_u \D u } S_{T_0}\right]\Pi_2,
\end{align*}
where
\begin{align*}
	\E_t^{\R}\left[\ue^{-\int_H^T r_u \D u }S_T\right] =\psi^R \left(1,0,0\right)  \quad \text{and} \quad
	\E_t^\R\left[\ue^{-\int_H^T r_u \D u } S_{T_0}\right]=\psi^R \left(0,1,0\right),
\end{align*}
and for $j=1,2$, the two probabilities $\Pi_j$ are calculated by using the Fourier inversion formula:
\begin{align*}
	\Pi_j = \frac{1}{2} +\frac{1}{\pi} \int_0^{\infty}\operatorname{Re}\left[\frac{\ue^{-iu\ln k} \Phi^{j}(u)}{iu}\right]\D u,
\end{align*}
with the corresponding characteristic functions $\Phi^{1}(u)$ and $\Phi^{2}(u)$, given as
\begin{align*}
	\Phi^{1}(u) = \frac{\psi^R \left(1, 0, iu\right)}{\psi^R \left(1,0,0\right)} \quad \text{and} \quad \Phi^{2}(u) = \frac{ \phi^R\left(0,1,iu\right)}{\phi^R\left(0,1,0\right)}.
\end{align*}

\vspace{-3em}
\begin{center}
\item \section{Expected Future Option Price under the Merton Model with Vasicek Short Rate Process} \label{app:MertonVasicek} \vspace{-1em}
\end{center}

Consider a special case of equations \eqref{eq:MertonS} and \eqref{eq:MertonP} with $\sigma(s) = \sigma$, $\mu(s)$ = $r_s$ + $\gamma\sigma$, $\rho(s) = \rho$, with the parameters $\sigma$ and $\rho$, given as constants, and $\gamma = \left(\mu(s) - r_s\right)/\sigma$ (defined as the asset's risk premium per unit of asset return volatility), also given as a constant. Assume that the bond price process is given by the single-factor Vasicek model (see Internet Appendix Section \ref{app:someATSM}.\ref{app:Vas}). The asset price process, the short rate process, and the bond price process under these specifications are given as:
\begin{align}\label{eq:assetprocess1}
	\begin{aligned}
		\frac{\D S_s}{S_s} &= \left(r_s + \gamma\sigma \right)\D s + \sigma \D W_{1s}^{\p}, \\ %
		\D r_s &= \alpha_r \left(m_r - r_s\right)\D s + \sigma_r  \D W_{2s}^{\p}, \\
		\frac{\D P(s,T)}{P(s,T)} &= \left(r_s - \gamma_r \sigma_r B_{\alpha_r}(T-s)\right)\D s - \sigma_r B_{\alpha_r} (T-s)\D W_{2s}^{\p},
	\end{aligned}
\end{align}
where $B_{\alpha_r}(\tau)=\left(1-\ue^{-\alpha_r \tau}\right)/\alpha_r$, and $\gamma_r$ is the market price of interest rate risk in the \cite{Vasicek1977An} model.

The full solution of the expected call price formula given by equation \eqref{eq:Mertonoptexp} can be obtained by solving $\E_t\left[S_H\right]$, $\E_t\left[P(H,T)\right]$, and $v_p$ for the above processes.
Using Lemma \ref{lemma:reeo} and Internet Appendix Section \ref{app:DiffEEM}.\ref{app:Rforward}, the normalized asset price $V=S/P(\cdot,T)$ follows
\begin{align*}
	\begin{aligned}
		\frac{\D V_s}{V_s} & = \I_{\{s< H\}}\left[\gamma\sigma +\gamma_r\sigma_r B_{\alpha_r}(T-s)  +  \left(\rho\sigma +\sigma_r B_{\alpha_r}(T-s) \right)\sigma_rB_{\alpha_r}(H-s)\right]\D s \\
		&+ \sigma\D W_{1s}^\RTo +\sigma_r B_{\alpha_r}(T-s) \D W_{2s}^\RTo.
	\end{aligned}
\end{align*}
We can see that $V_T$ is also lognormally distributed under the $\RTo$ measure. The above equation can be derived by using the relationship between the Brownian motions under $\RTo$  and the Brownian motions under  $\pp$ (as shown in Internet Appendix Section \ref{app:DiffEEM}.\ref{app:Rforward}), as follows:
\begin{align*}
	\D W_{1s}^\RTo&=\D W_{1s}^\p + \left[\I_{\left\{s\geq H\right\}}\gamma + \rho\sigma_r B_{\alpha_r}{(T-s)}  - \I_{\{s< H\}}\rho\sigma_r B_{\alpha_r}{(H-s)}  \right]\D s,  \\
	\D W_{2s}^\RTo&=\D W_{2s}^\p + \left[\I_{\left\{s\geq H\right\}}{\gamma}_r + \sigma_r B_{\alpha_r}{(T-s)}  - \I_{\{s< H\}}\sigma_r B_{\alpha_r}{(H-s)}  \right]\D s.
\end{align*}

Therefore, under the above specifications, the expected call option price $\E_t[C_H]$ is given by equation \eqref{eq:Mertonoptexp} with that $\E_t[P(H,T)]$ is derived in equation \eqref{eq:vasbondprice}, and
\begin{align}\label{eq:ESvas}
	\begin{aligned}
		\E_t[S_H] &= S_t\exp\left( \left(m_r + \gamma\sigma + \frac{\sigma_r^2}{2\alpha_r^2} + \frac{\rho\sigma\sigma_r}{\alpha_r} \right)\left(H-t\right) + \left(r_t-m_r\right)B_{\alpha_r}\left(H-t\right) \right. \\
		&\quad \quad \left.  + \frac{\sigma_r^2}{2\alpha_r^2}\left[B_{2\alpha_r}(H-t) - 2B_{\alpha_r}(H-t)\right] - \frac{\rho\sigma\sigma_r}{\alpha_r}B_{\alpha_r}(H-t) \right),
	\end{aligned}
\end{align}
and
\begin{align*}
	v_p = \sqrt{\left(\sigma^2 + \frac{\sigma_r^2}{\alpha_r^2} + \frac{2\rho\sigma\sigma_r}{\alpha_r}\right)\left(T-t\right) +  \frac{\sigma_r^2}{\alpha_r^2}\left[B_{2\alpha_r}(T-t) - 2B_{\alpha_r}(T-t)\right] - \frac{2\rho\sigma\sigma_r}{\alpha_r}B_{\alpha_r}(T-t)}.
\end{align*}

The assumption of a constant value for $\gamma$ is not inconsistent with various multifactor asset pricing models, such as Merton's (\citeyear{merton1973intertemporal}) ICAPM and \citeauthor{connor1989intertemporal}'s (\citeyear{connor1989intertemporal}) intertemporal APT. For example, under the Merton's (\citeyear{merton1973intertemporal}) ICAPM framework, the market portfolio return and the projection of the asset return on the market portfolio return and the short rate process are given as follows:
\begin{align}\label{eq:projection}
	\begin{aligned}
		\frac{\D S_s^M}{S_s^M} &= \left(r_s + \gamma_M\sigma_M\right)\D s + \sigma_M \D W_{s,M}^{\p}, \\
		\frac{\D S_s}{S_s} &= \E\left[ \frac{\D S_s}{S_s}\right] +   \beta_1  \left[ \frac{\D S_s^M}{S_s^M} - \E\left[ \frac{\D S_s^M}{S_s^M}  \right] \right]+  \beta_2 \left[\D r_s - \E\left[ \D r_s  \right] \right] + \sigma_e \D W_{s,e}^{\p},
	\end{aligned}
\end{align}
where $\gamma_M$ is the Sharpe ratio or the market risk premium per unit of market return volatility, $\beta_1$ and $\beta_2$ are the systematic risk exposures of the asset with respect to mean-zero market return factor and mean-zero short rate change factor, $\sigma_e \D W_{s,e}^{\p}$ is the idiosyncratic risk component of the asset return, and $ W_{s,e}^{\p}$ is assumed to be independent of both $ W_{s,M}^{\p}$ and $ W_{2s}^{\p}$. %
The coefficients $\beta_1$ and $\beta_2$ are obtained as constants since all volatilities and correlations that these coefficents depend on are assumed to be constants in equations \eqref{eq:assetprocess1} and \eqref{eq:projection}. Under these assumptions, Merton's (\citeyear{merton1973intertemporal}) ICAPM implies that the asset risk premium in equation \eqref{eq:assetprocess1} is:
\begin{align*}%
	\mu(s) - r_s = \gamma\sigma = \beta_1 \gamma_M\sigma_M + \beta_2 \gamma_r\sigma_r,
\end{align*}
which implies that $\gamma$ is a constant equal to:
\begin{align*}%
	\gamma= \beta_1 \gamma_M\sigma_M/\sigma + \beta_2 \gamma_r\sigma_r/\sigma.
\end{align*}
This relation is also implied by the intertemporal APT of \cite{connor1989intertemporal} using the argument of diversification in large asset markets based upon the assumption of weakly correlated idiosyncratic terms, and the short rate process given by the \cite{Vasicek1977An} model.

\vspace{-3em}
\begin{center}
\item \section{Expected Future Price of an Asset Exchange Option}\label{iapp:margrabe} \vspace{-1em}
\end{center}

While all options contracts written on tradable assets give the right to exchange one asset for another, the term ``asset exchange options" is typically used for those option contracts whose strike price is a stochastic asset value instead of a constant. The asset $S$ to be exchanged (as the strike price) is used as the numeraire for pricing these asset exchange options (e.g., see  \cite{margrabe1978value}, \cite{rady1997option}, and examples in \cite{vecer2011stochastic}). We define the EMM corresponding to the numeraire $S$ as the $\QS$ measure, and the corresponding numeraire-specific EEM as the $\RSo$ measure, given as a special case of the $\Reeo$ measure with the numeraire $G$ = $S$. The definition of the EEM $\RSo$ assumes that the asset $S$ is any tradable asset other than the money market account and the $T$-maturity pure discount bond.

For simplicity of exposition, we use the \cite{margrabe1978value} model to demonstrate the use of EEM $\RSo$ for obtaining the expected future price of a call option that gives the option holder the right to exchange an asset $S=S_1$ for another asset $S_2$. We extend the \cite{margrabe1978value} model by allowing a stochastic short rate, and assume that the physical asset price processes $S_1$ and $S_2$, and the physical short rate process are given as follows: \begin{align*}
	\begin{aligned}
		\frac{\D S_{1s}}{S_{1s}} &= \left(r_s +  \gamma_1 \sigma_1\right) \D s +  \sigma_1  \D W_{1s}^{\p},  \\
		\frac{\D S_{2s}}{S_{2s}} &= \left(r_s +  \gamma_2 \sigma_2\right) \D s +  \sigma_2  \D W_{2s}^{\p}, \\
		\D r_s &= \alpha_r \left(m_r - r_s\right)\D s + \sigma_r  \D W_{rs}^{\p},
	\end{aligned}
\end{align*}
where $W_{1}^{\p}$ and $W_{2}^{\p}$ are correlated Brownian motions with a correlation coefficient equal to $\rho_{12}$; $W_r^{\p}$ is correlated with $W_{1}^{\p}$ and $W_{2}^{\p}$ with correlation coefficients equal to $\rho_{1r}$ and $\rho_{2r}$, respectively; and $\gamma_1$ and $\gamma_2$ are defined as the risk premium per unit of volatility for $S_1$ and $S_2$, respectively.

It is convenient to price the exchange option using equation \eqref{eq:priceQIT1}, with $G$ = $S_1$ as the numeraire asset, $\Reeo$ = $\RSo$ using this numeraire asset, $\X_T = C_T = (S_{2T} - S_{1T})^+$ as the terminal payoff, and the numeraire values given as $G_T$ = $S_{1T}$, and $G_H$ = $S_{1H}$, as follows:
\begin{align*}
	\begin{aligned}
		\E_t[C_H]  &= \E_t^\p\left[S_{1H}\right]\E_t^\RSo\left[\frac{(S_{2T}-S_{1T})^+}{S_{1T}}\right]  \\
		&=\E_t^\p\left[S_{1H}\right]\E_t^{\RSo}\left[(V_{T}-1)^+\right], \\
	\end{aligned}
\end{align*}
where $V = S_2/S_1$ is the asset price ratio. Therefore, it suffices for us to obtain the process of $V$ under the $\RSo$ measure, which can be derived using Lemma \ref{lemma:reeo} as follows:
\begin{align*}
	\frac{\D V_s}{V_s} = \I_{\{s< H\}}\left[\gamma_2\sigma_2-\gamma_1\sigma_1 +\left(\rho_{2r}\sigma_2 - \rho_{1r}\sigma_1\right)\sigma_r B_{\alpha_r}(H-s) \right]\D s + \sigma_2\D W_{2s}^\RSo - \sigma_1\D W_{1s}^\RSo,
\end{align*}
where $B_{\alpha_r}(\tau)=\left(1-\ue^{-\alpha_r \tau}\right)/\alpha_r$.

Note that since all volatilities, correlations, and drifts are either constant or deterministic in the above equation, $V_T$ is lognormally distributed under the $\RSo$ measure. Thus, using equation \eqref{eq:priceQIT1R} and similar derivation logic of the expected future price of a call option under the \cite{Merton1973Theory} model in Section \ref{sec:R1Tmeasure}.\ref{sec:R1TmeasureMerton} (given by equation \eqref{eq:Mertonoptexp}), the solution of the expected future price of the exchange option is given as follows:
\begin{align*}
	\E_t[C_H] = \E_t[S_{2H}]\mathcal{N}\left(\hat{d_1}\right) - \E_t[S_{1H}]\mathcal{N}\left(\hat{d_2}\right),
\end{align*}
where
\begin{align*}
	\hat{d}_1 &= \frac{1}{v_p}\ln\frac{\E_t\left[S_{2H}\right]}{\E_t\left[S_{1H}\right] } + \frac{v_p}{2}, \ \  \quad
	\hat{d}_2 = \frac{1}{v_p}\ln\frac{\E_t\left[S_{2H}\right]}{\E_t\left[S_{1H}\right]} - \frac{v_p}{2},
\end{align*}
with $v_p  = \sqrt{\left(\sigma_1^2 + \sigma_2^2 - 2\rho_{12}\sigma_1\sigma_2\right)\left(T-t\right)}$, %
$\E_t[S_{jH}]$ is given similarly as in equation \eqref{eq:ESvas} with ($S_t$, $\gamma$, $\sigma$, $\rho$) replaced with ($S_{jt}$, $\gamma_j$, $\sigma_j$, $\rho_{jr}$) for $j=1,2$.

It is important to note that while the valuation of an exchange option does not depend upon interest rates under the \cite{margrabe1978value} model, the expected future price of the exchange option as given by the above equation \emph{does depend upon the parameters of the interest rate process.}. This dependence arises not from the hedge portfolio's integrated variance $v_p^2$, but through the expected future prices $\E_t[S_{1H}]$ and $\E_t[S_{2H}]$.

\vspace{-3em}
\begin{center}
\item \section{A Procedure to Extract the Expected FSPD}\label{sec:extractFSPD}\vspace{-1em}
\end{center}

To obtain the expected FSPD from expected future prices of options using Theorem \ref{thm:Breedenext}, we consider extending the methods used for obtaining the SPD (defined as the second derivative of the call price function with respect to the strike price in \cite{breeden1978prices}) from current option prices. Both parametric estimation methods and non-parametric estimation methods can be used for extracting the SPD from option prices \citep{Figlewski2018Risk}.  %
The parametric methods include %
composite distributions based on the normal/lognormal \cite[see, e.g.,][]{Jarrow1982Approximate, Madan1994Contingent} and  mixture models \cite[see, e.g.,][]{Melick1997Recovering, Soderlind1997New}. The non-parametric estimation methods can be classified  into three categories: maximum entropy \cite[see, e.g.,][]{Stutzer1996A, Buchen1996The}, kernel \cite[see, e.g.,][]{Ait-Sahalia1998Non, Pritsker1998Non}, and curve-fitting \cite[see, e.g.,][]{jackwerth2000recovering, jackwerth2004option, rosenberg2002empirical, Figlewski2009Estimating} methods.
We extend the fast and stable curve-fitting method of \cite{jackwerth2004option} that fits the implied volatilities using the \cite{black1973pricing} model, and then numerically approximates the SPD.\footnote{For an extensive review of the curve-fitting methods, see \cite{jackwerth2004option} and \cite{Figlewski2018Risk}.}. %

We briefly recall the main steps for using the \cite{jackwerth2004option} method. %
The first step inverts the option prices into Black-Scholes implied volatilities, $\bar{\sigma}(K/S_t)$, using a finite number of options with different strike prices. This allows greater stability since implied volatilities for different strike prices are much closer in size than the corresponding option prices. Let $\left\{K_1, K_2, ..., K_N\right\}$ represent a set of strike prices ordered from the lowest to the highest. Using a ``smoothness" parameter as in \cite{jackwerth2004option}, a smooth BS implied volatility (IV) curve  $\hat{\sigma}(K/S_t)$ is fitted by optimization. In the next step, each optimized IV $\hat{\sigma}(K_i/S_t)$ is converted back into a call price using the Black-Scholes equation, that is, $C_i = C_{\text{BS}}\left(S_t, K_i,t, T, r; \hat{\sigma}(K_i/S_t)\right)$, and the SPD (discounted $\QT$ density) is estimated numerically as follows: %
For any $1<i<N$,
\begin{align*}
	\begin{aligned}
		f_t(S_T = K_i) &= P(t,T) p_t^\QT(S_T=K_i)\\
		&= \frac{\partial^2 C_{\text{BS}}\left(S_t, K_i,t, T, r; \hat{\sigma}(K_i/S_t)\right)}{\partial K_i^2} \\
		& \approx \frac{C_{i+1}-2C_i + C_{i-1}}{\left(\Delta K\right)^2}.
	\end{aligned}
\end{align*}

We extend the fast and stable curve-fitting method of \cite{jackwerth2004option} for obtaining the expected FSPD from the expected future option prices by not only fitting implied volatilities using the current call price formula of the \cite{black1973pricing} model, but also the \emph{implied drifts} using the expected future call price formula of the \cite{black1973pricing} model given in equation \eqref{eq:BSexp}, and then numerically approximate the expected FSPD using Theorem \ref{thm:Breedenext} (equation \eqref{eq:AD-RTo}). This extension is outlined in the following steps. The expected returns of options are first estimated using historical data, and then expected option prices can be obtained for a finite number of strike prices given by the set $\left\{K_1, K_2, ..., K_N\right\}$. As in \cite{jackwerth2004option}, the current option prices are inverted into BS IV $\bar{\sigma}(K_i/S_t)$ and using a ``smoothness" parameter, a smooth BS IV curve  $\hat{\sigma}(K/S_t)$ is fitted by optimization. Then, using fixed optimized IV $\hat{\sigma}(K/S_t)$, we invert the expected future option price into the BS implied drift $\bar{\mu}(K/S_t)$ using the Black-Scholes expected future call price in equation \eqref{eq:BSexp}. Next, following the optimization method of \cite{jackwerth2004option}, a smooth ID (implied drift) curve $\hat{\mu}(K/S_t)$ is fitted by solving a similar optimization problem that balances smoothness against the fit of the IDs. For each strike price, the optimized IV $\hat{\sigma}(K_i/S_t)$ and ID $\hat{\mu}(K_i/S_t)$, are converted back into the expected call price using the Black-Scholes expected future price in equation \eqref{eq:BSexp}, that is, $EC_i= \hat{C}_{\text{BS}}\left(S_t, K_i, t, T, H, r; \hat{\sigma}(K_i/S_t), \hat{\mu}(K_i/S_t)\right)$. Finally, the expected FSPD is numerically approximated using Theorem \ref{thm:Breedenext} (equation \eqref{eq:AD-RTo}), as follows: For any $1<i<N$,
\begin{align*}
	\begin{aligned}
		g(H, S_T = K_i)&=\E_t^\p\left[P(H,T)\right]p_t^\RTo(S_T=K_i) \\
		&= \frac{\partial^2\hat{C}_{\text{BS}}\left(S_t, K_i, t, T, H, r; \hat{\sigma}(K_i/S_t), \hat{\mu}(K_i/S_t)\right)  }{\partial K_i^2}  \\
		&\approx  \frac{EC_{i+1}-2EC_i + EC_{i-1}}{\left(\Delta K\right)^2}.
	\end{aligned}
\end{align*}

\clearpage
\bibliographystyle{jf}
\putbib

\end{bibunit}

\end{appendices}

\end{document}